\let\Pr\undefined
\DeclareMathOperator{\Pr}{Pr}
\DeclareMathOperator*{\argmin}{arg\,min} %
\DeclareMathOperator*{\arginf}{arg\,inf}
\def\ddefloop#1{\ifx\ddefloop#1\else\ddef{#1}\expandafter\ddefloop\fi}
\def\ddef#1{\expandafter\def\csname bb#1\endcsname{\ensuremath{\mathbb{#1}}}}
\def\ddefloop#1{\ifx\ddefloop#1\else\ddef{#1}\expandafter\ddefloop\fi}
\def\ddef#1{\expandafter\def\csname b#1\endcsname{\ensuremath{\mathbf{#1}}}}
\def\ddef#1{\expandafter\def\csname c#1\endcsname{\ensuremath{\mathcal{#1}}}}
\def\ddef#1{\expandafter\def\csname h#1\endcsname{\ensuremath{\widehat{#1}}}}
\def\ddef#1{\expandafter\def\csname hc#1\endcsname{\ensuremath{\widehat{\mathcal{#1}}}}}
\def\ddef#1{\expandafter\def\csname t#1\endcsname{\ensuremath{\widetilde{#1}}}}
\def\ddef#1{\expandafter\def\csname tc#1\endcsname{\ensuremath{\widetilde{\mathcal{#1}}}}}
\newcommand{\kibitz}[2]{\ifnum\Comments=1{\color{#1}{#2}}\fi}
\newcommand{\E}{\mathbb{E}}
\newcommand{\R}{\mathbb{R}}
\renewcommand{\Pr}{\ensuremath{\mathrm{Pr}}}
\newcommand{\loss}{\ensuremath{{\cal L}}}
\newcommand{\indep}{\perp \!\!\! \perp}
\newcommand{\grad}{\nabla} %
\newcommand{\mcD}{{\mathcal D}}
\newcommand{\mcX}{{\mathcal X}}
\newcommand{\mcF}{{\mathcal F}}
\newcommand{\mcT}{{\mathcal T}}
\newcommand{\mcV}{{\mathcal V}}
\newcommand{\ba}{\begin{array}}
\newcommand{\ea}{\end{array}}
\newcommand{\bs}{\begin{align}\begin{split}\nonumber}
\newcommand{\bsnumber}{\begin{align}\begin{split}}
\newcommand{\es}{\end{split}\end{align}}
\newtheorem{assumption}{ASSUMPTION}
\newcommand{\CI}{\mathrm{CI}}
\newcommand{\mcC}{{\cal C}}
\newcommand{\pol}{\pi} %
\newcommand{\fdim}{r} %
\newcommand{\ldot}[2]{\langle #1, #2 \rangle}
\newcommand{\Var}{\ensuremath{\text{Var}}}
\def\half{\frac{1}{2}}
\def\balign#1\ealign{\begin{align}#1\end{align}}
\def\balignat#1\ealign{\begin{alignat}#1\end{alignat}}
\def\bitemize#1\eitemize{\begin{itemize}#1\end{itemize}}
\def\benumerate#1\eenumerate{\begin{enumerate}#1\end{enumerate}}
\newenvironment{talign}
 {\csname align\endcsname}
 {\endalign}
\def\balignt#1\ealignt{\begin{talign}#1\end{talign}}%
\newcommand{\Cov}{\ensuremath{\mathtt{Cov}}}
\newcommand{\mom}{\ensuremath{{\cal M}}}
\title[Double/Debiased Machine Learning for Dynamic Treatment Effects]{Double/Debiased Machine Learning for Dynamic Treatment Effects via $g$-Estimation}
\begin{document}

\maketitle

\begin{abstract}%
We consider the estimation of treatment effects in settings when multiple treatments are assigned over time and treatments can have a causal effect on future outcomes or the state of the treated unit. We propose an extension of the double/debiased machine learning framework to estimate the dynamic effects of treatments, which can be viewed as a Neyman orthogonal (locally robust) cross-fitted version of $g$-estimation in the dynamic treatment regime. Our method applies to a general class of non-linear dynamic treatment models known as Structural Nested Mean Models and allows the use of machine learning methods to control for potentially high dimensional state variables, subject to a mean square error guarantee, while still allowing parametric estimation and construction of confidence intervals for the structural parameters of interest. These structural parameters can be used for off-policy evaluation of any target dynamic policy at parametric rates, subject to semi-parametric restrictions on the data generating process. Our work is based on a recursive peeling process, typical in $g$-estimation, and formulates a strongly convex objective at each stage, which allows us to extend the $g$-estimation framework in multiple directions: i) to provide finite sample guarantees, ii) to estimate non-linear effect heterogeneity with respect to fixed unit characteristics, within arbitrary function spaces, enabling a dynamic analogue of the RLearner algorithm for heterogeneous effects, iii) to allow for high-dimensional sparse parameterizations of the target structural functions, enabling automated model selection via a recursive lasso algorithm. We also provide guarantees for data stemming from a single treated unit over a long horizon and under stationarity conditions.
\end{abstract}

\begin{keywords}%
  dynamic treatment regime, high-dimensional, treatment effects, double machine learning, $g$-estimation, heterogeneous effects, off-policy evaluation %
\end{keywords}

\etocdepthtag.toc{mtchapter}
\etocsettagdepth{mtchapter}{subsection}
\etocsettagdepth{mtappendix}{none}

\section{Introduction}
\label{sec:intro}

Improving outcomes often requires multiple treatments: patients may need a course of drugs to manage or cure a disease, soil may need multiple additives to improve fertility, companies may need multiple marketing efforts to close the sale. To make data-driven decisions, policy-makers need precise estimates of what will happen when a new policy is pursued.   
Because of its importance, this topic has been studied by many communities and under multiple regimes and formulations; examples include the field of reinforcement learning \citep{Sutton1998}, longitudinal analysis in biostatistics \citep{hernan2010causal}, the dynamic treatment regime in causal inference and adaptive clinical trials \citep{lei2012smart}.

This paper offers a new method for estimating and making inferences about the counterfactual effect of a new treatment policy.  Our method is designed to work with observational data, in an environment with multiple treatments (either discrete or continuous), and a high-dimensional state. 
Valid causal inference is necessary to correctly attribute changes in outcomes to the different treatments applied. 
But it is more challenging than in a static context, since there are multiple causal pathways from treatments to subsequent outcomes (e.g. directly, or by changing future states, or by affecting intermediate outcomes, or by influencing future treatments).

Our work bridges many distinct literatures. The first is the econometrics literature on semi-parametric inference \citep{Neyman:1979,robinson:88,newey1990semiparametric,Ai2003,Chernozhukov2016locally,chernozhukov2018double}.  We extend this literature, which has typically focused on static treatment regimes, to dynamic treatment regimes and to estimation of counterfactual dynamic policies. We propose an estimation algorithm that estimates the dynamic effects of interest, from observational data, at parametric root-$n$ rates.
We prove asymptotic normality of the estimates, even when the state is high-dimensional and machine learning algorithms are used to control for indirect effects of the state, as opposed to effects stemming from the treatments.
Our formulation can be viewed as a dynamic version of Robinson's classic partial linear model in semi-parametric inference \citep{robinson:88}, where the controls evolve over time and are affected by prior treatments. Our estimation algorithm builds upon and generalizes the recent work of \citep{chernozhukov2018double,semenova:17} to estimate not only contemporaneous effects, but also dynamic effects over time. In particular, we propose a sequential residualization approach, where the effects at every period are estimated in a Neyman orthogonal manner and then peeled-off from the outcome, so as to define a new ``calibrated'' outcome, which will be used to estimate the effect of the treatment in the previous period.

In doing this, we build on results in the semi-parametric inference literature in bio-statistics \citep{bickel1993efficient,van2003unified,tsiatis2007semiparametric} on estimating causal effects in structural nested models (see \citep{Vansteelandt2016,Robins2004,hernan2010causal,Chakraborty2013} for recent overviews). In particular, our identification and estimation strategy for dynamic treatment effects is a variant of the well-studied $g$-estimation framework for structural nested mean models (SNMMs) \citep{robins1986new,robins1992g,robins1994correcting,robins1997toward,robins2000marginal,Robins2004,lok2012impact,vansteelandt2014structural}. 
The works cited above have provided doubly robust estimators for this setting.
One practical challenge faced by these estimation approaches is that the nuisance functions required are hard to estimate when there are many treatments or treatments are continuous and they do not yield and obvious way to perform cross-fitting, which is crucial for enabling the use of machine learning approaches for nuisance estimation. Moreover, typically $g$-estimation has been used in practice with either logistic regresion or ordinary least squares as nuisance function estimators.

Our approach addresses these challenges by providing a Neyman orthogonal (aka locally robust) $g$-estimation algorithm for linear structural nested mean models \citep{robins1994correcting,Robins2004}, that allows for both continuous and discrete treatments in each time period.  
We propose a two stage algorithm, where in the first stage a sequence of regression and classification models are fitted and evaluated in a cross-fitting manner and in the second stage a simple linear system of equations or a simple square loss minimization problem is solved. This approach also allows for an easy sample-splitting/cross-fitting approach, which allows the use of arbitrary machine learning approaches in the first stage. 

While Neyman orthogonality is a weaker condition than double robustness, it is sufficient to achieve robustness to bias introduced by using machine learning to estimate the nuisance parameters.
This approach thus permits machine learning for the nuisance model estimation, which is practically important in  high-dimensional state space/control variable settings. Our work extends and enriches the $g$-estimation framework for dynamic treatment regimes with arbitrary target dynamic policies in multiple ways:
\begin{enumerate}
    \item Allows the use of aribtrary machine learning algorithms for the estimation of the nuisance components in $g$-estimation, while maintaining valid inference. Our variant of the set of moment conditions used in the $g$-estimation process, requires the estimation of a set of nuisances that corresponds to either a regression or a classification task and which can be estimated a priori in a manner that does not depend on the structural parameter of interest. Existing approaches to achieving double robustness (and hence also local robustness) in $g$-estimation require the estimation of nuisance parameters that implicitly depends on the structural parameter that is being estimated. Solutions to this obstacle has been established in particular sub-cases such as binary treatments and static target policies \citep{hernan2010causal}, or when ordinary least squares is used for one of the nuisance functions \citep{Wallace2019} and this obstacle has been one of the leading factors of not using the doubly robust correction in $g$-estimation (see e.g. \citep{vansteelandt2014structural}). 
    \item Provides finite sample and high probability estimation error rates for the structural parameters of interest that provides explicit dependence on parameters of the data generating process that capture the degree of long-range dependencies of the treatment policy. If for instance, the random exploration to the treatment assignments at period $t$ strongly correlates with all subsequent treatments in a non-vanishing manner, then the finite sample error maybe depend exponentially in the horizon. However, if these long-range dependencies are vanishing, then the error depends only polynomially in the horizon. Such a phase transition is typically hidden in asymptotic normality statements in the longitudinal data analysis literature, that typically keep the horizon fixed and hence this dependence is irrelevant to the asymptotic statement.
    \item Provides a new method for the estimation of heterogeneous structural parameters, with respect to fixed unit characteristics, in structural nested mean models, allowing for arbitrary function spaces to be used to capture parameter heterogeneity. This extends the RLearner algorithm \citep{nie2017quasi} to the dynamic treatment regime, which we dub the \emph{Dynamic RLearner}. We provide finite sample high probability mean squared error rates for the recovered heterogeneous dynamic effect functions, via tight measures of statistical complexity of function spaces (critical radius from the localized Rademacher complexity literature). In practice this enables the use of machine learning approaches, such as random forests, reproducing kernel Hilbert spaces and neural networks, even for the target structural parameter of interest, which is now infinite dimensional. 
    \item Provides a new method for the estimation of high-dimensional structural parameters in structural nested mean models. A typical problem with $g$-estimation is that it requires the practitioner to hard code a parametric form for the dynamic effect functions in structural nested mean models, known as the \emph{blip functions} (see e.g. \citep{Chakraborty2013}). Our work allows for practitioners to use high-dimensional parametric forms and estimate these functions under a sparsity condition on which features are relevant. This enables automated model selection for the \emph{blip functions} via a recursive lasso algorithm.
    \item Provides results for the case where data are stemming from a single treated unit, as opposed to short chains of data from multiple units (panel). Subject to stationarity conditions we provide asymptotic normality properties of a progressive nuisance estimation analogue of our main algorithm.
\end{enumerate}

Many of our aforementioned advancements stem from the fact that the Neyman orthogonal moment that we propose at each stage of $g$-estimation is the gradient of a strongly convex square loss. This allows for stronger finite sample learning rates, as well as the use of techniques from the recent framework of orthogonal statistical learning \citep{foster2019orthogonal,chernozhukov2018plugin}. We extend these techniques to also account for the recursive nature of the final stage estimation and the fact that each of these steps in the final stage cannot be viewed in isolation, since Neyman orthgoonality does not hold within structural parameters fitted at different steps of the final stage and we also have sample re-use. 

Our work is also closely related to the work on doubly robust estimation in offline policy evaluation in reinforcement learning \citep{nie2019learning,Thomas2016,petersen2014targeted,kallus2019efficiently,kallus2019double} from the causal machine learning community.
However, one crucial point of departure from all of these works is that we formulate minimal parametric assumptions that allow us to avoid non-parametric rates or high-variance estimation. Typical fully non-parametric approaches in off-policy evaluation in dynamic settings, requires the repeated estimation of inverse propensity weights at each step, which leads to a dependence on quantities that can be very ill-posed in practice, such as the product of the ratios of states under the observational and the target policy. Moreover, these approaches are typically restricted to discrete states and actions. Our goal is to capture settings where the treatment can also be continuous (investment level, price discount level, drug dosage), and the state contains many continuous variables and is potentially high dimensional. Inverse propensity weighting approaches, though more robust in terms of the assumptions that they make on the model, can be quite prohibitive in these settings even in moderately large data sets. 

Our work is also somewhat related to the online debiasing literature \citep{deshpande2018accurate,deshpande2019online,Zhang2020,zhan2021policy,Hadad2021,bibaut2021post,zhan2021off}, since we perform inference from adaptively collected data. However, our inferential target is different (dynamic effects vs same-period effects). Moreover, in our main setting, we assume multiple independent small chains of samples (each from a separate treated unit), rather than one long time series, while in our single time series setting, we make stationarity assumptions that avoid the main problems from adaptive data collection that are being handled by that literature. It is an interesting avenue for future work to combine techniques from this literature and lift some of our stationarity conditions in the estimation of dynamic treatment effects from a single time series.

\section{An Expository High-Dimensional Partially Linear Markovian Model}\label{sec:expository}

We begin by presenting the main algorithm of this work in the context of a high-dimensional partially linear Markovian data generating process. In Section~\ref{sec:ext} we show that our results generalize to more complex dynamic treatment models, known in the bio-statistics and causal inference literature as Structural Nested Mean Models (SNMMs), but for simplicity of exposition we focus on this simplified version, as it captures all the complexities needed to highlight our main contributions.

Consider a partially linear state space Markov decision process $\{X_t, T_t, Y_t\}_{t=1}^{m}$, where $X_t\in \R^p$ is the state at time $t$, $T_t\in \R^d$ is the action or treatment at time $t$ and $Y_t \in \R$ is an observed outcome of interest at time $t$. We assume that these variables are related via a linear Markovian process:
\begin{align}
    \forall t\in \{1,\ldots, m\}: & & X_t =~& A\cdot T_{t-1} + B\cdot X_{t-1} + \eta_t \nonumber\\
    & & T_t =~& p(T_{t-1}, X_{t}, \zeta_t) \label{eqn:maineq}\\
    & & Y_t =~& \theta_0' T_t + \mu' X_{t} + \epsilon_t \nonumber
\end{align}
where $\eta_t, \zeta_t$ and $\epsilon_t$ are exogenous mean-zero random shocks, independent of all contemporaneous and lagged treatments and states, that for simplicity we assume are each drawn i.i.d. across time. Moreover, for simplicity we assume $T_0=X_0=0$. In Section~\ref{sec:ext}, we will substantially drop the heavy assumptions on the exogenous shocks and merely assume that the observational process satisfies the quite permissive notion of \emph{conditional sequential exogeneity}, which essentially only requires that conditional on the past history, the treatment is randomized in an exogenous manner and there is no unobserved confounder at each stage of the observational decision process. 

The structural parameter $A$ is a $p\times d$ matrix that governs how past treatments affect next period's states. The structural parameter $B$ is a $p\times p$ matrix that governs how past states affect next period's states, i.e. how the system evolves in the absence of treatments. The function $p(T_{t-1}, X_t, \zeta_t)$ is the observational dynamic policy that determines the distribution of next period's treatments as a function of past periods treatments and states. Finally, the parameter vector $\theta_0\in \R^d$ is the contemporaneous effect of the treatments and $\mu\in R^p$ is the contemporaneous effect of the states on the outcome.

\begin{figure}[htpb]
\centering
    \includegraphics[scale=.6]{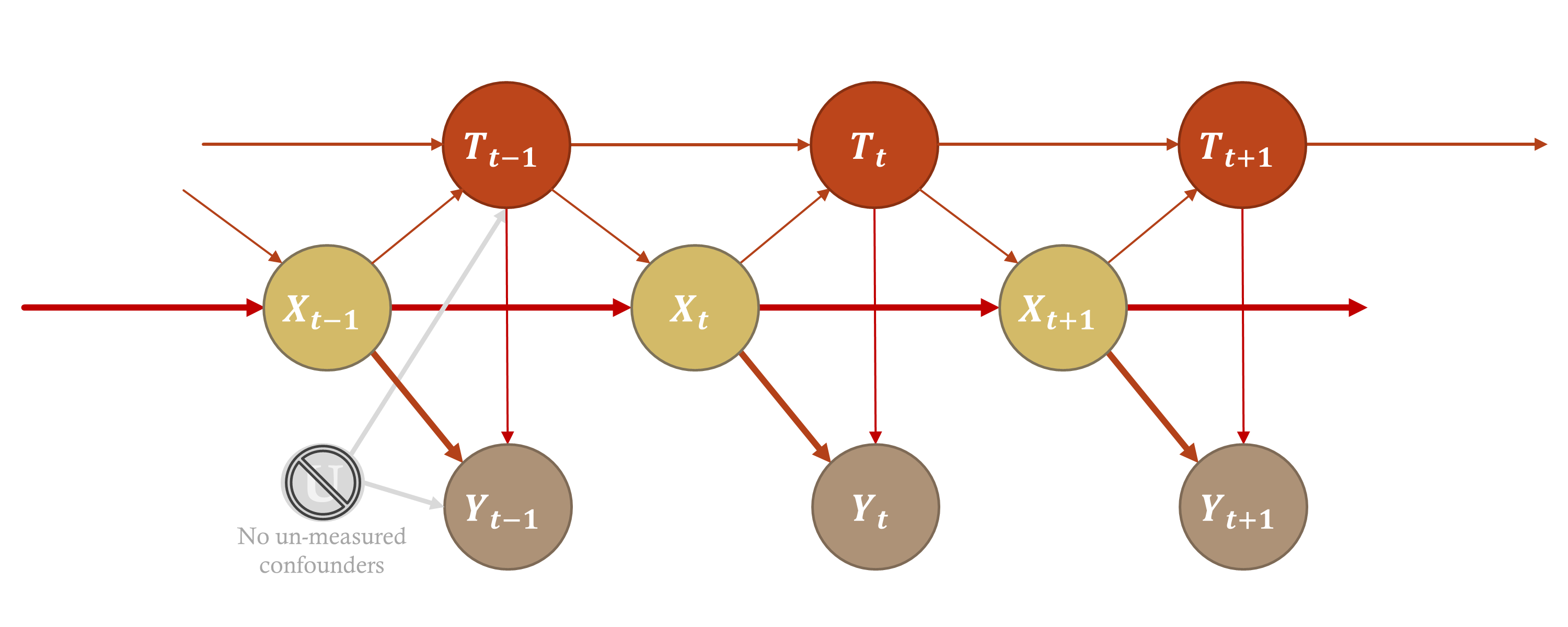}
    \caption{Causal graph associated with expository partially linear Markovian model.}
    \label{fig:cg}
\end{figure}

Our goal is to estimate the effect of a change in the treatment policy on the final outcome $Y_m$. Since throughout the analysis we will primarily care about the final outcome, we denote it for simplicity as:
\begin{align}
Y := Y_m.
\end{align}
A similar analysis can be derived for any other linear combination of the outcomes at all periods.
More concretely, suppose that were to make an intervention and set each of the treatments to some sequence of values: $\{\tau_1, \ldots, \tau_m\}$, then what would be the expected difference in the final outcome $Y_m$ as compared to some baseline policy? For simplicity and without loss of generality, we will consider the baseline policy to be setting all treatments to zero. We will denote this expected difference as: $V(\tau_1, \ldots, \tau_m)$. Equivalently, 
we can express the quantity we are interested in do-calculus: if we denote with:
\begin{align}
    R(\tau) :=~& \E\left[Y \mid do(T_1=\tau_1,\ldots, T_m=\tau_m)\right], &
    V(\tau) :=~& R(\tau) - R(0).
\end{align}
In Section~\ref{sec:snmm}, we will also analyze the estimation of the effect of adaptive counterfactual treatment policies, where the treatment at each step can be a function of the state.

\subsection{Identification via Dynamic Effects}\label{sec:identification}
Our first observation is that we can decompose the quantity $V(\tau)$ into the estimation of the \emph{dynamic treatment effects}: if we were to make an intervention and increase the treatment at period $t$ by $1$ unit, then what is the change $\psi_{t}$ in the outcome $Y_{m}$, for $t \in \{1,\ldots, m\}$; assuming that we set all subsequent treatments to zero (or equivalently to some constant value). This quantity is the effect in the final outcome, that does not go through the changes in the subsequent treatments, due to our observational Markovian treatment policy, but only the part of the effect that goes through changes in the state space $X_t$, that is not part of our decision process. This effect can also be expressed in terms of the constants in our Markov process as:
\begin{align}
   \psi_m =~& \theta_0 & \forall t \in \{1,\ldots, m - 1\}: \psi_{t} =~& \mu'\, B^{m-t-1}\, A
\end{align}
\begin{lemma}\label{lem:char} 
The counterfactual value function $V:\R^{d\cdot m}\rightarrow \R$, can be expressed in terms of the dynamic treatment effects as: $V(\tau_1, \ldots, \tau_m) = \sum_{t=1}^{m} \psi_{t}' \tau_{t}$
\end{lemma}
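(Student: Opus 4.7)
The plan is to unroll the linear state recursion under the do-intervention that sets $T_t = \tau_t$ for all $t$, compute $\E[Y_m \mid \mathrm{do}(\tau)]$ explicitly, and then read off the coefficients of each $\tau_t$ to match the claimed $\psi_t$. The noise terms $\eta_s,\epsilon_m$ will cancel when we form $V(\tau) = R(\tau) - R(0)$, so the proof reduces to a purely algebraic identification of coefficients.

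Concretely, I would proceed as follows. First, observe that under $\mathrm{do}(T_1=\tau_1,\ldots,T_m=\tau_m)$, the policy $p$ is irrelevant: the state recursion $X_t = A T_{t-1} + B X_{t-1} + \eta_t$ still holds with $T_{t-1}$ replaced by $\tau_{t-1}$. Using $X_0 = 0$, $T_0 = 0$, iterating this recursion gives
\[
X_m = \sum_{s=1}^{m-1} B^{m-1-s} A\, \tau_s + \sum_{s=1}^{m} B^{m-s} \eta_s.
\]
Then since $Y = Y_m = \theta_0' T_m + \mu' X_m + \epsilon_m$ and $T_m = \tau_m$ under the intervention, taking expectations (and using $\E[\eta_s]=\E[\epsilon_m]=0$) yields
\[
R(\tau) = \theta_0'\tau_m + \sum_{s=1}^{m-1} \mu' B^{m-1-s} A\, \tau_s.
\]
Setting $\tau = 0$ gives $R(0) = 0$, so $V(\tau) = R(\tau)$, and matching with the definitions $\psi_m = \theta_0$ and $\psi_t = (\mu' B^{m-t-1} A)'$ for $t<m$ gives $V(\tau) = \sum_{t=1}^{m} \psi_t' \tau_t$, as desired.

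There is no real obstacle here; the only care needed is in handling the do-calculus semantics correctly, namely verifying that fixing all treatments via $\mathrm{do}(\cdot)$ severs the link from $X_t$ and past treatments to $T_t$ but leaves the structural equations for $X_t$ and $Y_t$ (and hence the distributions of $\eta_t,\epsilon_t$) unchanged. Once that is stated cleanly, the derivation is just a single induction/unrolling of the linear recursion. I would make the induction explicit by proving $\E[X_t \mid \mathrm{do}(\tau)] = \sum_{s=1}^{t-1} B^{t-1-s} A\, \tau_s$ by induction on $t$, and then apply this for $t = m$ in the outcome equation.
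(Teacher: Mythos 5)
Your proof is correct and follows essentially the same route as the paper's: unroll the linear state recursion under the intervention to get $X_m^{(\tau)} = \sum_{s=1}^{m-1} B^{m-1-s}A\,\tau_s + (\text{mean-zero noise})$, substitute into the outcome equation, and take expectations so the exogenous shocks vanish. The only cosmetic difference is that you write the noise term out explicitly and observe $R(0)=0$, whereas the paper collects it into a generic mean-zero $\delta_m$ and forms the difference $R(\tau)-R(0)$ directly.
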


Thus to estimate the function $V$, it suffices to estimate the dynamic treatment effects: $\psi_1, \ldots, \psi_{m}$.
We first start by showing that the parameters $\psi_1, \ldots, \psi_m$ are identifiable from the observational data. 
Identification is not immediately obvious. If we write the final outcome as a linear function of all the treatments and the initial state by repeatedly expanding the structural equations, we will arrive at an equation of the form:
\begin{align}
Y = \sum_{t=0}^{m} \psi_t' T_{t} + \mu'B^{m} X_1 +  \sum_{j=2}^{m} \mu' B^{m-j} \eta_{j} + \epsilon_m    
\end{align}
However, the shocks $\{\eta_j\}_{j=t+1}^m$ are heavily correlated with the treatments $T_2,\ldots, T_m$, since the shocks at period $t$ affect the state at period $t+1$, which in turn affects the observed treatment at period $t+1$. 
As a result the natural moment condition that the sum of shocks is conditionally mean zero $\E[Y - \sum_{t=0}^{m} \psi_t' T_{t} - \mu'B^{m} X_1 \mid X_1, T_1, \ldots, T_m]=0$  is not valid.
In other words, if we view the problem as a simultaneous treatment problem, where the final outcome is the outcome, then we essentially have a problem of unmeasured confounding (implicitly because we ignored the confounding through intermediate states, sometimes referred as the ``\emph{treatment-confounder feedback}'').

However, note that if we apply this recursive expansion process up until any period $t=\{1,\ldots, m\}$, then we can write:
\begin{align}
Y = \psi_t' T_{t} + \sum_{j=t+1}^{m} \psi_j' T_{j} + \mu'B^{m-t} X_{t} + \sum_{j=t+1}^{m} \mu' B^{m-j} \eta_{j} + \epsilon_m
\end{align}
Since the random shocks $\{\eta_j\}_{j=t+1}^m$ and $\epsilon_m$ are independent of $T_t, X_t$ and mean zero, we thus have that the following conditional moment restriction is satisfied: 
\begin{align}
\E\left[Y - \psi_t' T_{t} - \sum_{j=t+1}^{m} \psi_j' T_{j} - \mu'B^{m-t} X_{t}\mid T_t, X_t\right]=0.
\end{align}
This leads to an identification strategy of the dynamic effects via a recursive peeling process, which as we show in the next section, leads to an estimation strategy that achieves parametric rates. 

\begin{theorem}\label{thm:id}
The dynamic treatment effects satisfy
the following set of conditional moment restrictions: $\forall t \in \{1, \ldots, m\}$
\begin{align}
    \E[\bar{Y}_{t} - \psi_t' T_{t} - \mu'B^{m-t}\,X_{t} \mid T_{t}, X_{t}] = 0
\end{align}
where: $\bar{Y}_{t} = Y - \sum_{j = t+1}^{m} \psi_j' T_{j}$. Moreover, if the covariance matrix $J:=\E[\Cov(T_t, T_t\mid X_t)]$ is invertible, then these conditional moment restrictions uniquely identify $\psi_t$.
\end{theorem}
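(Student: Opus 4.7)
The plan is to establish the conditional moment restriction by an explicit recursive expansion of the structural equations, and then pin down $\psi_t$ by a residualized regression argument that exploits invertibility of $J$.

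For the first part, I start from $Y = \theta_0' T_m + \mu' X_m + \epsilon_m$ and repeatedly substitute the state equation $X_j = A T_{j-1} + B X_{j-1} + \eta_j$, stopping the substitution as soon as $X_t$ appears. After $m-t$ such substitutions this yields
\begin{equation*}
Y \;=\; \sum_{j=t}^m \psi_j' T_j \;+\; \mu' B^{m-t} X_t \;+\; \sum_{j=t+1}^m \mu' B^{m-j} \eta_j \;+\; \epsilon_m,
\end{equation*}
where the coefficients collapse by the closed forms $\psi_m = \theta_0$ and $\psi_j = \mu' B^{m-j-1} A$ for $j<m$. Subtracting $\sum_{j=t+1}^m \psi_j' T_j$ from both sides and letting $W_t := \sum_{j=t+1}^m \mu' B^{m-j} \eta_j + \epsilon_m$, I obtain $\bar Y_t = \psi_t' T_t + \mu' B^{m-t} X_t + W_t$. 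Each $\eta_j$ for $j>t$ and $\epsilon_m$ are, by the i.i.d.\ exogeneity assumption, independent of $T_t,X_t$ (which are functions only of shocks up to and including time $t$) and mean zero, so $\E[W_t\mid T_t,X_t]=0$, which is the claimed moment restriction.

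For uniqueness, suppose another pair $(\tilde\psi_t, \tilde g)$, with $\tilde g$ a measurable function of the state, satisfies $\E[\bar Y_t - \tilde\psi_t' T_t - \tilde g(X_t)\mid T_t,X_t]=0$. Combined with the already-established moment equality this forces
\begin{equation*}
(\psi_t - \tilde\psi_t)' T_t \;=\; \tilde g(X_t) - \mu' B^{m-t} X_t \qquad \text{almost surely.}
\end{equation*}
Taking conditional expectations of both sides given $X_t$ alone removes the right-hand side's dependence on $T_t$ and gives $(\psi_t - \tilde\psi_t)'\E[T_t\mid X_t] = \tilde g(X_t) - \mu' B^{m-t} X_t$. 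Subtracting this from the previous display yields $(\psi_t - \tilde\psi_t)' (T_t - \E[T_t\mid X_t]) = 0$ almost surely. Squaring and taking expectations produces the quadratic form $(\psi_t - \tilde\psi_t)' J (\psi_t - \tilde\psi_t) = 0$, and since $J$ is PSD as a covariance matrix and assumed invertible, it is positive definite, forcing $\psi_t = \tilde\psi_t$.

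The only real subtlety is that the restriction is recursive: $\bar Y_t$ itself depends on $\psi_{t+1},\ldots,\psi_m$, so the identification claim should be read inductively from $t=m$ (where $\bar Y_m = Y$) downward, with each step reducing to the residualization argument above once the later $\psi_j$ have been identified. Past this bookkeeping, the proof is essentially mechanical: the recursive expansion isolates a clean Robinson-style partial linear equation at each period, and invertibility of the conditional-covariance matrix of the residualized treatment then identifies the period-$t$ effect exactly as in the static partial linear model.
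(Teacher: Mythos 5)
Your proof is correct and follows essentially the same route as the paper's: the same backward unrolling of the state equation to isolate a partial-linear equation at period $t$ (with subsequent shocks mean-zero given $T_t,X_t$), and the same residualization of $T_t$ against $X_t$ to exploit invertibility of $J$. The only cosmetic difference is that the paper concludes by exhibiting the explicit solution $\psi_t = J^{-1}\E[\tilde T_t(\bar Y_t - \E[\bar Y_t\mid X_t])]$ of the resulting linear system, whereas you take the difference of two candidate solutions and observe that the quadratic form $(\psi_t-\tilde\psi_t)'J(\psi_t-\tilde\psi_t)$ vanishes; the two are equivalent.
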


\subsection{Dynamic DML Estimation}

We now address the estimation problem. We assume that we are given access to $n$ i.i.d. samples from the Markovian process, i.e. we are given $n$ independent time-series, and we denote sample $i$, with $\{X_{t}^{i}, T_t^i, Y_t^i\}$. 
Our goal is to develop an estimator of the function $V$ or equivalently of the parameter vector $\psi=(\psi_{1},\ldots, \psi_m)$. We will consider the case of a high-dimensional state space, i.e. $p\gg n$, but low dimensional treatment space and a low dimensional number of periods $m$, i.e. $d, m\ll n$ is a constant independent of $n$. We want to estimate the parameters $\psi$ at $\sqrt{n}$-rates and in a way that our estimator is asymptotically normal, so that we can construct asymptotically valid confidence intervals around our dynamic treatment effects and our estimate of the function $V$. The latter is a non-trivial task due to the high-dimensionality of the state space. For instance, the latter would be statistically impossible if we were to take the direct route of estimating the whole Markov process (i.e. the high-dimensional quantities $A, B, \mu$): if these quantities have a number of non-zero coefficients that grows with $n$ at any polynomial rate, then known results on sparse linear regression, preclude their estimation at root-n rates (see e.g. \citep{Wainwright15}). However, we are not really interested in these low-level parameters of the dynamic process, but solely on the low dimensional parameter vector $\theta$.
We will treat this problem as a semi-parametric inference problem and develop a Neyman orthogonal estimator for the parameter vector \citep{Neyman:1979,robinson:88,Ai2003,Chernozhukov2016locally,chernozhukov2018double}.

In particular, we consider a sequential version of the double machine learning algorithm proposed in \citep{chernozhukov2018double}. In the case of a single time-period, i.e. $m=0$, then \citep{chernozhukov2018double}, recommends the following estimator for $\psi_m := \theta_0$: using half of your data, fit a model $\hat{q}_0(X_0)$ of $\E[Y_0 \mid X_0]$, i.e. that predicts the outcome $Y_0$ from the controls $X_0$ and a model $\hat{p}_0(X_0)$ for $\E[T_0 \mid X_0]$. Then estimate $\theta_0$ on the other half of the data, based on the estimating equation:
\begin{equation}\label{eqn:moment-static}
 \E[m(\theta; \hat{p}_0, \hat{q}_0)] := \E\left[(\tilde{Y}_0 - \theta_0'\, \tilde{T}_0)\, \tilde{T}_0\right] = 0
\end{equation}
where $\tilde{Y}_0=Y_0-\hat{q}_0(X_0)$ and $\tilde{T}_0 = T_0 - \hat{p}_0(X_0)$ are the residual outcome and treatment. 

We propose a sequential version of the double machine learning process that we call \emph{Dynamic DML} for \emph{dynamic double/debiased machine learning}. Intuitively our algorithm proceeds as follows: 
\begin{enumerate}
    \item We can construct an estimate $\hat{\psi}_m$ of $\psi_m$ in a robust (Neyman orthogonal) manner, by applying the approach of \citep{chernozhukov2018double} on the final step of the process, i.e. on time step $T_m$; this will estimate all the contemporaneous effects of the treatments,
    \item Subsequently we can remove the effect of the observed final step treatment from the observed final step outcome, i.e. by re-defining the random variable $\bar{Y}_{m-1}^{i} = Y^{i} - \psi_m'\, T_{m}^i$; doing this we have removed any effects on $Y_{m}^i$, caused by the final treatment $T_{m}^i$.
    \item We can then estimate the one-step dynamic effect $\psi_{m-1}$, by performing the residual-on-residual estimation approach with target outcome the ``calibrated'' outcome $\bar{Y}_{m-1}^{i}$, treatment $T_{m-1}^i$ and controls $X_{m-1}^i$. Theorem~\ref{thm:id} tells us that the required \emph{conditional exogeneity} moment required to apply the residualization process is valid for these random variables. We can continue in a similar manner, by removing the estimated effect of $T_{m-1}$ from $\bar{Y}_{m-1}^i$ and repeating the above process.
\end{enumerate}
We provide a formal statement of the Dynamic DML process in Algorithm~\ref{alg:sr}, which also describes more formally the sample splitting and cross-fitting approach that we follow in order to estimate the nuisance models $p$ and $q$ required for calculating the estimated residuals.

\begin{algorithm}[htpb]
   \caption{Dynamic DML: Neyman orthogonal (locally robust) $g$-estimation}
   \label{alg:sr}
\begin{algorithmic}
   \STATE {\bf Input:} A data set consisting of $n$ samples of $m$-length paths: $\left\{\left(X_1^i, T_1^i, \ldots, X_m^i, T_m^i, Y^i\right)\right\}_{i=1}^n$
   \STATE Randomly split the $n$ samples in $S, S'$
   \FOR{each $t \in \{1, \ldots, m\}$} 
       \STATE Regress $Y$ on $X_{t}$ using $S$ to learn estimate $\hat{q}_t$ of model: $q_{t}^*(x_{t}) := \E[Y \mid {X}_{t}={x}_t]$
       \STATE Calculate residuals for each sample $i\in S'$: $\tilde{Y}_{t}^i := Y^i - \hat{q}_{t}({X}_{t}^i)$
       \STATE Vice versa use $S'$ to learn model $\hat{q}_t$ and calculate residuals on $S$.
       \FOR{each $j \in \{t, \ldots, m\}$}
            \STATE Regress $T_{j}$ on ${X}_{t}$ using $S$ to learn estimate $\hat{p}_{j, t}$ of model:
            $p_{j, t}^*({x}_{t}) := \E[T_j \mid {X}_{t}={x}_t]$
            \STATE Calculate residuals for each sample $i\in S'$: $\tilde{T}_{j, t}^i := T_{j}^i - p_{j, t}({X}_{t}^i)$
            \STATE Vice versa use $S'$ to learn model $\hat{p}_{j, t}$ and calculate residuals on $S$.
       \ENDFOR
   \ENDFOR
   \STATE Using all the data $S\cup S'$
   \FOR{$t=m$ {\bfseries down to} 1}
        \STATE Regress $\bar{Y}_t^i := \tilde{Y}_{t}^i - \sum_{j=t+1}^m \hat{\psi}_{j}' \tilde{T}_{j, t}^i$ on $\tilde{T}_{t, t}^i$ with oridinary least squares, within an $\ell_2$-ball:
        \begin{align}\label{eqn:constraint-erm-sr}
            \hat{\psi}_t = \argmin_{\psi_t\in \R^d: \|\psi_t\|_2\leq M} \frac{1}{n}\sum_{i=1}^n \left(\bar{Y}_t^i - \hat{\psi}_t'\tilde{T}_{t,t}^i\right)^2
        \end{align}
        Alternatively, find solution $\hat{\psi}_t \in \R^d$ to the system of equations:
        \begin{equation}\label{eqn:z-estimator-alg}
            \frac{1}{n} \sum_{i=1}^n \left(\bar{Y}_t^i - \hat{\psi}_{t}'\tilde{T}_{t,t}^{i}\right)\, \tilde{T}_{t,t}^{i} = 0
        \end{equation}
   \ENDFOR
   \STATE {\bf Return:} Dynamic treatment effect estimates $\hat{\psi} = (\hat{\psi}_1, \ldots, \hat{\psi}_m)$
\end{algorithmic}
\end{algorithm}

\subsection{Estimation Rates and Normality}\label{sec:main}

We show that subject to the first stage models of the conditional expectations achieving a small (but relatively slow) estimation error, then the recovered parameters are root-n-consistent and asymptotically normal.

In the static case, the estimate that is based on the estimating Equation~\eqref{eqn:moment-static} is a special case of a broader class of moment based $Z$-estimators, where the true parameter $\theta$ is known to satisfy a vector of moment restrictions: $\E[m(W;\theta, \nu)]=0$, where $W$ is the vector of all random variables and $\nu\in \mcV$ is an unknown (potentially infinite dimensional) nuisance parameter, which we do not care about, but on which the moment conditions depends.A plug-in $Z$-estimate $\hat{\theta}$ is a solution to an empirical analogue of a vector of moment equations $\E_n[m(W;\theta, \hat{\nu})] = \frac{1}{n}\sum_{i=1}^n m(W^i;\theta, \hat{\nu})$, where $\hat{\nu}$ is some estimate of the nuisance parameter obtained via some separate statistical learning process and potentially on a separate sample.

A vector of moments satisfies Neyman orthogonality (aka local robustness) if:
\begin{align}\label{eqn:neyman-orthogonality}
\forall \nu \in \mcV: \frac{\partial}{\partial t}\E[m(W;\theta_0, \nu_0 + t\,(\nu-\nu_0))]\mid_{t=0} = 0
\end{align}
Neyman orthogonality implies that small perturbations to the nuisance functions around their true values only has a second order effect on the moment function and hence cannot impact a lot the target parameter estimate. Neyman orthogonality (accompanied with sample splitting techniques) allows one to estimate the target parameter at $\sqrt{n}$-rates and with an asymptotic normal distribution, subject only to much slower mean squared error rates for the nuisance components. Thus allowing machine learning approaches to be used for nuisance estimation. 

Our asymptotic normality proof relies on showing that one can re-interpret our \emph{Dynamic DML} algorithm as a $Z$-estimator based on a set of moments that satisfy the property of Neyman orthogonality. Our finite sample $\ell_2$-error result uses the fact that each step of the recursive process in the final stage is a minimization of a strongly convex loss and crucially invokes Neyman orthogonality in arguing that the true parameter is an approximate first order optimal of the population analogue of the strongly convex loss. The further difficulty in the finite sample result is that the loss at each stage is biased due to the errors propagating from previous stage estimates, with respect to which Neyman orthogonality is not satisfied and which are not constructed in a cross-fitting manner, which introduces sample re-use considerations. We provide a recursive upper bound on the $\ell_2$-errors and complete the theorem by induction.

To present the theorems, we introduce some notation. Let ${h}=\{{p}_{j,t}, {q}_t\}_{1\leq t\leq j\leq m}$ denote the vector of all nuisance functions and $h^*=\{{p}_{j,t}^*, {q}_t^*\}_{1\leq t\leq j\leq m}$ their corresponding true values. Moreover, let $\hat{\psi}=(\hat{\psi}_1,\ldots,\hat{\psi}_m)$ denote the vector of dynamic effect parameter estimates, and $\psi^*=(\psi_1^*, \ldots, \psi_m^*)$ their corresponding true values. We provide both finite sample $\ell_2$-error rates and asymptotic normality of our estimates (proofs in Appendix \ref{app:finite_sample} and \ref{app:normality}), subject to mean squared error guarantees for the nuisance functions. At the end of the section we discuss how the required guarantees for the nuisance functions can be easily satisfied using estimation algorithms such as the Lasso and under sparsity conditions.

\begin{theorem}[Finite Sample $\ell_2$-error]\label{thm:mse}
Suppose that all random variables and the ranges of all nuisance functions are absolutely bounded by a constant $H$ a.s.. Moreover, suppose that:
\begin{align}
\E[\Cov(T_t, T_t\mid X_t)]\succeq \lambda I.
\end{align}
For each split $O\in \{S,S'\}$, let $\hat{h}_O$, denote the nuisance estimate applied to samples in $O$. Let:
\begin{align}
    \rho(\hat{h}) :=~& \|\hat{p}_{t,t}-p_{t,t}^*\|_{2,2} \left(\|\hat{q}_t - q_t^*\|_{2} + 4\, M\,\sum_{j=t}^m \|\hat{p}_{j, t} - p_{j,t}^*\|_2 \right)\\
    \kappa :=~& \max_{t=1}^m \frac{2}{\lambda} \sum_{j=t+1}^{m} \|\E[\Cov(T_j, T_t\mid X_t)]\|_{op}\\
    \mu(\delta) :=~& \frac{2}{\lambda} \left(2 d H m \left(32 H M \sqrt{\frac{d\, m}{n}} + \sqrt{\frac{9\,d\,\log(2d/\delta)}{n}}\right) + \max_{O\in \{S,S'\}} \rho(\hat{h}_O)\right)
\end{align}
for some sufficiently large universal constant $c_0$. Then with probability $1-m\,\delta$, the estimate produced by Equation~\eqref{eqn:constraint-erm-sr} of Algorithm~\ref{alg:sr} satisfies:
\begin{equation}
    \forall t\in [m]: \|\hat{\psi}_t - \psi_t^*\|_2 \leq \mu(\delta) \frac{\kappa^{m-t+1} - 1}{\kappa - 1}
\end{equation}
\end{theorem}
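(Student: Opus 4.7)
The plan is to proceed by backward induction on $t$, from $t = m$ down to $t = 1$, and to establish the one-step recursive bound
\[
\|\hat\psi_t - \psi_t^*\|_2 \le \mu(\delta) + \kappa\, \max_{j>t} \|\hat\psi_j - \psi_j^*\|_2,
\]
which, since $\|\hat\psi_j - \psi_j^*\|_2$ is monotone non-increasing in $j$ by the induction hypothesis, iterates to the geometric sum $\mu(\delta)\cdot (\kappa^{m-t+1}-1)/(\kappa-1)$ stated in the theorem. The engine at each stage is that $\hat\psi_t$ minimizes a strongly convex quadratic in a bounded $\ell_2$-ball, so strong convexity (with modulus $\lambda$ coming from the assumption $\E[\Cov(T_t,T_t\mid X_t)]\succeq \lambda I$) converts a bound on the first-order residual into an $\ell_2$-bound on $\hat\psi_t - \psi_t^*$. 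I will first verify via Chernoff/matrix-concentration that the empirical Gram matrix $\frac1n \sum_i \tilde T_{t,t}^i \tilde T_{t,t}^{i\prime}$ inherits a lower curvature of order $\lambda$ with high probability (this explains the $\sqrt{dm/n}$ term inside $\mu(\delta)$ and lets us ignore the ball constraint once $M$ is large enough).

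Next, I would decompose the first-order optimality residual at $\psi_t^*$, using the population moment $M_t(\psi; h, \psi_{t+1:m}) := \E[(\bar Y_t - \psi'\tilde T_{t,t})\tilde T_{t,t}]$ constructed from arbitrary nuisances $h$ and arbitrary future effects, and its empirical analogue $\widehat M_t$. By Theorem~\ref{thm:id}, $M_t(\psi_t^*; h^*, \psi_{t+1:m}^*) = 0$. Then write
\[
\widehat M_t(\psi_t^*; \hat h, \hat\psi_{t+1:m}) = \underbrace{(\widehat M_t - M_t)(\psi_t^*; \hat h, \psi_{t+1:m}^*)}_{\text{(i) sampling noise}} + \underbrace{M_t(\psi_t^*;\hat h, \psi_{t+1:m}^*)}_{\text{(ii) nuisance bias}} + \underbrace{\widehat M_t(\psi_t^*;\hat h,\hat\psi_{t+1:m}) - \widehat M_t(\psi_t^*;\hat h, \psi_{t+1:m}^*)}_{\text{(iii) propagation}}.
\]
Term (i) is bounded by a standard concentration inequality on a $d$-dimensional sum of almost surely bounded random vectors, giving the $\sqrt{d\log(d/\delta)/n}$ contribution to $\mu(\delta)$; the $\sqrt{dm/n}$ term comes from a uniform bound that absorbs the future-effect dependence. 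Term (ii) is where Neyman orthogonality of the Robinson-style residual-on-residual moment is used: a first-order Taylor expansion in the nuisance directions $\hat p_{t,t} - p_{t,t}^*$, $\hat q_t - q_t^*$, $\{\hat p_{j,t} - p_{j,t}^*\}_{j>t}$ has its linear part vanish at the truth, leaving only cross-products; Cauchy--Schwarz combined with cross-fitting (conditioning on the fold used to fit $\hat h$ and bounding via the $L_2$ nuisance norms on the held-out fold) yields exactly the product-type bound $\rho(\hat h)$ with the explicit factors $4M$ and sum over $j\ge t$. Term (iii) is exactly $\sum_{j>t}\widehat{\En}[(\hat\psi_j-\psi_j^*)^\prime \tilde T_{j,t}\tilde T_{t,t}^\prime]$, whose expectation is $\sum_{j>t}\E[\Cov(T_j,T_t\mid X_t)](\hat\psi_j-\psi_j^*)$; the deterministic part yields the $\kappa$-factor after dividing by $\lambda$, and the stochastic deviation is absorbed into term (i)-type concentration.

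Combining the three terms with the strong-convexity conversion gives
\[
\|\hat\psi_t - \psi_t^*\|_2 \le \frac{2}{\lambda}\left(\text{(i)} + \rho(\hat h) + \sum_{j>t}\|\E[\Cov(T_j,T_t\mid X_t)]\|_{\mathrm{op}}\|\hat\psi_j-\psi_j^*\|_2\right),
\]
which is precisely the desired one-step recursion; a union bound over the $m$ stages contributes the $m\delta$ failure probability. The induction closes by plugging the inductive bound $\|\hat\psi_j-\psi_j^*\|_2\le \mu(\delta)(\kappa^{m-j+1}-1)/(\kappa-1)$ into the propagation term and using monotonicity in $j$ so that $\sum_{j>t}\|\Cov(T_j,T_t\mid X_t)\|_{\mathrm{op}}\cdot\|\hat\psi_j-\psi_j^*\|_2 \le (\lambda\kappa/2)\max_{j>t}\|\hat\psi_j-\psi_j^*\|_2$.

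The main obstacle I expect is handling the interaction between the cross-fitting of $\hat h$ and the \emph{lack} of cross-fitting of $\hat\psi_{t+1:m}$ in the final stage: because $\bar Y_t$ reuses samples through $\hat\psi_{j}$, standard DML cross-fitting arguments for the Neyman-orthogonal second-order bias do not apply directly, and one must condition on the tuple $(\hat h, \hat\psi_{t+1:m})$ obtained from prior stages, treating the propagation term as deterministic while the noise and nuisance-bias terms are conditionally fresh on the held-out fold. A secondary subtlety is carrying the $M$ constant correctly through the induction: since each $\hat\psi_j$ lives in the $M$-ball, so does the truth $\psi_j^*$ by assumption, hence $M$ appears in $\rho(\hat h)$ via the $4M\sum_j\|\hat p_{j,t}-p_{j,t}^*\|_2$ term rather than ballooning across stages; verifying uniform boundedness of $\bar Y_t$ over the induction is what keeps the per-stage $\mu(\delta)$ free of $m$-dependent blow-up outside of the explicit $\kappa$ and $\sqrt{dm/n}$ factors.
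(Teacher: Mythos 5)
Your overall architecture matches the paper's proof: population strong convexity of the square loss (modulus $\lambda$ from $\E[\Cov(T_t,T_t\mid X_t)]\succeq \lambda I$) plus the first-order optimality of $\hat\psi_t$, a three-way split of the residual into sampling noise, Neyman-orthogonal nuisance bias (second-order Taylor giving the product bound $\rho(\hat h)$), and propagation of future-stage errors, followed by the recursion $\|\hat\psi_t-\psi_t^*\|_2\le \mu(\delta)+\kappa\max_{j>t}\|\hat\psi_j-\psi_j^*\|_2$ and backward induction. Two execution-level points in your plan would need repair, however. First, your proposed resolution of the sample-reuse problem --- ``condition on the tuple $(\hat h,\hat\psi_{t+1:m})$ \ldots treating the propagation term as deterministic while the noise \ldots terms are conditionally fresh on the held-out fold'' --- does not work as stated: the final-stage estimates $\hat\psi_{t+1:m}$ are computed on \emph{all} of $S\cup S'$, so there is no fold on which the stage-$t$ empirical moment is fresh after conditioning on them. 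The correct fix (which you also gesture at when attributing the $\sqrt{dm/n}$ term to ``a uniform bound that absorbs the future-effect dependence'') is the one the paper uses: bound the empirical-minus-population gradient \emph{uniformly} over all $\psi$ in the $M$-ball via symmetrization, the vector contraction inequality across the $m-t$ linear indices $\psi_j'\tilde T_{j,t}$, and the Rademacher complexity of the $\ell_2$-ball; this is exactly where $32HM\sqrt{dm/n}$ and $\sqrt{9d\log(2d/\delta)/n}$ come from. You should commit to that route and drop the conditioning argument.

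Second, two smaller discrepancies: (a) the paper never needs concentration of the empirical Gram matrix $\frac1n\sum_i\tilde T_{t,t}^i(\tilde T_{t,t}^i)'$; it observes that the \emph{population} loss $\loss_D(\cdot,\hat h)$ is $\lambda$-strongly convex for \emph{any} nuisance $\hat h$, because $\E[(\nu'(T_t-\hat p(X_t)))^2]\ge\E[(\nu'(T_t-\E[T_t\mid X_t]))^2]=\nu'\E[\Cov(T_t,T_t\mid X_t)]\nu$, and then transfers between empirical and population objects only through the (already-controlled) gradient term. Your Gram-concentration step is not needed and would introduce an extra $\sqrt{d/n}$ loss in the curvature not accounted for in $\mu(\delta)$. (b) In your propagation term (iii), the population expectation of $\tilde T_{t,t}\tilde T_{j,t}'$ under estimated nuisances is not $\E[\Cov(T_t,T_j\mid X_t)]$ but $\E[\Cov(T_t,T_j\mid X_t)]+\E[(\hat p_{t,t}(X_t)-p_{t,t}^*(X_t))(\hat p_{j,t}(X_t)-p_{j,t}^*(X_t))']$; the extra cross term, multiplied by $\|\hat\psi_j-\psi_j^*\|_2\le 2M$, is what accounts for part of the $4M$ coefficient in $\rho(\hat h)$, so it must be tracked rather than dropped.
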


\begin{theorem}[Asymptotic Normality and Inference]\label{thm:normality}
Let $\mcD_n$ be a sequence of families of data generating processes obeying Equation~\eqref{eqn:maineq} and such that all random variables and the ranges of all nuisance functions are bounded by a constant a.s.. Moreover, for any $D\in \mcD_n$:
\begin{align}
\E[\Cov(T_t, T_t\mid X_t)]\succeq \lambda I.
\end{align}
and $\max_{O\in \{S,S'\}} \|\hat{h}_O-h^*\|_{2,2}=o_p(1)$ and
$\max_{O\in \{S, S'\}} \rho(\hat{h}_O) = o_p(n^{-1/2})$,
where $\hat{h}_O$ and $\rho(\cdot)$ as defined in Theorem~\ref{thm:mse} with $M=\max_{t=1}^m \|\psi_t^*\|_2$.

Let $J$ denote a $(m\,d)\times (m\,d)$ upper triangular matrix consisting of $d\times d$ blocks, such that the $(t,j)$ block is defined as $J_{t,j}:=1\{t\leq j\}\E[Cov(T_t, T_j\mid X_t)]$. Moreover, let $\Sigma$ be a $(m\,d)\times (m\,d)$ block matrix with $d\times d$ blocks, such that the $(t,j)$ block is defined as:
\begin{align}
\Sigma_{t,j} := \E[(\bar{Y}_{t-1} - \E[\bar{Y}_{t-1}\mid X_t])\, (T_t - \E[T_t\mid X_t])\, (T_j - \E[T_j\mid X_j])'\, (\bar{Y}_{j-1} - \E[\bar{Y}_{j-1}\mid X_j])]
\end{align}
with $\bar{Y}_t$ as defined in Theorem~\ref{thm:id} and let $V=J^{-1} \Sigma (J^{-1})'$. Then:
\begin{align}
    \sqrt{n} V^{-1/2} (\hat{\psi} - \psi^*) = -\frac{1}{\sqrt{n}}\sum_{i=1}^n V^{-1/2} J^{-1}\, m(Z^i; \psi^*, h^*) + o_p(1) \to_d N(0, I_{d\cdot m})
\end{align}
where $m(\cdot) = (m_1(\cdot);\ldots; m_m(\cdot))$ and $m_t(Z; \psi^*, h^*) = (\bar{Y}_{t-1} - \E[\bar{Y}_{t-1}\mid X_t])\, (T_t - \E[T_t\mid X_t])$.

Moreover, let $\hat{J}$ be an estimate of $J$ with the $(t,j)$ block being: $\hat{J}_{t,j} = 1\{j\geq t\} \frac{1}{n}\sum_i \tilde{T}_{t,t}^i (\tilde{T}_{j,t}^i)'$ and $\hat{\Sigma}$ be the estimate of the $\Sigma$, whose $(t,j)$ block is defined as:
\begin{align}
    \hat{\Sigma}_{t,j} = \frac{1}{n} \sum_{i=1}^n \left(\bar{Y}_t^{i} - \hat{\psi}_t'\tilde{T}_{t,t}^i\right) \tilde{T}_{t,t}^i\, (\tilde{T}_{j,j}^i)' \left(\bar{Y}_j^i - \hat{\psi}_j'\tilde{T}_{j,j}^i\right)
\end{align}
and let $\hat{V}=\hat{J}^{-1} \hat{\Sigma} (\hat{J}^{-1})'$.
Let $\Phi$ be the CDF of the standard normal distribution. Then for any vector $\nu \in \R^{d\,m}$, the confidence interval:
\begin{align}
    \CI := \left[\nu'\hat{\psi} \pm \Phi^{-1}(1-\alpha/2)\, \sqrt{\frac{\nu'\hat{V} \nu}{n}}\right]
\end{align}
is asymptotically uniformly valid:
    $\sup_{D\in \mcD_n} \left|\Pr_D(\nu'\psi^* \in \CI) - (1-\alpha)\right| \to 0$.
\end{theorem}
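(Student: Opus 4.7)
The plan is to rewrite Algorithm~\ref{alg:sr} as a block--upper--triangular $Z$-estimator and then follow the standard Neyman--orthogonal linearization route. Stack the moments
\[
 m_t(Z;\psi,h) = \Bigl(Y - q_t(X_t) - \sum_{j=t}^{m}\psi_j'(T_j-p_{j,t}(X_t))\Bigr)\bigl(T_t-p_{t,t}(X_t)\bigr)
\]
into a single $(m\,d)$-vector $m(Z;\psi,h)=(m_1;\ldots;m_m)$. The first-order condition \eqref{eqn:z-estimator-alg} at each stage of the backward sweep is exactly the $t$-th block of $\E_n[m(Z;\hat\psi,\hat h)]=0$, so the recursive algorithm is solving this triangular system by back--substitution. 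First I would compute the expected Jacobian: using $p_{j,t}^*(X_t)=\E[T_j\mid X_t]$ and the tower property, $\E[\nabla_{\psi_j}m_t(Z;\psi^*,h^*)]=-\E[\Cov(T_t,T_j\mid X_t)]$ for $j\ge t$ and zero otherwise, so $\E[\nabla_\psi m(Z;\psi^*,h^*)]=-J$; the assumed eigenvalue lower bound $\E[\Cov(T_t,T_t\mid X_t)]\succeq\lambda I$ makes $J$ invertible. Next I would verify Neyman orthogonality of $m$ with respect to each nuisance direction at $\psi=\psi^*$: in every direction the Gateaux derivative picks up a factor of either $\E[T_t-p_{t,t}^*(X_t)\mid X_t]=0$ or $\E[\bar Y_{t-1}-\mu'B^{m-t}X_t\mid X_t]=0$ (the latter by Theorem~\ref{thm:id}).

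The core of the argument is the expansion
\[
 0 = \E_n[m(Z;\hat\psi,\hat h)] = \E_n[m(Z;\psi^*,\hat h)] - \hat J\,(\hat\psi-\psi^*) + R_n,
\]
where the remainder $R_n$ is quadratic in $\hat\psi-\psi^*$ and, combined with the consistency $\hat\psi\to_p\psi^*$ implied by Theorem~\ref{thm:mse} under the stated nuisance rate, is $o_p(n^{-1/2})$. The delicate step is to show that $\E_n[m(Z;\psi^*,\hat h)]=\E_n[m(Z;\psi^*,h^*)]+o_p(n^{-1/2})$. Working fold by fold using the cross-fit construction so that $\hat h$ is independent of the samples in its fold, I would decompose the difference into (i) a conditional bias $\E[m(Z;\psi^*,\hat h)-m(Z;\psi^*,h^*)\mid \hat h]$, which by the Neyman--orthogonality computation above admits a second-order Taylor bound of order $\rho(\hat h)=o_p(n^{-1/2})$ by hypothesis, and (ii) a centered empirical-process term whose conditional variance is $O(\|\hat h-h^*\|_{2,2}^2/n)=o_p(1/n)$ and hence also $o_p(n^{-1/2})$. \textbf{This bookkeeping is where I expect the main difficulty to lie}: the Neyman--orthogonality argument must be carried out jointly across all $m_t$ at the common argument $\psi^*$ rather than stage by stage (the sequential peeling does not itself preserve orthogonality \emph{between} blocks), and it has to accommodate sample reuse across the two folds in the final stage.

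Combining these ingredients yields $\sqrt n(\hat\psi-\psi^*) = J^{-1}\cdot n^{-1/2}\sum_{i=1}^n m(Z^i;\psi^*,h^*)+o_p(1)$. The summands are i.i.d., uniformly bounded, and mean zero (again by Theorem~\ref{thm:id}), with covariance matrix $\Sigma$ from the statement; uniform boundedness of all random variables together with $J\succeq\lambda I$ let me apply a Lindeberg--Feller CLT uniformly over $\mcD_n$, yielding the asymptotic normality claim with variance $V=J^{-1}\Sigma (J^{-1})'$. Standard uniform laws of large numbers, using $\hat\psi\to_p\psi^*$ and $\|\hat h-h^*\|_{2,2}=o_p(1)$, then give $\hat J\to_p J$ and $\hat\Sigma\to_p\Sigma$, so $\hat V\to_p V$. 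The uniform coverage statement $\sup_{D\in\mcD_n}|\Pr_D(\nu'\psi^*\in \CI)-(1-\alpha)|\to 0$ follows by Slutsky and a uniform continuous mapping argument applied to the pivot $\sqrt n\,\nu'(\hat\psi-\psi^*)/\sqrt{\nu'\hat V\nu}$.
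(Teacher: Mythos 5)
Your proposal is correct and follows essentially the same route as the paper: recast the backward sweep as a cross-fitted, block-upper-triangular $Z$-estimator, verify Neyman orthogonality of the stacked moment and the second-order remainder bound in terms of $\rho(\hat h)$, use $\E[\Cov(T_t,T_t\mid X_t)]\succeq\lambda I$ to invert the triangular Jacobian $J$, and conclude via the standard DML linearization and CLT. The only difference is one of packaging — the paper verifies these conditions and then directly invokes Theorem~3.1 and Corollary~3.1 of \citep{chernozhukov2018double} for the linearization, CLT, and uniform confidence-interval validity, whereas you sketch re-deriving those master-theorem steps by hand; the ingredients and decomposition are identical.
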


\paragraph{Concrete Rates for Lasso Nuisance Estimates.} Suppose that the observational policy $p$ is also linear, i.e.
\begin{align}\label{eqn:linear-policy}
p(X, \zeta)=\Gamma X + \zeta
\end{align}
for some $d\times p$ matrix $\Gamma$. Then all the models $q_{t}$ and $p_{j, t}$ are high-dimensional linear functions of their input arguments, i.e. $q_{t}(x)= \phi_t'x$ and $p_{j, t}(x) = \Pi_{j, t} x$. If these linear functions satisfy a sparsity constraint then under standard regularity assumptions we can guarantee if we use the Lasso regression to estimate each of these functions that w.p. $1-\delta$, the estimation error of all nuisance models is $O\left(s\sqrt{\frac{\log(p/\delta)}{n}}\right)$, where $s$ is an upper bound on the number of non-zero coefficients. One sufficient regularity condition is that the expected co-variance matrix of every period's state has full rank, i.e. $\E[X_{t} X_{t}']\succeq \lambda I$ (we note that for an MSE rate we do not require the minimum eigenvalue condition, albeit then a computationally inefficient, support enumeration based estimation algorithm needs to be used and the Lasso results would not apply). Thus the requirements of the main theorems of this section would be satisfied as long as the sparsity grows as $s=o(n^{1/4})$, so that the error from the nuisance estimates is of second order importance. These sparsity conditions are for instance satisfied if only $s$ coordinates of the high-dimensional state, which coevolve separately from the remainder states, have any effect on the final outcome (i.e. are outcome-relevant), and similarly if only $s$ coordinates of the high-dimensional state, which coevolve separately from the remainder states, enter the observational policy.

\paragraph{The Constants $\lambda, \kappa$.} Given the potentially cryptic nature of some of the constants in our main theorems, we connect them here to some more low level quantities in the data generating process for the case of linear policies, i.e. under Equation~\eqref{eqn:linear-policy}. There are two main constants $\kappa$ and $\lambda$ that govern our finite sample error rates. Especially parameter $\kappa$ greatly impacts the estimation rate, as a function of the number of iterations $m$, since for $\kappa < 1$ the estimation error grows polynomially with the number of rounds, while for $\kappa > 1$ it grows exponentially. Thus understanding which regime occurs in a setting of interest is of great importance in understanding whether long time sequences can be tolerated. For simplicity of the calculations and exposition, we will further assume a scalar treatment (or binary treatment), i.e. $d=1$, in which case we will write: $p(x, \zeta)=\gamma'x + \zeta$. Moreover, we note that in this case the matrix $A$ in Equation~\eqref{eqn:maineq} is a column vector and we will denote it with $\alpha$.

First we note that the parameter $\lambda$ can be easily characterized under the linear policy as:
\begin{align}
    \E[\Cov(T_t, T_t\mid X_t)] = \E[\zeta_t^2]
\end{align}
Thus $\lambda=\E[\zeta^2]$ is the variance of the exploration/randomization of the observational policy deployed at each round $t$. Now let us analyze the constant $\kappa$. First we need to understand the conditional covariances $\Cov(T_j, T_t\mid X_t)$. Note that, if we denote with $\Delta = \alpha\, \gamma' + B$, then we can write by recursively expanding the linear Markovian expressions and using the fact that noise shocks are exogenous and jointly independent (see Appendix~\ref{app:kappa} for details):
\begin{align}
    \Cov(T_t, T_j\mid X_t) = \E\left[\zeta_t^2\right] \gamma'\Delta^{j-t}\alpha
\end{align}
and we conclude that:
\begin{align}
    \kappa = \max_{t=1}^m \frac{2}{\lambda} \sum_{j=t+1}^{m} \|\Cov(T_t, T_j\mid X_t)\|_{op} \leq 2 \|\gamma\|_2 \|\alpha\|_2 \|\Delta\|_{op} \frac{\|\Delta\|_{op}^{m-1} - 1}{\|\Delta\|_{op}-1}
\end{align}
Note that the matrix $\Delta:=\alpha\gamma' + B$ captures the relationship between treatment $X_t$ and $X_{t-1}$ under the observational policy. Observe that under standard linear system theory $\|\Delta\|_{op}\leq 1$ is required for the system of states to be stable and not rapidly growing, which we would expect in many settings. Thus in practicy, under some stationarity of the states we would expect $\|\Delta\|_{op}$ to be small and therefore $\kappa$ to also be small. On the other hand if the data generating process is far from stationary and has long-range dependencies (i.e. a small change in the initial state at period $1$ can have a tremendous impact on the state at period $M$), then the estimation algorithm will incur an exponential dependence on the time horizon $m$.

\subsection{Dependent Single Time-Series Samples}\label{sec:dependent}

Thus far we have assumed that we are working with $n$ independent time series, each of duration $m$. Though this is applicable to many settings where we have panel data with many units over time, in some other settings it is unreasonable to assume that we have many units over time, but rather that we have the same unit over a long period. In this case, we would want to do asymptotics as the number of periods grows. Our goal is still to estimate the dynamic treatment effects, i.e. the effect $\theta_{\kappa}$ of a treatment at period $t$ on an outcome in period $t+\kappa$, for $\kappa\in \{0, \ldots,m\}$) for some fixed look-ahead horizon $m$.

These quantities can allow us to evaluate the effect of counterfactual treatment policies on the discounted sum of the outcomes, i.e. $\sum_{t=0}^{\infty} \gamma^t Y_t$ for $\gamma<1$. We can write the counterfactual value function for any non-adaptive policy as:
$V(\tau) = \sum_{t=0}^{\infty} \gamma^t \sum_{q\leq t} \theta_{t-q} \tau_{q}$. Assuming outcomes are bounded, the effect $\sum_{q\leq t} \theta_{t-q} \tau_{q}$ on any period $t$ can be at most some constant. Thus taking $m$ to be roughly $\log_{\gamma}(n)$, suffices to achieve a good approximation of the effect function $V(\tau)$, since the rewards vanish after that many periods, i.e. if we let:
$V_{m}(\tau) = \sum_{t=0}^{m} \gamma^t \sum_{q\leq t} \theta_{t-q} \tau_{q}$, then observe that: $\|V_{m}(\tau) - V(\tau)\| \leq O(\gamma^{m})$. Thus after $m=\log_{1/\gamma}(n)$, we have that the approximation error is smaller than $1/\sqrt{n}$. Thus it suffices to learn the dynamic treatment effect parameters for a small number of steps. To account for this logarithmic growth, we will make the dependence on $m$ explicit in our theorems below.

For any $m$, we will estimate these parameters by splitting the time-series into sequential $B=n/m$ blocks of size $m$. Then we will treat each of these blocks roughly as independent observations and apply our dynamic DML algorithm to estimate parameters $\hat{\psi}_t$ and observe that under the markovian stationary nature of the DGP, we have that $\hat{\psi}_t = \hat{\theta}_{m-t}$. We denote the resulting estimate as $\hat{\theta}$. The main challenge in our proofs is dealing with the fact that these blocks are not independent but serially correlated. However, we can still apply techniques, such as martingale Bernstein concentration inequalities and martingale Central Limit Theorems to achieve the desired estimation rates.

The other important change that we need to make is in the way that we fit our nuisance estimates. To avoid using future samples to train models that will be used in prior samples (which would ruin the martingale structure), we instead propose a progressive nuisance estimation fitting approach, where at every period, all prior blocks are used to train the nuisance models and then they are evaluated on the next block. We present a formal description of this progressive splitting process in Algorithm~\ref{alg:block-sr} and prove asymptotic normality of the resulting estimate.

\begin{theorem}[Asymptotic Normality with Single Time Series]\label{thm:normality-dependent}
Let $\mcD_n$ be a sequence of families of data generating processes obeying Equation~\eqref{eqn:maineq}, with the further restriction that $p(x, t, \zeta) = f(x, t) + \zeta$, and such that all random variables and the ranges of all nuisance functions are bounded by a constant a.s. and that $M:=\max_{t=1}^m \|\psi_t^*\|_2$ is bounded by a constant. Moreover, for any $D\in \mcD_n$:
\begin{align}
\E[\Cov(T_t, T_t\mid X_t)] = \E[\zeta_t\, \zeta_t'] \succeq \lambda I.
\end{align}
Moreover, let $\mcF_b$ denote the filtration up until (not including) block $b$ and let:
\begin{align}
    \epsilon_B(\hat{h}) :=~&  \frac{2}{B}\sum_{b=B/2}^B \E[\|\hat{h}_b(Z_b) - h^*(Z_b)\|_{2}^2\mid \mcF_b] \\
    \kappa :=~& \max_{t=1}^m \frac{1}{\lambda} \sum_{j=t+1}^m \|\E[\Cov(T_{b,t}, T_{b,j}\mid X_{b,t})]\|_{op}
\end{align}
where $Z$ denotes all random variables within a block $b$ and $\hat{h}_b$ are the estimates of the nuisance models used within that block. Assume that $m$ and $\hat{h}$ satisfy for some constant $\epsilon>0$:
\begin{align}
m^3 \sqrt{\log(d\,m)}\max_{t\in [m]}\frac{(\kappa + \epsilon)^{m-t+1}-1}{\kappa + \epsilon - 1} \E[\epsilon_B(\hat{h})]=~& o(1), &
\sqrt{B} m^2 \E[\epsilon_B(\hat{h})] =~& o(1)\\
\frac{m^3 \log(d\,m)}{\sqrt{B}} \max_{t\in [m]}\frac{(\kappa + \epsilon)^{m-t+1}-1}{\kappa + \epsilon - 1} =~& o(1) &
\end{align}
Let $J$ denote a $(m\,d)\times (m\,d)$ upper triangular matrix consisting of $d\times d$ blocks, such that the $(t,j)$ block is defined as $J_{t,j}:=1\{t\leq j\}\E[Cov(T_t, T_j\mid X_t)]$. Moreover, let $\Sigma$ be a $(m\,d)\times (m\,d)$ block diagonal matrix with $d\times d$ blocks, such that the $(t,t)$ diagonal block is defined as:
\begin{align}
\Sigma_{t,t} := \E[(\bar{Y}_{t-1} - \E[\bar{Y}_{t-1}\mid X_t])^2\, \Cov(T_t, T_t\mid X_t)]
\end{align}
Then the estimate $\hat{\psi}$ defined in Algorithm~\ref{alg:block-sr} satisfies:
\begin{align}
    \sqrt{B/2} \Sigma^{-1/2} J (\hat{\psi} - \psi^*) \to_d N(0, I_{d\cdot m})
\end{align}
\end{theorem}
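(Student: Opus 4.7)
My plan is to follow the same $Z$-estimator Taylor-expansion strategy as in Theorem~\ref{thm:normality} but to replace its i.i.d.\ CLT and concentration tools with martingale analogues, exploiting the fact that the progressive fitting in Algorithm~\ref{alg:block-sr} makes the nuisance $\hat h_b$ used on block $b$ measurable with respect to the block filtration $\mcF_b$. Writing the Neyman orthogonal moment $m_t(Z_b;\psi^*,h^*)=(\bar{Y}_{b,t-1}-\E[\bar{Y}_{b,t-1}\mid X_{b,t}])(T_{b,t}-\E[T_{b,t}\mid X_{b,t}])$, and starting from the first-order condition of~\eqref{eqn:constraint-erm-sr} restricted to the second half $b\in\{B/2,\ldots,B\}$, I would Taylor-expand around $\psi^*$ to get
\[
 \hat J\,(\hat\psi-\psi^*) \;=\; -\tfrac{2}{B}\sum_{b=B/2}^B m(Z_b;\psi^*,\hat h_b)\;+\;\Delta_{\mathrm{rec}},
\]
where $\Delta_{\mathrm{rec}}$ collects the extra linear-in-$\psi$ terms coming from plugging $\hat\psi_{t+1},\ldots,\hat\psi_m$ into $\bar{Y}_{b,t-1}$ in the step that estimates $\psi_t$.

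I would then decompose the right-hand sum into an oracle part and a nuisance-perturbation part. Under sequential exogeneity plus the additive-noise policy $p(x,t,\zeta)=f(x,t)+\zeta$, the oracle sum $\tfrac{2}{B}\sum_b m(Z_b;\psi^*,h^*)$ is a bounded, stationary, vector-valued martingale difference in $b$, and a multivariate martingale CLT (Hall--Heyde) delivers $\sqrt{B/2}$-Gaussian convergence; the claimed block-diagonal form of $\Sigma$ follows from a careful cross-moment computation using the structural equations, which makes cross-$t$ covariances of $m_t$ and $m_j$ within a single block vanish. For the nuisance-perturbation part $\tfrac{2}{B}\sum_b(m(Z_b;\psi^*,\hat h_b)-m(Z_b;\psi^*,h^*))$, I would split it into a directional-derivative term and a quadratic remainder: since $\hat h_b$ is $\mcF_b$-measurable, the derivative term is itself a martingale difference whose conditional mean vanishes by Neyman orthogonality and whose conditional second moment is controlled by $\E[\|\hat h_b-h^*\|_2^2\mid\mcF_b]$, so a Freedman--Bernstein inequality gives $o_p(B^{-1/2})$ under $\sqrt{B}\,m^2\E[\epsilon_B(\hat h)]=o(1)$, while the quadratic remainder is deterministically of order $\epsilon_B(\hat h)$ and hence also $o_p(B^{-1/2})$.

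For $\Delta_{\mathrm{rec}}$ I would establish a dependent-sample analogue of Theorem~\ref{thm:mse}: a matrix martingale LLN first certifies that the empirical Jacobian on the second half of blocks is lower-bounded by $\lambda/2$ in operator norm with high probability, and then a scalar martingale Bernstein inequality (variance proxy of order $m/B$) propagates errors step-by-step in $t$ to yield, for some absolute constant $C$,
\[
 \|\hat\psi_t-\psi_t^*\|_2 \;\leq\; C\,\frac{(\kappa+\epsilon)^{m-t+1}-1}{\kappa+\epsilon-1}\Bigl(m\sqrt{\tfrac{\log(dm)}{B}}+\sqrt{\E[\epsilon_B(\hat h)]}\,\Bigr)
\]
uniformly in $t$ with high probability. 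Plugging this into $\Delta_{\mathrm{rec}}$, whose operator-norm entries are bounded by $\max_t\|\hat\psi_t-\psi_t^*\|_2$ times a bounded cross-moment of residuals, the two remaining rate hypotheses of the theorem are precisely calibrated to force $\Delta_{\mathrm{rec}}=o_p(B^{-1/2})$. Combined with $\hat J\to_p J$ (once again via the matrix martingale LLN) and Slutsky, this yields $\sqrt{B/2}\,\Sigma^{-1/2}J(\hat\psi-\psi^*)\to_d N(0,I_{dm})$.

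The main obstacle is controlling $\Delta_{\mathrm{rec}}$. Because $\bar{Y}_{b,t-1}$ depends on the plug-in $\hat\psi_{t},\ldots,\hat\psi_m$ that are fit from the very same blocks, it is not a martingale difference against the natural block filtration, and the propagation constant $\kappa$ can amplify errors exponentially in $m$. Resolving this requires both (i) a recursive finite-sample rate on dependent blocks, proved via martingale Bernstein rather than Hoeffding but otherwise matching the shape of Theorem~\ref{thm:mse}, and (ii) the explicit rate budget hard-wired into the theorem's hypotheses that lets the oracle Gaussian term dominate despite the $\kappa^m$ amplification. The reason Algorithm~\ref{alg:block-sr} reserves the prefix $b<B/2$ to stabilize the nuisance fits and uses only $b\geq B/2$ in the final regression is exactly to keep $\E[\epsilon_B(\hat h)]$ small enough under progressive fitting for these conditions to be realizable.
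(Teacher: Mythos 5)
Your proposal follows essentially the same route as the paper's proof: a Taylor expansion of the linear moment around $(\psi^*,h^*)$ exploiting Neyman orthogonality, a martingale CLT for the oracle term $-\sqrt{B/2}\,\mom_B(\psi^*,h^*)$ (whose block-diagonal variance the paper obtains by writing $m_{b,t}(Z_b;\psi^*,h^*)$ purely in terms of the independent exogenous shocks $\zeta_{b,t},\eta_{b,j},\epsilon_{b,m}$), Hoeffding--Azuma and Freedman--Bernstein bounds on the two stochastic-equicontinuity pieces enabled by the $\mcF_b$-measurability of the progressively fitted $\hat h_b$, and an inductive finite-sample $\ell_2$-rate with the $\kappa$-amplification that is then absorbed by the stated rate hypotheses before applying Slutsky. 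The approach and decomposition match the paper's; no substantive gap.
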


\begin{algorithm}[htpb]
   \caption{Block Dynamic DML with Progressive Nuisance Estimation}
   \label{alg:block-sr}
\begin{algorithmic}
   \STATE {\bf Input:} A single time-series consisting of $n=B\, m$ samples: $\left\{\left(X_1^i, T_1^i, \ldots, X_m^i, T_m^i, Y^i\right)\right\}_{i=1}^n$
   \STATE Partition the data into $B=n/m$ blocks of $m$ periods
   \STATE Denote with $X_{t}^b, T_{t}^b, Y_{t}^b$ the state, action, outcome pairs in the $t$-th period of block $b$.
   \FOR{each block $b \in \{B/2, \ldots, B\}$}
       \FOR{each $t \in \{1, \ldots, m\}$} 
           \STATE Regress $Y_m^{b'}$ on $X_{t}^{b'}$ using blocks $b'< b$ to learn estimate $\hat{q}_{b,t}$ of: $q_{t}^*(x_{t}) := \E[Y_m^b \mid {X}_{t}^b={x}_t]$
           Calculate residuals: $\tilde{Y}_{t}^b := Y_m^b - \hat{q}_{b,t}({X}_{t}^b)$
           \FOR{each $j \in \{t, \ldots, m\}$}
                \STATE Regress $T_{j}^{b'}$ on ${X}_{t}^{b'}$ using all blocks $b'< b$ to learn estimate $\hat{p}_{b, j, t}$ of conditional expectation:
                $p_{b, j, t}^*({x}_{t}) := \E[T_j^b \mid {X}_{t}^b={x}_t]$
                \STATE Calculate residuals: $\tilde{T}_{j, t}^b := T_{j}^b - p_{b, j, t}({X}_{t}^b)$
           \ENDFOR
       \ENDFOR
   \ENDFOR
   \STATE Using all the blocks $b\in \{B/2,\ldots, B\}$
   \FOR{$t=m$ {\bfseries down to} 1}
        \STATE Let $\bar{Y}_t^b := \tilde{Y}_{t}^b - \sum_{j=t+1}^m \hat{\psi}_{j}' \tilde{T}_{j, t}^b$
        \STATE Find solution $\hat{\psi}_t \in \R^d$ to the system of equations:
        \begin{equation}\label{eqn:z-estimator-alg-block}
            \frac{2}{B} \sum_{b=B/2}^B \left(\bar{Y}_t^b - \hat{\psi}_{t}'\tilde{T}_{t,t}^{b}\right)\, \tilde{T}_{t,t}^{b} = 0
        \end{equation}
   \ENDFOR
   \STATE {\bf Return:} Dynamic treatment effect estimates $\hat{\psi} = (\hat{\psi}_1, \ldots, \hat{\psi}_m)$
\end{algorithmic}
\end{algorithm}

\paragraph{Finite sample $\ell_2$-error.} We note that the proof of the latter theorem also provides a finite sample $\ell_2$-error bound. However we omit a separate such theorem for succinctness.

\paragraph{Conditions on $m$ and dependence on $\kappa$.} We provide some exposition on the conditions on $m$ as a function of the eigenvalues of the linear dynamical system in the case of a single treatment and a linear policy, e.g. $f(x,t)=\gamma'x$, as described in the corresponding remark at the end of Section~\ref{sec:main}. Note that as long as the observational state transition matrix $\Delta=\alpha\gamma'+B$ has a small maximum eigenvalue $\ll 1$ and that $\|\alpha\|_2\|\gamma\|_2\ll 1$, then $\kappa \ll 1$, irrespective of the value of $m$. In other words, in this setting the correlations among the randomizations in the treatments are vanishing in an exponential manner and hence even if we estimate parameters in a long-chain, the estimation errors do not propagate in a manner that explodes exponentially with the length of the path. In that case the conditions on $m$ in Theorem~\ref{thm:normality-dependent} simplify to:
\begin{align}
m^3 \sqrt{\log(d\,m)}\max_{t\in [m]}\E[\epsilon_B(\hat{h})]=~& o(1), &
\sqrt{B} m^2 \E[\epsilon_B(\hat{h})] =~& o(1) &
\frac{m^3 \log(d\,m)}{\sqrt{B}} =~& o(1) &
\end{align}
Note that the number of nuisance functions also grows quadratically with $m$. Thus we should expect that $\epsilon_B(\hat{h})$ to also grow as $m^2$ times the convergence rate of each of the nuisance components. If each nuisance component estimate (denoted here  $\hat{f}$) satisfies that $\frac{2}{B} \sum_{b=B/2}^{B} \E[(\hat{f}(Z_b) - f^*(Z_b))^2\mid \mcF_b]$ convergences at a rate of $\frac{1}{B^{1/2+\epsilon}}$, then a sufficient condition for all the latter properties is:
\begin{align}
    \frac{m^4 \log(d\,m)}{B^{\epsilon}} = \frac{m^{4+\epsilon} \log(d\,m)}{n^{\epsilon}} = o(1)
\end{align}
This allows for $m$ to grow polynomially with $n$, i.e. it suffices that $m = O(n^{\epsilon/(4+\epsilon) - \delta})$, for any $\delta>0$. Thus we can achieve very small approximation error if we are interested in a discounted reward, with discount $\gamma$, as in the beginning of a section, where with simply $m=\log_{1/\gamma}(n)$ we could achieve an error of $\gamma$. Thus we have that as long as the observational policy is such that the linear system is stable, estimation of long-term discounted rewards is feasible via Algorithm~\ref{alg:block-sr}.

\section{Generalization to Structural Nested Mean Models (SNMMs)}\label{sec:ext}\label{sec:snmm}

We present a more formal treatment of the extension of our main algorithm to $g$-estimation of structural nested models in biostatistics \citep{robins1986new}. Consider an arbitrary time-series process $\{X_t, T_t\}_{t=1}^{m}$, with $X_t\in \mcX_t$ and $T_t\in \mcT_t$. Let $Y$ denote some final outcome of interest. For any time $t$, let $\bar{X}_t=\{X_1,\ldots, X_t\}$ and $\bar{T}_t=\{T_1,\ldots, T_t\}$, denote the sequence of the variables up until time $t$ and similarly, let $\underline{X}_t = \{X_t, \ldots, X_m\}$ and $\underline{T}_t=\{T_t,\ldots, T_m\}$. We will also denote with $\bar{x}_t, \bar{\tau}_t, \underline{x}_t, \underline{\tau}_t$, corresponding realizations of the latter random sequences. Let $\pol=(\pol_1, \ldots, \pol_m)$ denote any dynamic policy, such that for each $t$, $\pol_t$ maps a history $\bar{x}_t, \bar{\tau}_{t-1}$ into a next period action $\tau_{t}$. For any such dynamic policy, let $Y^{(\pol)}$ denote the counterfactual outcome under policy $\pol$. For any static policy $\tau\in \times_{t=1}^m \mcT_t$, we will overload notation and let $Y^{(\tau)}$ denote the counterfactual outcome under this static treatment policy. Moreover, for any two policies (static or dynamic) we will be denoting with $(\bar{\pol}'_t, \underline{\pol}_{t+1})$, the policy that follows $\pol'$ up until time $t$ and then continues with policy $\pol$. We let $0\in \mcT_t$ denote a baseline policy value, which could be appropriately instantiated based on the context. 

We assume that the data generating process satisfies the following sequential conditional randomization condition:
\begin{assumption}[\emph{Sequential Conditional Exogeneity}]\label{ass:cond-ex} The data generating process satisfies the following conditional independence conditions:
\begin{equation}
    \forall t\in [m]: \{Y^{(\tau)}, \tau\in \times_{t=1}^m \mcT_t\} \indep T_t \mid \bar{T}_{t-1}, \bar{X}_t
\end{equation}
\end{assumption}
Identification of mean counterfactual outcomes $\E[Y^{(\pol)}]$ for a target policy of interest $\pol$ can be expressed in terms of the following conditional expectation functions:
\begin{align}
    \gamma_{t}(\bar{x}_t, \bar{\tau}_t) = \E\left[Y^{(\bar{\tau}_{t}, \underline{\pol}_{t+1})} - Y^{(\bar{\tau}_{t-1}, 0, \underline{\pol}_{t+1})} \mid \bar{T}_{t}=\bar{\tau}_{t}, \bar{X}_{t}=\bar{x}_t\right] 
\end{align}
which corresponds to the mean change in outcome if we go to all units which received treatment $\bar{\tau}_t$ up until time $t$ and had observed state history $\bar{x}_t$ and we remove their last treatment, while we subsequently always continue with the target policy $\pol$. These functions are known as the \emph{blip} functions \citep{Chakraborty2013,Robins2004} and can be shown to be non-parametrically identifiable, assuming sequential conditional exogeneity and a sequential analogue of the positivity (aka overlap) assumption \citep{Robins2004}. 

Theorem~3.1 of \citep{Robins2004} combines a telescoping sum argument and the sequential randomization condition to express counterfactual outcomes in terms of blip functions. We restate this result here, adapting it to our notation and providing a proof for completeness:
\begin{lemma}[Identification via Blip Functions]\label{lem:cntf-char-rho} For any dynamic policy $\pol$ and under the sequential conditional exogeneity assumption, the following identity holds about the counterfactual outcomes:
\begin{align}
    \E\left[Y^{(\bar{\tau}_{t-1}, \underline{\pol}_t)}\mid \bar{X}_t, \bar{T}_t=\bar{\tau}_t\right] = \textstyle{\E\left[Y + \sum_{j=t}^{m} \rho_j(\bar{X}_j, \bar{T}_j) \mid \bar{X}_t, \bar{T}_t=\bar{\tau}_t\right]}\label{eqn:cntf-rho}
\end{align}
where:
\begin{align}
\rho_j(\bar{X}_j, \bar{T}_j):=\gamma_j(\bar{X}_j, (\bar{T}_{j-1}, \pol(\bar{X}_j, \bar{T}_{j-1}))) - \gamma_j(\bar{X}_j, \bar{T}_j).
\end{align}
Hence also:
\begin{align}\label{eqn:off-policy}
\E\left[Y^{(\pol)}\right]=\E\left[Y + \sum_{t=1}^{m} \rho_t(\bar{X}_t, \bar{T}_t)\right]
\end{align}
\end{lemma}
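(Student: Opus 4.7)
The plan is to prove the first identity by a telescoping argument on the blip functions, and then deduce the marginal identity \eqref{eqn:off-policy} as a one-line corollary by setting $t=1$ and taking a further expectation.

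First, I would rewrite each summand $\rho_j(\bar{X}_j,\bar{T}_j)$ purely in terms of conditional means of potential outcomes. Expanding the two blip terms, and observing that under the static policy $(\bar{T}_{j-1},\pol_j(\bar{X}_j,\bar{T}_{j-1}),\underline{\pol}_{j+1})$ the action at time $j$ coincides with $\pol_j(\bar{X}_j,\bar{T}_{j-1})$, so this policy is indistinguishable from $(\bar{T}_{j-1},\underline{\pol}_j)$, we obtain
\begin{align*}
\gamma_j\bigl(\bar{X}_j,(\bar{T}_{j-1},\pol_j(\bar{X}_j,\bar{T}_{j-1}))\bigr)
&=\E\bigl[Y^{(\bar{T}_{j-1},\underline{\pol}_j)}\mid \bar{X}_j,\bar{T}_{j-1},T_j=\pol_j(\cdot)\bigr]\\
&\qquad-\E\bigl[Y^{(\bar{T}_{j-1},0,\underline{\pol}_{j+1})}\mid \bar{X}_j,\bar{T}_{j-1},T_j=\pol_j(\cdot)\bigr].
\end{align*}
Sequential conditional exogeneity (Assumption~\ref{ass:cond-ex}) lets me drop the extra conditioning on $T_j$ in both lines, and applying the same independence to the two summands of $\gamma_j(\bar{X}_j,\bar{T}_j)$ to swap their conditioning to $(\bar{X}_j,\bar{T}_{j-1})$ as well, the common ``$0$''-baseline term cancels, yielding
\begin{equation*}
\rho_j(\bar{X}_j,\bar{T}_j)=\E\bigl[Y^{(\bar{T}_{j-1},\underline{\pol}_j)}\mid \bar{X}_j,\bar{T}_{j-1}\bigr]-\E\bigl[Y^{(\bar{T}_j,\underline{\pol}_{j+1})}\mid \bar{X}_j,\bar{T}_j\bigr].
\end{equation*}

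Second, I would take conditional expectation of $\sum_{j=t}^{m}\rho_j$ given $(\bar{X}_t,\bar{T}_t=\bar{\tau}_t)$ and check that the sum telescopes. The key step is that for any $j\ge t$, the tower property combined with sequential exogeneity (used to move the inner conditioning from $\bar{T}_{j-1}$ to $\bar{T}_j$) gives
\begin{equation*}
\E\bigl[Y^{(\bar{T}_{j-1},\underline{\pol}_j)}\mid \bar{X}_t,\bar{T}_t=\bar{\tau}_t\bigr]=\E\bigl[Y^{(\bar{T}_{j-1},\underline{\pol}_{j})}\mid \bar{X}_t,\bar{T}_t=\bar{\tau}_t\bigr],
\end{equation*}
so each ``left'' term at index $j{+}1$ matches the ``right'' term at index $j$ after conditioning. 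The $j=t$ left endpoint is $\E[Y^{(\bar{\tau}_{t-1},\underline{\pol}_t)}\mid \bar{X}_t,\bar{T}_t=\bar{\tau}_t]$; the $j=m$ right endpoint involves $Y^{(\bar{T}_m,\underline{\pol}_{m+1})}=Y^{(\bar{T}_m)}=Y$, giving $\E[Y\mid \bar{X}_t,\bar{T}_t=\bar{\tau}_t]$. Rearranging produces \eqref{eqn:cntf-rho}.

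Finally, for \eqref{eqn:off-policy} I would instantiate the identity at $t=1$ (so that $\bar{\tau}_{t-1}$ is empty and $(\bar{\tau}_{t-1},\underline{\pol}_t)=\pol$), and take an outer expectation over $(\bar{X}_1,\bar{T}_1)$ using the tower law; the left-hand side becomes $\E[Y^{(\pol)}]$ by sequential exogeneity at $t=1$, and the right-hand side becomes $\E[Y+\sum_{j=1}^{m}\rho_j]$. The main bookkeeping obstacle is ensuring that every application of sequential exogeneity is legitimate — in particular, carefully identifying the relevant potential-outcome family so that independence from $T_j$ holds conditional on $(\bar{X}_j,\bar{T}_{j-1})$ — and that the static-versus-dynamic composition of policies in the two blip evaluations indeed yields the same counterfactual outcome, which is what makes the telescoping collapse.
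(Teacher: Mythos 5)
Your proposal is correct and follows essentially the same route as the paper: use sequential conditional exogeneity to rewrite each $\rho_j$ as a difference of conditional means of counterfactuals (the paper writes $\rho_j(\bar{x}_j,\bar{\tau}_j)=-\E[Y^{(\bar{\tau}_j,\underline{\pol}_{j+1})}-Y^{(\bar{\tau}_{j-1},\underline{\pol}_j)}\mid\bar{X}_j=\bar{x}_j,\bar{T}_j=\bar{\tau}_j]$), then telescope via the tower law against the conditioning set $(\bar{X}_t,\bar{T}_t=\bar{\tau}_t)$, using $Y=Y^{(\bar{T}_m)}$ at the endpoint. The only blemish is that your displayed ``key step'' equation is a tautology as written (both sides are literally identical) — you clearly mean that the left term of $\rho_{j+1}$ and the right term of $\rho_j$ coincide after conditioning on $(\bar{X}_t,\bar{T}_t)$, which the surrounding prose states correctly.
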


Importantly, the conditioning set in Equation~\eqref{eqn:cntf-rho} contains the observed $t$ periods treatment. Intuitively, each term $\rho_j$, removes from the outcome the \emph{blip effect} of the observed action $T_j$ and adds the \emph{blip effect} of the target action $\pol(\bar{X}_j, \bar{T}_{j-1})$. Lemma~\ref{lem:cntf-char-rho}, together with conditional sequential exogeneity also implies that the following set of moment restrictions must be satisfied (the following is an adaptation of Theorem~3.2 of \citep{Robins2004} to our notation and we include its proof for completeness).

\begin{lemma}[Moment Restrictions for Blip Functions]\label{lem:moment-restrictions} For any parameterization of the blip functions $\gamma_t(\bar{x}_t, \bar{\tau}_t; \psi_t)$, if we let the random variable $H_t(\psi):=Y + \sum_{j=t}^{m} \rho_j(\bar{X}_j, \bar{T}_j; \psi_j)$, then the true parameter vector $\psi^*$ must satisfy the moment restrictions:
\begin{align}
    \forall t\in [m], \forall f \in \mcF: \E\left[ H_t(\psi^*)\, \left(f(\bar{X}_t, \bar{T}_t) - \E[f(\bar{X}_t, \bar{T}_t)\mid \bar{X}_t, \bar{T}_{t-1}]\right)\right] = 0
\end{align}
where $\mcF$ contains all functions mapping histories $\bar{x}_t, \bar{\tau}_t$ to $\R$.
\end{lemma}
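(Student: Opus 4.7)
The plan is to combine Lemma~\ref{lem:cntf-char-rho} with sequential conditional exogeneity to show that $\E[H_t(\psi^*) \mid \bar X_t, \bar T_t]$ is in fact measurable with respect to the smaller sigma-algebra $\sigma(\bar X_t, \bar T_{t-1})$; once this reduction is in hand, the moment restriction falls out by a single tower-property computation against the (conditionally) mean-zero random variable $f(\bar X_t,\bar T_t)-\E[f(\bar X_t,\bar T_t)\mid \bar X_t,\bar T_{t-1}]$.

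First I would instantiate Lemma~\ref{lem:cntf-char-rho} at the true parameter $\psi^*$, so that the $\rho_j$'s become the true blip differences. This yields the identity
\begin{align*}
\E[H_t(\psi^*) \mid \bar X_t, \bar T_t = \bar\tau_t] = \E\left[Y^{(\bar\tau_{t-1},\, \underline{\pol}_t)} \mid \bar X_t, \bar T_t = \bar\tau_t\right],
\end{align*}
i.e. the conditional mean of $H_t(\psi^*)$ given the full observed history up to time $t$ equals the conditional mean of the counterfactual outcome that follows the observed treatments only through time $t-1$ and then switches to the target policy $\pol$ from time $t$ onward.

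Next I would apply Assumption~\ref{ass:cond-ex}. The random variable $Y^{(\bar T_{t-1},\, \underline{\pol}_t)}$ is a measurable function of $\bar T_{t-1}$ and the family of static potential outcomes $\{Y^{(\tau)}\}_\tau$ (through the recursive definition of the dynamic continuation under $\pol$), and both $\bar T_{t-1}$ and $\{Y^{(\tau)}\}_\tau$ are conditionally independent of $T_t$ given $(\bar X_t, \bar T_{t-1})$. Hence
\begin{align*}
\E\left[Y^{(\bar T_{t-1},\, \underline{\pol}_t)} \mid \bar X_t, \bar T_t\right] = \E\left[Y^{(\bar T_{t-1},\, \underline{\pol}_t)} \mid \bar X_t, \bar T_{t-1}\right] =: g(\bar X_t, \bar T_{t-1}),
\end{align*}
and combining with the previous display, $\E[H_t(\psi^*)\mid \bar X_t, \bar T_t] = g(\bar X_t, \bar T_{t-1})$ is $\sigma(\bar X_t,\bar T_{t-1})$-measurable.

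Finally, to prove the moment restriction I would iterate expectations. Conditioning inside first on $(\bar X_t,\bar T_t)$ replaces $H_t(\psi^*)$ by $g(\bar X_t, \bar T_{t-1})$; then conditioning on $(\bar X_t,\bar T_{t-1})$ pulls $g$ outside, leaving the inner conditional expectation of $f(\bar X_t,\bar T_t)-\E[f(\bar X_t,\bar T_t)\mid \bar X_t,\bar T_{t-1}]$, which vanishes by construction. The only delicate step is the second one: transferring the conditional independence in Assumption~\ref{ass:cond-ex} (stated for the entire family of static counterfactuals) to the derived random variable $Y^{(\bar T_{t-1},\, \underline{\pol}_t)}$. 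This requires recognizing that, after conditioning on $(\bar X_t,\bar T_{t-1})$, this object is a deterministic functional of $\{Y^{(\tau)}\}_\tau$, so the independence passes through; the rest of the argument is bookkeeping.
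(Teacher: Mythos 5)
Your proposal is correct and follows essentially the same route as the paper's proof: both apply the tower law to replace $H_t(\psi^*)$ by its conditional mean, invoke Lemma~\ref{lem:cntf-char-rho} to identify that conditional mean with $\E[Y^{(\bar T_{t-1},\underline{\pol}_t)}\mid \bar X_t,\bar T_t]$, use sequential exogeneity to drop $T_t$ from the conditioning set, and conclude by the tower law against the conditionally mean-zero $\bar f$. Your explicit remark about why the conditional independence transfers from the family $\{Y^{(\tau)}\}_\tau$ to the derived counterfactual $Y^{(\bar T_{t-1},\underline{\pol}_t)}$ is a point the paper leaves implicit, but it is the same argument.
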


To achieve parametric estimation rates for the quantities of interest, we will need to further make a semi-parametric assumption, i.e. that the \emph{blip} functions take a low-dimensional parametric form:
\begin{assumption}[Linear Blip Functions]\label{ass:linear-snmm}
The blip functions admit a linear parametric form:
\begin{align}
    \gamma_t(\bar{x}_t, \bar{\tau}_t;\psi_t) := \psi_t'\phi_t(\bar{x}_t, \bar{\tau}_t)  
\end{align}
for some known $\fdim$-dimensional feature vector maps $\phi_t$, satisfying $\phi_t(\bar{x}_t, (\bar{\tau}_{t-1}, 0))=0$ and such that for some true $\psi_t^*$, $\gamma_t(\cdot, \cdot;\psi_t^*)=\gamma_t(\cdot,\cdot)$.
\end{assumption}
Then we can identify $\psi^*$ by finding a parameter vector $\psi$ that satisfies the subset of the moment restrictions of the form:
\begin{align}
    \forall t\in [m]: \E[ H_t(\psi)\, \left(\phi_t(\bar{X}_t, \bar{T}_t) - \E[\phi_t(\bar{X}_t, \bar{T}_t)\mid \bar{X}_t, \bar{T}_{t-1}]\right)] = 0
\end{align}
Moreover, we can also subtract from $H_t(\psi)$, the conditional expectation $\E[H_t(\psi^*)\mid \bar{X}_t, \bar{T}_{t-1}]$ while maintaining the moment condition:
\begin{align}
    \forall t\in [m]: \E[ \left(H_t(\psi) - \E[H_t(\psi^*)\mid \bar{X}_t, \bar{T}_{t-1}]\right)\, \left(\phi_t(\bar{X}_t, \bar{T}_t) - \E[\phi_t(\bar{X}_t, \bar{T}_t)\mid \bar{X}_t, \bar{T}_{t-1}]\right)] = 0
\end{align}
This is the doubly robust moment condition proposed by \citep{robins1994correcting,Robins2004}, where it is shown that an estimator of $\psi$ based on this moment is correct if either the estimate of $\E[H_t(\psi^*)\mid \bar{X}_t, \bar{T}_{t-1}]$ or the estimate of $\E[\phi_t(\bar{X}_t, \bar{T}_t)\mid \bar{X}_t, \bar{T}_{t-1}]$ is correct. However, estimating $\E[H_t(\psi^*)\mid \bar{X}_t, \bar{T}_{t-1}]$, requires knowing the true parameter vector $\psi^*$, and since this is unknown, any feasible estimation approach must first construct preliminary estimates of $\psi^*$, which is computationally cumbersome and introduces another source of error. This issue has been discussed as one of the main points not to use the doubly robust correction in practice in $g$-estimation of structural nested models (see e.g. the discussion at the end of Section~6.1 of \citep{vansteelandt2014structural}). For a binary treatment and when the target policy is the all-zero policy \citep{hernan2010causal} (see Technical~Point~21.5) note that $\E[H_t(\psi^*)\mid \bar{X}_t, \bar{T}_t]$ can be estimated by regressing the outcome of the population that received zero subsequent treatment on the history. However, such a population can be quite small in practice and can have severe co-variate imbalances compared to the overall population. Moreover, this approach only applies to the case of a binary treatment and a static target policy.

\section{Our Approach: Neyman Orthogonal (Locally Robust) $G$-Estimation}

In this work, we show that we can achieve a Neyman orthogonal moment for identifying $\psi^*$, which is sufficient for robustness to biases stemming from machine learning models used to train the nuisance components, while avoiding the cumbersome part of estimating the nuisance $\E[H(\psi^*)\mid \bar{X}_t, \bar{T}_{t-1}]$. Moreover, our approach leads to a strongly convex loss for the parameters at each step of the recursive process, which is beneficial for finite sample guarantees and subsequently for generalizing it to linear models with parameter heterogeneity with respect to exogenous co-variates.

In particular, instead of subtracting $\E[H_t(\psi^*)\mid \bar{X}_t, \bar{T}_{t-1}]$, we subtract $\E[H_t(\psi)\mid \bar{X}_t, \bar{T}_{t-1}]$, i.e. the function that we subtract is dynamically dependent on the estimate of $\psi$. Even though this moment is not doubly robust, we will show that it remains \emph{locally robust} \citep{LRSP}, aka \emph{Neyman orthogonal} \citep{chernozhukov2018double}. To define our moment restrictions and connect them to the main results of the paper we first define several convenient random variables: for any $j\in [m]$, let
\begin{align}\label{eqn:q-j}
Q_j :=\phi_j(\bar{X}_j, \bar{T}_j) - \phi_j(\bar{X}_j, (\bar{T}_{j-1}, \pol(\bar{X}_j, \bar{T}_{j-1}))
\end{align}
and for any $1\leq t\leq j$, let
\begin{align}
    \tilde{T}_{j, t} :=~& Q_j - \E[Q_j\mid \bar{X}_t, \bar{T}_{t-1}]\\
    \tilde{Y}_{t} :=~& Y - \E[Y\mid \bar{X}_t, \bar{T}_{t-1}]
\end{align}
Then, note that for any $\psi$:
\begin{align}
    H_t(\psi) - \E[H_t(\psi)\mid \bar{X}_t, \bar{T}_{t-1}] =~& \tilde{Y}_{t} - \sum_{j=t}^m \psi_j'\tilde{T}_{j,t}
\end{align}
Moreover, note that since the second term in $Q_t$ only depends on the conditioning set $\bar{X}_t, \bar{T}_{t-1}$, then:
\begin{align}
\tilde{T}_{t,t}=\phi_t(\bar{X}_t, \bar{T}_t) - \E[\phi_t(\bar{X}_t, \bar{T}_t)\mid \bar{X}_t, \bar{T}_{t-1}].
\end{align}
Thus we conclude that the true parameter $\psi^*$ must be satisfying the moment restrictions:
\begin{align}
    \forall t\in [m]: \E\left[\left(\tilde{Y}_t - \sum_{j=t+1}^m \psi_j'\tilde{T}_{j,t} - \psi_j'\tilde{T}_{t,t}\right)\, \tilde{T}_{t,t}\right] = 0
\end{align}
These are exactly of the same form as the moment restrictions considered in the definition of the Dynamic DML algorithm and its analysis. Thus the results we presented so far directly extend to the estimation of structural nested mean models, with a linear parameterization of the blip functions, simply by using the different definition of the residual variables $\tilde{Y}_t$ and $\tilde{T}_{t,t}$ and letting $\psi_{t}=\theta_{m-t}$ in the definition of Algorithm~\ref{alg:sr} and in Theorems~\ref{thm:mse} and \ref{thm:normality}. 

Moreover, note that the nuisance models that are trained in the first stage have a larger conditioning set which includes all past history of states and treatments $\bar{X}_{t}, \bar{T}_{t-1}$. If we made further restrictions such that there was a "funnel state" $S_t$ at each period that summarizes the history and such that any dependence of the future to the past is going through that funnel state, then conditioning only on that state would have been sufficient. This is what we essentially did in the linear Markovian model. Moreover, the linear Markovian model with a static policy $\tau$, is a special case where the blip functions take the simple form: $\gamma_t(\bar{x}_t, \bar{\tau}_t)=\theta_{m-t}'\tau_t$ and are target policy independent. When the blip functions are target policy independent, then any target policy can be used to estimate the structural parameters $\psi^*$. In our main development, we essentially used the baseline zero policy as a target policy to estimate the structural parameters.

Thus our Dynamic DML algorithm extends to the estimation of the structural parameters in a structural nested mean model for any target dynamic policy $\pol$ and any user defined baseline policy $\bar{0}$ (referred to as a $(\pi,\bar{0})$-double regime structural nested mean model). It allows for the estimation of the nuisance functions with arbitrary machine learning algorithms, subject to a relatively slow mean-squared-error condition and reduces estimation to simple regression and classification oracles in the first stage, with only a simple linear system of equations in the second phase, which can also be solved in linear time in a recursive manner. 

For completeness, we present the generalization of Algorithm~\ref{alg:sr} to SNMMs in Algorithm~\ref{alg:snmms} and we re-state the main Theorems in this broader context.

\begin{algorithm}[htpb]
   \caption{Dynamic DML for Structural Nested Mean Models (SNMMs)}
   \label{alg:snmms}
\begin{algorithmic}
   \STATE {\bf Input:} A data set consisting of $n$ samples of $m$-length paths: $\left\{\left(X_1^i, T_1^i, \ldots, X_m^i, T_m^i, Y^i\right)\right\}_{i=1}^n$
   \STATE {\bf Input:} A target dynamic policy $\pol$
   \STATE {\bf Input:} $\fdim$-dimensional feature vector maps $\{\phi_t\}_{t=1}^m$, which parameterize the blip functions
   \STATE Randomly split the $n$ samples in $S, S'$
   \FOR{each $t \in \{1, \ldots, m\}$} 
       \STATE Regress $Y$ on $\bar{X}_{t}, \bar{T}_{t-1}$ using $S$ to learn estimate $\hat{q}_t$ of model
       $$q_{t}^*(\bar{x}_{t}, \bar{\tau}_{t-1}) := \E[Y \mid \bar{X}_{t}=\bar{x}_t, \bar{T}_{t-1}=\bar{\tau}_{t-1}]$$
        Calculate residuals for each sample $i\in S'$: $\tilde{Y}_{t}^i := Y^i - \hat{q}_{t}(\bar{X}_{t}^i, \bar{T}_{t-1}^i)$
       \STATE Vice versa use $S'$ to learn model $\hat{q}_t$ and calculate residuals on $S$.
       \FOR{each $j \in \{t, \ldots, m\}$}
            \STATE Let $Q_{j,t} := \phi_j(\bar{X}_j, \bar{T}_j) - \phi_j(\bar{X}_j, (\bar{T}_{j-1}, \pol(\bar{X}_j, \bar{T}_{j-1}))\,1\{j>t\}$
            \STATE Regress $Q_{j,t}$ on $\bar{X}_{t}, \bar{T}_{t-1}$ using $S$ to learn estimate $\hat{p}_{j, t}$ of model 
            $$p_{j, t}^*(\bar{x}_{t}, \bar{\tau}_{t-1}) := \E[Q_{j,t} \mid \bar{X}_{t}=\bar{x}_t, \bar{T}_{t-1}=\bar{\tau}_{t-1}]$$
            Calculate residuals for each sample $i\in S'$: $\tilde{T}_{j, t}^i := Q_{j,t}^i - p_{j, t}(\bar{X}_{t}^i, \bar{T}_{t-1}^i)$
            \STATE Vice versa use $S'$ to learn model $\hat{p}_{j, t}$ and calculate residuals on $S$.
       \ENDFOR
   \ENDFOR
   \STATE Using all the data $S\cup S'$
   \FOR{$t=m$ {\bfseries down to} 1}
        \STATE Regress $\bar{Y}_t^i:=\tilde{Y}_{t}^i - \sum_{j=t+1}^m \hat{\psi}_{j}' \tilde{T}_{j, t}^i$ on $\tilde{T}_{t, t}^i$ with oridinary least squares:
        \begin{equation}\label{eqn:square-loss-alg}
            \hat{\psi}_{t} := \argmin_{\psi_{t}\in \Psi_{t}} \frac{1}{n} \sum_{i=1}^n \left(\bar{Y}_t^i- \psi_{t}'\tilde{T}_{t,t}^{i}\right)^2
        \end{equation}
   \ENDFOR
   \STATE {\bf Return:} Structural parameter estimate $\hat{\psi} = (\hat{\psi}_1, \ldots, \hat{\psi}_m)$ of $\psi^*=(\psi_1^*, \ldots, \psi_m^*)$
\end{algorithmic}
\end{algorithm}

We present here the generalized version of the asymptotic normality theorem for SNMMs and we defer the finite sample guarantee to the next section where we will analyze a more general setting of heterogeneous dynamic effects and present a finite sample bound. The proof is identical to that of Theorem~\ref{thm:normality} and so we omit it.

\begin{theorem}[Asymptotic Normality]\label{thm:normality-snmm}
Let $\mcD_n$ be a sequence of families of data generating processes for a Structural Nested Mean Model and a target dynamic policy $\pol$, such that Assumptions~\ref{ass:cond-ex} and Assumption~\ref{ass:linear-snmm} are satisfied, for a constant feature map dimension $\fdim$ and such that all random variables and the ranges of all nuisance functions are bounded by a constant a.s.. Moreover, for any $D\in \mcD_n$:
\begin{align}
\E[\Cov(Q_{t,t}, Q_{t,t}\mid \bar{X}_t, \bar{T}_{t-1})]\succeq \lambda I.
\end{align}
and $\max_{O\in \{S,S'\}} \|\hat{h}_O-h^*\|_{2,2}=o_p(1)$ and
$\max_{O\in \{S, S'\}} \rho(\hat{h}_O) = o_p(n^{-1/2})$,
where $\hat{h}_O$ is the nuisance estimate in Algorithm~\ref{alg:snmms} used on the samples in $O$ and $\rho(\cdot)$ as defined in Theorem~\ref{thm:mse} with $M=\max_{t=1}^m \|\psi_t^*\|_2$..

Let $J$ denote a $(m\,\fdim)\times (m\,\fdim)$ upper triangular matrix consisting of $\fdim\times \fdim$ blocks, such that the $(t,j)$ block is defined as:
\begin{align}
J_{t,j}:=1\{t\leq j\}\E[Cov(Q_{t,t}, Q_{j,t}\mid \bar{X}_t, \bar{T}_{t-1})].
\end{align}
Moreover, let $\Sigma$ be a $(m\,\fdim)\times (m\,\fdim)$ block matrix with $\fdim\times \fdim$ blocks of the form:
\begin{align}
\Sigma_{t,j} := \E[\epsilon_t\, \epsilon_j\, (Q_{t,t} - \E[Q_{t,t}\mid \bar{X}_t, \bar{T}_{t-1}])\, (Q_{j,j} - \E[Q_{j,j}\mid \bar{X}_j, \bar{T}_{j-1}])']
\end{align}
where $\epsilon_t = H_t(\psi^*) - \E[H_t(\psi^*)\mid \bar{X}_t, \bar{T}_{t-1}]$ and let $V=J^{-1} \Sigma (J^{-1})'$. Then:
\begin{align}
    \sqrt{n} V^{-1/2} (\hat{\psi} - \psi^*) = -\frac{1}{\sqrt{n}}\sum_{i=1}^n V^{-1/2} J^{-1}\, m(Z^i; \psi^*, h^*) + o_p(1) \to_d N(0, I_{\fdim\cdot m})
\end{align}
where $m(\cdot) = (m_1(\cdot);\ldots; m_m(\cdot))$ and $m_t(Z; \psi^*, h^*) = \epsilon_t\, (Q_{t,t} - \E[Q_{t,t}\mid \bar{X}_t, \bar{T}_{t-1}])$.

Moreover, let $\hat{J}$ be an estimate of $J$ with the $(t,j)$ block being: $\hat{J}_{t,j} = 1\{j\geq t\} \frac{1}{n}\sum_i \tilde{T}_{t,t}^i (\tilde{T}_{j,t}^i)'$ and $\hat{\Sigma}$ be the estimate of the $\Sigma$, whose $(t,j)$ block is defined as:
\begin{align}
    \hat{\Sigma}_{t,j} = \frac{1}{n} \sum_{i=1}^n \left(\bar{Y}_t^{i} - \hat{\psi}_t'\tilde{T}_{t,t}^i\right)\, \left(\bar{Y}_j^{i} - \hat{\psi}_j'\tilde{T}_{j,j}^i\right)\, \tilde{T}_{t,t}^i\, (\tilde{T}_{j,j}^i)'
\end{align}
and let $\hat{V}=\hat{J}^{-1} \hat{\Sigma} (\hat{J}^{-1})'$.
Let $\Phi$ be the CDF of the standard normal distribution. Then for any vector $\nu \in \R^{d\,m}$, the confidence interval:
\begin{align}
    \CI := \left[\nu'\hat{\psi} \pm \Phi^{-1}(1-\alpha/2)\, \sqrt{\frac{\nu'\hat{V} \nu}{n}}\right]
\end{align}
is asymptotically uniformly valid:
    $\sup_{D\in \mcD_n} \left|\Pr_D(\nu'\psi^* \in \CI) - (1-\alpha)\right| \to 0$.
\end{theorem}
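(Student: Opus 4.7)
The plan is to follow the roadmap used in the proof of Theorem~\ref{thm:normality}, with only notational changes: the moment function at stage $t$ is now
\begin{align*}
m_t(Z; \psi, h) = \Bigl(Y - q_t(\bar{X}_t,\bar{T}_{t-1}) - \textstyle\sum_{j\geq t}\psi_j'\bigl(Q_j - p_{j,t}(\bar{X}_t,\bar{T}_{t-1})\bigr)\Bigr)\bigl(Q_{t,t} - p_{t,t}(\bar{X}_t,\bar{T}_{t-1})\bigr),
\end{align*}
and the first step is to verify Neyman orthogonality at each stage. Computing the Gateaux derivative of $\E[m_t(Z;\psi^*,h)]$ at $h=h^*$ with respect to $q_t$ and with respect to $p_{t,t}$ yields zero, because $\E[Y-q_t^*\mid \bar{X}_t,\bar{T}_{t-1}]=0$ and $\E[Q_{t,t}-p_{t,t}^*\mid \bar{X}_t,\bar{T}_{t-1}]=0$; the cross-term involving perturbations of $p_{j,t}$ for $j>t$ vanishes after conditioning because $(Q_{t,t}-p_{t,t}^*)$ is mean zero given the conditioning set. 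This is the same orthogonality structure exploited in Theorem~\ref{thm:normality}.

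Next, I would linearize the empirical estimating equation stacked over $t$. Writing $\hat{M}_n(\psi, h) := \frac{1}{n}\sum_i m(Z^i;\psi,h)$ with components $m_t$, the estimator $\hat{\psi}$ satisfies $\hat{M}_n(\hat\psi, \hat h)=0$. A Taylor expansion in $\psi$ gives
\begin{align*}
\sqrt{n}(\hat\psi - \psi^*) = -\hat{J}^{-1}\sqrt{n}\,\hat{M}_n(\psi^*,\hat h) + o_p(1),
\end{align*}
where $\hat J$ is the empirical Jacobian of $\hat M_n$ in $\psi$, upper triangular in the $(t,j)$ block structure. Consistency $\hat J\to_p J$ follows by a law of large numbers using $\hat h\to_p h^*$ in $L_2$. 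The key remaining step is to show
\begin{align*}
\sqrt{n}\,\hat M_n(\psi^*,\hat h) = \sqrt{n}\,\hat M_n(\psi^*,h^*) + o_p(1).
\end{align*}
This decomposes, via adding and subtracting $\E[m(Z;\psi^*,\hat h)\mid \hat h]$, into (i) a bias term of order $\|\hat p_{t,t}-p_{t,t}^*\|_2\cdot(\|\hat q_t-q_t^*\|_2+M\sum_j\|\hat p_{j,t}-p_{j,t}^*\|_2)=\rho(\hat h)=o_p(n^{-1/2})$ by the Neyman orthogonal product structure, and (ii) an empirical process term that is $o_p(1)$ by cross-fitting, since $\hat h$ is trained on the complementary split and $\|\hat h-h^*\|_{2,2}=o_p(1)$. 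Applying the CLT to the i.i.d.\ sum $\frac{1}{\sqrt n}\sum_i m(Z^i;\psi^*,h^*)$, whose covariance is $\Sigma$, yields the claimed limit distribution with sandwich variance $V=J^{-1}\Sigma(J^{-1})'$.

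The main obstacle, as in Theorem~\ref{thm:normality}, is the recursive coupling across stages: $\hat\psi_t$ is computed from residualized outcomes built with $\hat\psi_{t+1},\ldots,\hat\psi_m$, and Neyman orthogonality does \emph{not} hold with respect to those structural parameters (they are not nuisances, they are the targets). The remedy is to absorb this dependence into the Jacobian matrix $J$ itself: because the block $J_{t,j}=\E[\Cov(Q_{t,t},Q_{j,t}\mid \bar X_t,\bar T_{t-1})]$ captures exactly the first-order effect of $\hat\psi_j-\psi_j^*$ on the stage-$t$ moment, the linearization reproduces the block-upper-triangular system above, and inverting $J$ handles all error propagation in one stroke. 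Two additional mild points must be checked: (a) the error from reusing the same data across stages is $o_p(n^{-1/2})$ because the propagated terms are polynomial in the prior $\hat\psi_j-\psi_j^*$, which are $O_p(n^{-1/2})$ by downward induction from $t=m$, and (b) the cross-fitted construction guarantees the empirical process term is negligible without Donsker conditions.

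Finally, for the confidence interval, I would show $\hat J\to_p J$ by direct LLN on residualized products and $\hat\Sigma\to_p \Sigma$ by combining $\hat\psi\to_p\psi^*$ with $\hat h\to_p h^*$ and bounded random variables; then $\hat V\to_p V$ by the continuous mapping theorem (using the lower bound $\E[\Cov(Q_{t,t},Q_{t,t}\mid \bar X_t,\bar T_{t-1})]\succeq \lambda I$ to control $\hat J^{-1}$). Uniform validity over $\mcD_n$ follows by combining the uniform CLT along sequences with Polya's theorem, exactly as in the corresponding step of Theorem~\ref{thm:normality}.
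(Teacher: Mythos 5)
Your proposal is correct and follows essentially the same route as the paper: the paper reduces Theorem~\ref{thm:normality-snmm} to the argument for Theorem~\ref{thm:normality}, which verifies Neyman orthogonality of the stacked moment, bounds the second-order remainder by $\rho(\hat h)$, observes the block-upper-triangular Jacobian with diagonal blocks $\succeq \lambda I$, and then invokes Theorem~3.1 and Corollary~3.1 of \cite{chernozhukov2018double}. You have merely unpacked that citation into its constituent linearization, bias, and stochastic-equicontinuity steps, and your observation that the upper-triangular $J$ absorbs the cross-stage error propagation is exactly the device the paper relies on.
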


Using Equation~\ref{eqn:off-policy} of Lemma~\ref{lem:cntf-char-rho}, we have that asymptotic normality and asymptotic linearity of the structural parameters, derived in Theorem~\ref{thm:normality-snmm}, also implies asymptotic normality of the plug-in off-policy value estimate for the target dynamic policy $\pol$. 

\begin{corollary}[Dynamic Off-Policy Evaluation and Inference]
Under the assumptions and definitions of Theorem~\ref{thm:normality-snmm}, the following is an estimate of the off-policy value of the target policy $\pi$:
\begin{align}
    \hat{R}(\pi) := \frac{1}{n} \sum_{i=1}^n \left(Y^i + \hat{\psi}' Q^i\right)
\end{align}
where $Q=(Q_1,\ldots, Q_m)$ and $Q_t$ as defined in Equation~\eqref{eqn:q-j}. If we let $R^*(\pi)=\E[Y^{(\pi)}]$, then:
\begin{align}
    \frac{\sqrt{n}}{\sqrt{\gamma + \mu}} \left(\hat{R}(\pi) - R^*(\pi)\right) = \frac{1}{\sqrt{n}} \sum_{i=1}^n \frac{1}{\sqrt{\gamma + \mu}}f(Z^i;\psi^*,h^*) + o_p(1) \to_d N(0, 1)
\end{align}
where $f(Z; \psi^*, h^*) = Y + Q'\psi^* - \E[Y + Q'\psi^*] - \E[Q]'J^{-1}m(Z;\psi^*, h^*)$ and $\gamma=\Var(Y + Q'\psi^*)$ and $\mu=\E[Q]'V\E[Q]$. Moreover, if we let $\hat{\gamma}=\Var_n(Y + Q'\hat{\psi})$ and $\hat{\mu}=\E_n[Q]'\hat{V}\E_n[Q]$, then the confidence interval:
\begin{align}
    \CI := \left[\hat{R}(\pi) \pm \Phi^{-1}(1-\alpha/2)\, \sqrt{\frac{\hat{\gamma}+\hat{\mu}}{n}}\right]
\end{align}
is asymptotically uniformly valid:
    $\sup_{D\in \mcD_n} \left|\Pr_D(R^*(\pi) \in \CI) - (1-\alpha)\right| \to 0$.
\end{corollary}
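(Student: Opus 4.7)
The plan is to establish an asymptotic linear expansion for $\hat{R}(\pi)-R^*(\pi)$ and then invoke a triangular-array CLT, mirroring the proof structure of Theorem~\ref{thm:normality-snmm}. By Lemma~\ref{lem:cntf-char-rho} combined with the linear parameterization of the blip functions in Assumption~\ref{ass:linear-snmm}, the target $R^*(\pi)=\E[Y^{(\pi)}]$ is an affine function of $\psi^*$ of the form $\E[Y + Q'\psi^*]$. Decomposing the plug-in error around $\psi^*$,
\[
    \hat{R}(\pi)-R^*(\pi) = \frac{1}{n}\sum_{i=1}^n\bigl(Y^i+Q^{i\prime}\psi^*-\E[Y+Q'\psi^*]\bigr) + \Bigl(\tfrac{1}{n}\sum_{i=1}^n Q^i\Bigr)'(\hat\psi-\psi^*).
\]
Since $\hat\psi-\psi^*=O_p(n^{-1/2})$ by Theorem~\ref{thm:normality-snmm} and $\frac{1}{n}\sum_i Q^i=\E[Q]+o_p(1)$ by the LLN, the empirical average of $Q$ can be replaced by $\E[Q]$ in the second term with an $o_p(n^{-1/2})$ remainder.

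The next step is to plug in the asymptotic linear representation of $\hat\psi-\psi^*$ supplied by Theorem~\ref{thm:normality-snmm}, namely $\hat\psi-\psi^*=-\frac{1}{n}\sum_{i=1}^n J^{-1}m(Z^i;\psi^*,h^*)+o_p(n^{-1/2})$. Combining the two expansions yields
\[
    \sqrt{n}\bigl(\hat{R}(\pi)-R^*(\pi)\bigr) = \frac{1}{\sqrt{n}}\sum_{i=1}^n f(Z^i;\psi^*,h^*)+o_p(1),
\]
with $f$ as stated. The summands are i.i.d., mean-zero, and square-integrable under the boundedness assumptions of Theorem~\ref{thm:normality-snmm}, so a Lindeberg–L\'evy CLT (in the triangular-array form used to secure uniformity over $\mcD_n$ in Theorem~\ref{thm:normality-snmm}) delivers convergence of the standardized statistic to $N(0,1)$, provided we can identify the asymptotic variance as $\gamma+\mu$.

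Identifying $\Var(f)=\gamma+\mu$ is the step I expect to be the main obstacle. Writing $f=a-b$ with $a=Y+Q'\psi^*-\E[Y+Q'\psi^*]$ and $b=\E[Q]'J^{-1}m$, we have $\Var(a)=\gamma$ and $\Var(b)=\E[Q]'V\E[Q]=\mu$, so the claim reduces to $\Cov(a,b)=0$. This should follow from the same sequential-exogeneity argument that underlies the Neyman orthogonality of the moment: by Lemma~\ref{lem:cntf-char-rho} combined with Assumption~\ref{ass:cond-ex}, $\E[H_t(\psi^*)\mid \bar X_t,\bar T_t]=\E[H_t(\psi^*)\mid \bar X_t,\bar T_{t-1}]$, which implies $\E[\epsilon_t\mid \bar X_t,\bar T_t]=0$ and hence $\E[m_t\mid \bar X_t,\bar T_t]=0$, making each $m_t$ a martingale difference with respect to the filtration $\{\sigma(\bar X_t,\bar T_t)\}_t$. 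A careful telescoping decomposition of $a$ along this filtration combined with iterated expectations should then give $\E[m_t\cdot a]=0$ for every $t$, hence $\Cov(a,b)=0$; carrying out this orthogonality calculation rigorously is where the argument is most delicate, since $a$ carries information all the way out to period $m$ and so is not $\sigma(\bar X_t,\bar T_t)$-measurable.

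Finally, the uniform coverage of $\CI$ is obtained by combining the asymptotic normality result with uniform consistency of the variance estimators $\hat\gamma\to_p\gamma$ and $\hat\mu\to_p\mu$, then applying Slutsky's theorem. Consistency of $\hat\gamma$ follows from the LLN applied to the sample variance of $Y+Q'\hat\psi$, using $\hat\psi\to_p\psi^*$ and the continuous mapping theorem; consistency of $\hat\mu=\E_n[Q]'\hat V\E_n[Q]$ reduces to that of $\hat V=\hat J^{-1}\hat\Sigma(\hat J^{-1})'$, which is already established in Theorem~\ref{thm:normality-snmm}. Uniformity over $\mcD_n$ is inherited from the uniform-over-$\mcD_n$ asymptotic linearity and CLT invoked in the preceding steps.
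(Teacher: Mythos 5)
Your overall route is the same one the paper takes: the paper offers no separate proof of this corollary, deriving it as an immediate consequence of the affine representation $R^*(\pi)=\E[Y+Q'\psi^*]$ from Lemma~\ref{lem:cntf-char-rho} together with the asymptotic linearity of $\hat{\psi}$ from Theorem~\ref{thm:normality-snmm}. Your decomposition of $\hat{R}(\pi)-R^*(\pi)$, the replacement of $\E_n[Q]$ by $\E[Q]$ at cost $o_p(n^{-1/2})$, the substitution of the influence-function expansion for $\hat\psi-\psi^*$, and the Slutsky argument for the confidence interval are all correct and are exactly what the paper intends.

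The one step you flag as delicate --- showing $\Var(f)=\gamma+\mu$, i.e.\ $\Cov(a,b)=0$ with $a=Y+Q'\psi^*-\E[Y+Q'\psi^*]$ and $b=\E[Q]'J^{-1}m(Z;\psi^*,h^*)$ --- is indeed where the difficulty lies, and the resolution you sketch does not close it. The martingale-difference property $\E[m_t(Z;\psi^*,h^*)\mid \bar X_t,\bar T_t]=0$ kills the correlation of $m_t$ with anything measurable with respect to $\sigma(\bar X_t,\bar T_t)$, and in particular with $\sum_{j<t}\rho_j$; but after writing $a = H_t(\psi^*)+\sum_{j<t}\rho_j-\E[H_1(\psi^*)]$ and $H_t(\psi^*)=\E[H_t(\psi^*)\mid\bar X_t,\bar T_{t-1}]+\epsilon_t$, the surviving term is
\begin{align}
\E[m_t\, a] = \E\left[\epsilon_t^2\,\left(Q_{t,t}-\E[Q_{t,t}\mid\bar X_t,\bar T_{t-1}]\right)\right]
= \E\left[\tilde T_{t,t}\,\E[\epsilon_t^2\mid \bar X_t,\bar T_t]\right],
\end{align}
and no amount of telescoping or iterated expectations makes this vanish, because $m_t$ itself (through $\epsilon_t$) depends on the entire future and is not adapted at time $t$. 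This quantity is zero precisely when the conditional second moment $\E[\epsilon_t^2\mid\bar X_t,\bar T_t]$ does not vary with $T_t$ (a conditional homoskedasticity-in-treatment condition); it does hold in the expository additive-shock Markovian model of Section~\ref{sec:expository}, where $\epsilon_t$ is a sum of exogenous shocks independent of $(\bar X_t,\bar T_t)$, but it is not implied by Assumption~\ref{ass:cond-ex} alone. So to complete the proof you must either (i) add this homoskedasticity-type condition, or (ii) state the asymptotic variance as $\Var(f)$ (and estimate it by the sample variance of the estimated influence function), in which case the CLT and the interval go through with no extra assumption. Everything else in your argument is sound.
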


\section{Heterogeneous Dynamic Effects}

We note that our moment condition that identifies each parameter $\psi_t$ is the derivative of a square loss and can be written as the solution to the square loss minimization problem defined in Equation~\eqref{eqn:square-loss-alg}.
This allows us to generalize the Dynamic DML to the case where we allow non-parametric heterogeneity in the parameters $\psi_t$, with respect to an exogenous fixed covariate vector of each sample, denoted as $X_0$, i.e. $\gamma_t(\bar{x}_t, \bar{\tau}_t)=\psi_t(x_0)'\phi(\bar{x}_t, \bar{\tau}_t)$, for a known feature map $\phi$ and unknown heterogeneous parameters $\psi$. Thus we can essentially generalize the $g$-estimation approach to SNMMs to allow for infinite or high dimensional parameters of the blip functions, as long as the input to these infinite dimensional parameters is fixed and not changing endogenously by the treatments (e.g. fixed characteristics of a unit). This can be achieved by simply minimizing recursively the square loss:
\begin{align}
    \hat{\psi_t} = \argmin_{\psi_t(\cdot) \in \Psi_t} \loss_{D,t}\left(\psi_t; \hat{\underline{\psi}}_{t+1}, \hat{h}\right) := \frac{1}{2}\E\left[\left(\tilde{Y}_t - \sum_{j=t+1}^m \hat{\psi}_j(X_0)'\tilde{T}_{j,t} - \psi_t(X_0)'\tilde{T}_{t,t}\right)^2\right]
\end{align}
over arbitrary function spaces $\Psi_t$ or by using any other machine learning techniques that achieve small excess risk with respect to the latter square loss problem (e.g. regularized least squares, early stopping, etc). In the latter, we denoted with $\hat{h}$ an estimate of the vector of all nuisance functions (e.g. estimated in the first stage of the Dynamic DML Algorithm) and $\hat{\underline{\psi}}_{t+1}=(\hat{\psi}_{t+1}, \ldots, \hat{\psi}_m)$ the estimates of the target structural parameters constructed in previous iterations of the recursion.

\begin{algorithm}[htpb]
   \caption{Dynamic RLearner for Structural Nested Mean Models (SNMMs)}
   \label{alg:hetero-sr}
\begin{algorithmic}
   \STATE {\bf Input:} A data set of $n$ samples of $m$-length paths: $\left\{\left(X_0^i, X_1^i, T_1^i, \ldots, X_m^i, T_m^i, Y^i\right)\right\}_{i=1}^n$
   \STATE {\bf Input:} A target dynamic policy $\pol$
   \STATE {\bf Input:} $\fdim$-dimensional feature vector maps $\{\phi_t\}_{t=1}^m$, which parameterize the blip functions
   \STATE {\bf Input:} Function spaces for the heterogeneous structural parameters $\{\Psi_t\}_{t=1}^m$
   \STATE Execute first stage identically to Algorithm~\ref{alg:snmms}
   \STATE Using all the data $S\cup S'$
   \FOR{$t=m$ {\bfseries down to} 1}
        \STATE Estimate heterogeneous structural function $\hat{\psi}_t$ by minimizing the following square loss:
        \begin{equation}
            \hat{\psi}_{t} := \argmin_{\psi_{t}\in \Psi_{t}^n} \frac{1}{n} \sum_{i=1}^n \left(\tilde{Y}_{t}- \sum_{j=t+1}^m \hat{\psi}_{j}(X_0^i)' \tilde{T}_{j, t} - \psi_{t}(X_0^i)'\tilde{T}_{t,t}^{i}\right)^2
        \end{equation}
   \ENDFOR
   \STATE {\bf Return:} Heterogeneous structural function estimates $\hat{\psi} = (\hat{\psi}_1, \ldots, \hat{\psi}_m)$ of $\psi^*=(\psi_1^*, \ldots, \psi_m^*)$
\end{algorithmic}
\end{algorithm}

Using techniques from the recently introduced orthogonal statistical learning framework \citep{foster2019orthogonal,chernozhukov2018plugin}, we show in the appendix that this estimation method provides mean-squared-error guarantees on the recovered heterogeneous parameters $\hat{\psi}_t$, that are robust to errors in the nuisance functions. This heterogeneous extension can also be viewed as an analogue of the RLearner meta-learner algorithm \citep{nie2017quasi}, generalized to the dynamic treatment regime setting. The formal description of the algorithm appears in Algorithm~\ref{alg:hetero-sr}, where we also allow for the function space over which we are optimizing the loss function $\loss_{D,t}$, denoted as $\Psi_{t}^n$ to not necessarily be equal to $\Psi_t$ (which we know containts $\psi_t^*$) and to be changing with the sample size.

\paragraph{Norm notation.} To state our main results we will introduce some norm notation. For any vector valued function $\psi$, taking as input a random variable $X$ and having output in $\R^r$, we will denote with:
\begin{align}
\|\psi\|_{u, v} := \E[\|\psi(X)\|_u^v]^{1/v} = \E\left[\left(\sum_{j=1}^r \psi_j(X)^u\right)^{v/u}\right]^{1/v}
\end{align}
for any $u,v>1$. If $\psi$ is a parameter vector, then we will overload notation and let $\|\psi\|_{u, v}=\|\psi\|_{u}=\left(\sum_{j=1}^{r}\psi_j^u\right)^{1/u}$. If $u$ or $v$ equals $\infty$, then this would designate the sup norm, e.g. $\|\psi\|_{\infty, v} := \E[\max_{j\in r} \psi_j(X)^v]^{1/v}$ and $\|\psi\|_{u,\infty}=\sup_{x\in \mcX} \|\psi(x)\|_u$. For any $u,v$, we will denote with $\bar{u},\bar{v}$ the parameters that correspond to the dual norm, i.e. $1/u + 1/\bar{u}=1$ and similarly for $\bar{v}$. For any two functions $f, g$, taking as input random variables $X,Y$ we will use the shorthand notation: $\|f\circ g\|_{u,v} = \E[\|f(X)\|_u^v \cdot \|g(Y)\|_{u}^v]^{1/v}$. For an $n\times m$ matrix $A$, we will use the matrix norms:
\begin{align}
\|A\|_{u,v} := \sup_{x\in \R^m} \frac{\|A x\|_{u}}{\|x\|_{v}}
\end{align}
and for $u=v=2$, we denote with $\|A\|_{op}=\|A\|_{2,2}$, the spectral or operator norm of $A$.

\paragraph{Algorithm agnostic robustness to nuisance.} We first prove a general bound on the estimation that is independent of the estimation process that is run in the second stage of Algorithm~\ref{alg:hetero-sr}. This lemma will be useful in multiple subsequent theorems in this and subsequent sections.

\begin{lemma}[Algorithm-agnostic analysis]\label{lem:general}
Let $C_{t,j} := \Cov(Q_{t,t}, Q_{j,t} \mid \bar{X}_t, \bar{T}_{t-1})$ and suppose that for all $t\in [m]$ and for $\lambda>0$, for all $x_0\in \mcX_0$:
\begin{align}
\E[C_{t,t} \mid X_0=x_0] \succeq \lambda I \tag{positivity}    
\end{align}
Consider any estimation algorithm that produces a vector of estimates $\hat{\psi}=(\psi_1,\ldots, \psi_m)$, with small plug-in excess risk at each stage $t$ of the final stage process of Algorithm~\ref{alg:hetero-sr}, with respect to any solution $\psi^{*,n}$, at some nuisance estimate $\hat{h}$, i.e.,
\begin{talign}
\loss_{D,t}(\hat{\psi}_t; \hat{\underline{\psi}}_{t+1}, \hat{h}) - \loss_{D,t}(\psi_t^{*,n}; \hat{\underline{\psi}}_{t+1}, \hat{h}) \leq \epsilon(\psi_t^{*,n}, \hat{\underline{\psi}}_{t}, \hat{h}).
\end{talign}
Moreover, for any function estimates $\hat{f}, \hat{g}$, with true values $f^*, g^*$ and inputs $Z_f, Z_g$, consider the random matrix:
\begin{align}
    \Delta(\hat{f}, \hat{g}) := \E[(\hat{f}(Z_f) - f^*(Z_f))\, (\hat{g}(Z_g) - g^*(Z_g))'\mid X_0]
\end{align}
and let $\nu_t := \hat{\psi}_t - \psi_t^{*,n}$ and $\nu_t^* := \hat{\psi}_t - \psi_t^{*}$ and $b_t := \psi_t^{*,n} - \psi_t^*$ and:\footnote{We note that by $\|\|\Delta(f,g)\|_{u,v}\|\|_w$, we denote the quantity where we first take the matrix norm of the random matrix $\Delta(f,g)$, which is then a random variable that depends on $X_0$ and then take the $\ell_{w}$ norm of this random variable.}
\begin{align}
    \rho_{t,u,v}(\hat{h}) := \left\|\|\Delta(\hat{p}_{t}, \hat{q}_t)\|_{\bar{u},u}\right\|_{\bar{v}}
    + 3 M \sum_{j=t}^m  \left\|\|\Delta(\hat{p}_{t}, \hat{p}_{j,t})\|_{\bar{u},u}\right\|_{\bar{v}}
\end{align}
Then
\begin{align}
    \frac{\lambda}{2}\|\nu_t\|_{2,2}^2 - \frac{\sigma}{4} \|\nu_t\|_{u,v}^2 \leq~&  \epsilon(\psi_t^{*,n}, \hat{\underline{\psi}}_{t}, \hat{h}) + \frac{4 c_{t,t} }{\sigma} \|b_t\|_{u,\bar{v}} + \frac{4}{\sigma} \rho_{t,u,v}(\hat{h})^2 + \frac{2}{\sigma} \left(\sum_{j=t+1}^m c_{t,j}\, \|\nu_j^*\|_{u,\bar{v}}\right)^2
\end{align}
with $c_{t,j} := \sup_{x_0\in \mcX_0} \|\E[C_{t,j} \mid X_0=x_0]\|_{\bar{u}, u}$ and $M:=\max_{t\in [m], \psi_t\in \Psi_t^n} \max\left\{\|\psi_t^*\|_{u,\infty}, \|\psi_t\|_{u, \infty}\right\}$.
\end{lemma}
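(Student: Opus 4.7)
The plan is to exploit the fact that the population loss $\loss_{D,t}(\cdot; \hat{\underline{\psi}}_{t+1}, \hat{h})$ is quadratic in $\psi_t$, so the machinery from the orthogonal statistical learning framework \citep{foster2019orthogonal} applies, adapted here to the recursive structure in which the ``nuisance'' also includes previously estimated structural parameters $\hat{\underline{\psi}}_{t+1}$. First I would establish strong convexity of the plug-in loss in the direction $\nu_t := \hat{\psi}_t - \psi_t^{*,n}$: the Hessian equals
\begin{align*}
\nabla^2 \loss_{D,t}[\nu_t,\nu_t] = \E\!\left[\nu_t(X_0)'\, \E[\tilde{T}_{t,t}(\hat{h})\tilde{T}_{t,t}(\hat{h})'\mid X_0]\, \nu_t(X_0)\right].
\end{align*}
Decomposing $\tilde{T}_{t,t}(\hat{h}) = \tilde{T}_{t,t}(h^*) + (p_{t,t}^* - \hat{p}_{t,t})$, noting that the first summand is mean-zero given $\bar{X}_t,\bar{T}_{t-1}$ and the second is measurable with respect to this $\sigma$-algebra, the cross terms vanish by iterated expectations. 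Hence $\E[\tilde{T}_{t,t}(\hat{h})\tilde{T}_{t,t}(\hat{h})'\mid X_0] \succeq \E[C_{t,t}\mid X_0] \succeq \lambda I$, so the plug-in loss is $\lambda$-strongly convex in the $\|\cdot\|_{2,2}$ geometry. Combining this with the excess-risk hypothesis and the quadratic identity $\loss(\hat\psi_t) - \loss(\psi_t^{*,n}) = \nabla\loss[\nu_t] + \tfrac{1}{2}\nabla^2\loss[\nu_t,\nu_t]$ yields
\begin{align*}
\tfrac{\lambda}{2}\|\nu_t\|_{2,2}^2 \leq \epsilon(\psi_t^{*,n}, \hat{\underline{\psi}}_t, \hat{h}) - \nabla \loss_{D,t}(\psi_t^{*,n}; \hat{\underline{\psi}}_{t+1}, \hat{h})[\nu_t].
\end{align*}

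The heart of the argument is bounding the gradient. Write $-\nabla\loss[\nu_t] = \E[\nu_t(X_0)' G(X_0)]$ with $G(X_0) := \E[r\,\tilde{T}_{t,t}(\hat{h})\mid X_0]$ and $r := \bar{Y}_t(\hat{h},\hat{\underline{\psi}}_{t+1}) - \psi_t^{*,n}(X_0)'\tilde{T}_{t,t}(\hat{h})$. I would decompose $r = \epsilon_t + \delta r$, where $\epsilon_t = H_t(\psi^*) - \E[H_t(\psi^*)\mid \bar{X}_t,\bar{T}_{t-1}]$ is the true residual and $\delta r$ collects the perturbations from (i) nuisance estimation errors $(\hat{q}_t - q_t^*)$ and $(\hat{p}_{j,t} - p_{j,t}^*)$, (ii) propagated structural errors $\nu_j^* := \hat{\psi}_j - \psi_j^*$ for $j > t$, and (iii) the target bias $b_t := \psi_t^{*,n} - \psi_t^*$. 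Pairing each piece against the two-term decomposition $\tilde{T}_{t,t}(\hat h) = \tilde{T}_{t,t}(h^*) + (p_{t,t}^* - \hat{p}_{t,t})$, I repeatedly invoke the orthogonality identity that, conditional on $\bar{X}_t,\bar{T}_{t-1}$, any measurable factor paired with one of the mean-zero factors $\epsilon_t$, $\tilde{T}_{t,t}(h^*)$, or $(Q_{j,t} - p_{j,t}^*)$ integrates to zero. This Neyman-orthogonality-style cancellation kills every first-order-in-nuisance contribution, leaving only (a) second-order products $\Delta(\hat f, \hat g)(X_0)$, (b) a bias term $-\E[C_{t,t}\mid X_0]\,b_t(X_0)$ obtained from pairing $b_t'(Q_{t,t}-p_{t,t}^*)$ with $(Q_{t,t}-p_{t,t}^*)$, and (c) propagated-error terms $-\sum_{j>t}\E[C_{t,j}\mid X_0]\,\nu_j^*(X_0)$.

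To finish, apply H\"older's inequality in the mixed norm: $|\E[\nu_t(X_0)'v(X_0)]| \leq \|\nu_t\|_{u,v}\cdot \|v\|_{\bar u,\bar v}$. For the $\Delta$ contributions, submultiplicativity of the operator norm $\|\cdot\|_{\bar u, u}$ together with the uniform bound $M$ on $\|\psi_t^{*,n}\|_{u,\infty}, \|\hat{\psi}_j\|_{u,\infty}$ collects them into $\rho_{t,u,v}(\hat h)$. For the bias and propagated-error terms, the matrix factors are bounded by $c_{t,t}$ and $c_{t,j}$ respectively, producing $\|\nu_t\|_{u,v}\cdot c_{t,t}\|b_t\|_{u,\bar v}$ and $\|\nu_t\|_{u,v}\cdot \sum_{j>t} c_{t,j}\|\nu_j^*\|_{u,\bar v}$. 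Applying Young's inequality with appropriately chosen constants to each product absorbs $\tfrac{\sigma}{4}\|\nu_t\|_{u,v}^2$ on the left and leaves the stated sum of squared second-order terms on the right.

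The main obstacle is the gradient decomposition: each of the summands in $\delta r$ multiplies a two-term decomposition of $\tilde{T}_{t,t}(\hat h)$, producing a large collection of sub-terms, and correctness hinges on exactly the right ones cancelling via the conditional-mean-zero identity, while the survivors aggregate cleanly into the three categories above. Missing any cancellation would promote a nuisance error from second- to first-order and break the $o_p(n^{-1/2})$ rate. A secondary subtlety is that strong convexity must be derived from the \emph{plug-in} covariance $\E[\tilde{T}_{t,t}(\hat h)\tilde{T}_{t,t}(\hat h)'\mid X_0]$ rather than the true one; the extra positive-semidefinite correction saves us, but one must verify that $\lambda$ remains the correct constant.
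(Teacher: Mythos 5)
Your proposal is correct and follows essentially the same route as the paper: $\lambda$-strong convexity of the plug-in loss via the covariance decomposition $\E[\tilde{T}_{t,t}(\hat{h})\tilde{T}_{t,t}(\hat{h})'\mid X_0] = \E[C_{t,t}\mid X_0] + \Delta(\hat{p}_{t,t},\hat{p}_{t,t})$, followed by bounding the gradient at $\psi_t^{*,n}$ and finishing with H\"older plus Young/AM-GM. The only cosmetic difference is that the paper organizes the gradient bound as a three-way split (true-everything term $F$ handled via the moment restriction of Lemma~\ref{lem:moment-restrictions}, nuisance perturbation $A$ handled via the universal Neyman orthogonality of Lemma~\ref{lem:neyman-ortho-snmm} and a second-order functional Taylor expansion with integral remainder, and downstream-parameter perturbation $B$ handled by linearity), whereas you expand the residual directly; the surviving terms and cancellations are identical.
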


\paragraph{MSE for plug-in empirical risk minimization.} The main result of this section applies the latter lemma with $u=v=2$, $\sigma=\lambda$ and with $\psi_t^{*,n}=\arginf_{\psi_t\in \Psi_t^n} \|\psi_t - \psi_t^*\|_{2,2}$, so as to get a guarantee on the mean-squared-error of the heterogeneous structural parameters, as a function of the statistical complexity of the function spaces $\{\Psi_t^n\}_{t=1}^m$ and their bias with respect to the true heterogeneous effect parameters. 
Note that in this case, $c_{t,j} := \sup_{x_0\in \mcX_0} \|\E[C_{t,j} \mid X_0=x_0]\|_{op}$ and $M:=\sup_{t\in [m], x_0\in \mcX_0, \psi_t\in \Psi_t} \|\psi_t(x_0)\|_2$ and the convergence rates on pairs of nuisances will be with respect to the quantity $\|\|\Delta(\hat{f},\hat{g})\|_{op}\|_2$. 

We will combine the conclusion of Lemma~\ref{lem:general} with a plug-in excess risk bound for the case when the estimate $\hat{\psi}$ is produced via running empirical risk minimization in the second stage of the Heterogeneous Dynamic DML algorithm, as described in Algorithm~\ref{alg:hetero-sr}. To state the theorem we will use the notion of the critical radius, which is a measure of statistical complexity of a function space. For any function space $\mcF$, with functions having range in $[-1,1]$, we consider the localized Rademacher complexity as:
\begin{align}
    R_n(\mcF; \delta) = \E_{\epsilon_{1:n}, X_{1:n}}\left[\sup_{f\in \mcF: \|f\|_{2}\leq \delta} \frac{1}{n} \sum_{i=1}^n \epsilon_i f(X_i)\right]
\end{align}
We denote as the critical radius $\delta_n>0$ of $\mcF$, any solution to the inequality:
\begin{align}
R_n(\mcF; \delta) \leq \delta^2
\end{align}
For each function space $\Psi_t$, we denote with $\Psi_{t,i}$ the marginal function space corresponding to the $i$-th coordinate output of the functions in $\Psi_t$. Moreover, we denote with $\Psi_{t,i}-\psi_{t,i}^*=\{\psi_{t,i} - \psi_{t,i}^*: \psi_{t,i}\in \Psi_{t,i}\}$. Finally, we define the star hull of a function space as: $\text{star}(\mcF) := \{\tau f: f\in \mcF, \tau\in [0,1]\}$. The critical radius is a well-established concept in modern statistical learning theory and has been characterized for many function spaces. Moreover, for many function spaces it yields minimax optimal statistical learning rates. See \citep{wainwright_2019} for an overview.

\begin{theorem}[MSE Rate for Dynamic RLearner]\label{thm:hetero-thm}
Suppose that all random variables and functions are bounded. Let:
\begin{align}
    \textsc{bias}_n = \max_{t=1}^m \inf_{\psi_t \in \Psi_t^n} \|\psi_t - \psi_t^*\|_{2,2}
\end{align}
and let $\psi_t^{*,n} = \arginf_{\psi_t \in \Psi_t^n} \|\psi_t - \psi_t^*\|_{2,2}$. Let $\delta_n$ be an upper bound on the critical radius of the star hull of all the function spaces $\{\Psi_{t,i}^n - \psi_{t,i}^{*,n}\}_{t\in [m], i\in [r]}$ and $\delta_n=\Omega\left(\sqrt{\frac{r\log\log(n)}{n}}\right)$. Suppose that the quantities $m, \lambda, \{c_{t,j}\}_{1\leq j\leq t\leq m}, M$ (as defined in Lemma~\ref{lem:general}, for $u=v=2$) are constants independent of $n$ and that the nuisance estimates of both splits satisfy:
\begin{align}
    \max_{1\leq t \leq j \leq m} \left\{\E\left[\left\|\|\Delta(\hat{p}_{t,t}, \hat{p}_{j,t})\|_{op}\right\|_{2}^2\right], \E\left[\left\|\|\Delta(\hat{p}_{t,t}, \hat{q}_t)\|_{op}\right\|_{2}^2\right]\right\} =~& O(r^2 \delta_{n/2}^2 + \textsc{bias}_n^2)
\end{align}
Then the output of Algorithm~\ref{alg:hetero-sr} satisfies:
\begin{align}
    \max_{t\in [m]} \E[\|\hat{\psi}_t-\psi_t^*\|_{2,2}^2] = O\left( r^2\, \delta_{n/2}^2 + \textsc{bias}_n^2 \right)
\end{align}
\end{theorem}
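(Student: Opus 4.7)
The plan is to specialize Lemma \ref{lem:general} to $u=v=2$, $\sigma=\lambda$, and $\psi_t^{*,n}:=\arginf_{\psi_t\in\Psi_t^n}\|\psi_t-\psi_t^*\|_{2,2}$, combine it with a localized Rademacher excess risk bound for the stage-$t$ empirical risk minimizer, and close the argument by a backward induction on $t\in\{m,m-1,\ldots,1\}$. Under that specialization the $-\tfrac{\sigma}{4}\|\nu_t\|_{u,v}^2$ term in Lemma \ref{lem:general} absorbs half of the $\tfrac{\lambda}{2}\|\nu_t\|_{2,2}^2$ on the left, producing
\begin{align*}
\tfrac{\lambda}{4}\|\hat\psi_t - \psi_t^{*,n}\|_{2,2}^2 \;\leq\; \epsilon(\psi_t^{*,n},\hat{\underline{\psi}}_{t+1},\hat h) \;+\; C\Bigl(\|b_t\|_{2,2}^2 + \rho_{t,2,2}(\hat h)^2 + \bigl(\textstyle\sum_{j>t} c_{t,j}\|\hat\psi_j - \psi_j^*\|_{2,2}\bigr)^2\Bigr),
\end{align*}
for a constant $C$ depending only on $\lambda$. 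By definition of $\psi_t^{*,n}$ we have $\|b_t\|_{2,2}\leq \textsc{bias}_n$, and unrolling $\rho_{t,2,2}$ then applying Cauchy-Schwarz to its $m$-term sum, together with the hypothesized product rates for the nuisance cross-terms and the boundedness of $m$ and $M$, gives $\E[\rho_{t,2,2}(\hat h)^2] = O(r^2\delta_{n/2}^2 + \textsc{bias}_n^2)$.

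For the plug-in excess risk $\epsilon(\psi_t^{*,n},\hat{\underline{\psi}}_{t+1},\hat h)$, I condition on the first-stage data and on $\hat{\underline{\psi}}_{t+1}$, reducing stage $t$ to a square-loss ERM with bounded responses $\tilde Y_t^i - \sum_{j>t}\hat\psi_j(X_0^i)'\tilde T_{j,t}^i$ and bounded regressors $\tilde T_{t,t}^i$, and parameter ranging in $\Psi_t^n$. By coordinate-wise contraction and boundedness of $\tilde T_{t,t}$, the excess-risk class inherits a critical radius controlled by the critical radii of the star hulls $\shull(\Psi_{t,i}^n - \psi_{t,i}^{*,n})$, each at most $\delta_n$ by hypothesis. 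Aggregating over the $r$ output coordinates and applying the standard localized Rademacher ERM guarantee (e.g.\ Theorem 14.20 of \citep{wainwright_2019}) yields, with high probability and hence in expectation (the lower bound $\delta_n = \Omega(\sqrt{r\log\log n/n})$ absorbs the tail term), $\epsilon(\psi_t^{*,n},\hat{\underline{\psi}}_{t+1},\hat h) = O(r^2\delta_n^2) = O(r^2\delta_{n/2}^2)$, since $\delta_{n/2}\geq \delta_n$.

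The backward induction then starts at $t=m$, where the recursive sum is empty and the three previous bounds immediately give $\E\|\hat\psi_m - \psi_m^{*,n}\|_{2,2}^2 = O(r^2\delta_{n/2}^2 + \textsc{bias}_n^2)$, whence by the triangle inequality with $b_m$, also $\E\|\hat\psi_m - \psi_m^*\|_{2,2}^2$ is of the same order. Inductively, if this rate holds for all $j>t$, then by Cauchy-Schwarz $(\sum_{j>t}c_{t,j}\|\hat\psi_j-\psi_j^*\|_{2,2})^2 \leq m\sum_{j>t} c_{t,j}^2\|\hat\psi_j-\psi_j^*\|_{2,2}^2$, which with $m$ and $c_{t,j}$ constant is still $O(r^2\delta_{n/2}^2 + \textsc{bias}_n^2)$. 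Substituting closes the induction and yields the claimed rate.

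The main obstacle is the excess risk bound at stage $t$: unlike in the fresh-sample first stage, the samples used to fit $\hat h$ and $\hat{\underline{\psi}}_{t+1}$ coincide with those used for stage-$t$ ERM, so the effective targets and residuals are data-dependent. Neyman orthogonality, already encoded via $\rho_{t,2,2}$ in Lemma \ref{lem:general}, controls the nuisance-induced bias as a second-order product, so the remaining difficulty is to treat $\hat{\underline{\psi}}_{t+1}$ as a data-dependent offset on a high-probability event and invoke the localized Rademacher bound uniformly on that event. The star-hull assumption together with the slight inflation $\delta_n = \Omega(\sqrt{r\log\log n/n})$ are precisely what make this uniform control quantitatively tight.
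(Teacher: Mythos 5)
Your overall architecture matches the paper's: specialize Lemma~\ref{lem:general} with $u=v=2$, $\sigma=\lambda$, bound the bias and nuisance terms, control the plug-in excess risk via localized Rademacher complexity, and close with a backward induction. However, there is a genuine gap at the step you yourself flag as ``the main obstacle'' and then do not resolve: the excess-risk bound at stage $t$. You propose to condition on $\hat{\underline{\psi}}_{t+1}$ and apply a standard localized Rademacher ERM guarantee to the resulting square-loss problem. But the final stage of Algorithm~\ref{alg:hetero-sr} uses all of $S\cup S'$, so $\hat{\underline{\psi}}_{t+1}$ is a function of the very samples entering the stage-$t$ empirical risk; conditioning on it destroys the i.i.d.\ structure that the off-the-shelf ERM bound requires, and ``invoking the bound uniformly on a high-probability event'' is not a proof --- you must say over which class the uniformity is taken and why the critical-radius hypothesis controls it. The paper's resolution is to take the concentration uniformly over the \emph{joint} class of all later-stage parameters: it centers the loss at $(\psi_t^{*,n},\underline{\psi}_{t+1})$, observes that the centered loss is Lipschitz in the whole vector $\underline{\psi}_t(X_0)=(\psi_t(X_0),\ldots,\psi_m(X_0))$, and applies a uniform concentration result (Lemma~11 of \citep{foster2019orthogonal}) to obtain
\begin{align}
\left|\loss_{S,t}^*(\hat{\underline{\psi}}_{t},\hat{h}_{S'})-\loss_{D,t}^*(\hat{\underline{\psi}}_{t},\hat{h}_{S'})\right| \le c_0\left(r\,m\,\delta_{n/2,\zeta}\,\|\hat{\underline{\psi}}_{t}-\underline{\psi}_{t}^{*,n}\|_{2,2}+r\,m\,\delta_{n/2,\zeta}^2\right).
\end{align}
This is qualitatively different from your claimed $\epsilon=O(r^2\delta_n^2)$: the resulting excess-risk bound is \emph{not} a free-standing $O(r^2\delta_{n/2}^2)$ but carries the cross term in $\|\hat{\underline{\psi}}_{t}-\underline{\psi}_{t}^{*,n}\|_{2,2}$, which includes the errors $\|\nu_j\|_{2,2}$ for all $j\ge t$. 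That term must be split by AM-GM and absorbed into the same backward recursion that handles the $c_{t,j}$ terms, which is where the care in choosing the AM-GM parameter $\epsilon$ relative to $\lambda$ and $m$ enters. Your induction, which treats the excess risk as already resolved to $O(r^2\delta_{n/2}^2)$ before the recursion begins, skips exactly this coupling.

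Two smaller points: the remainder of your argument (bounding $\|b_t\|_{2,2}$ by $\textsc{bias}_n$, bounding $\E[\rho_{t,2,2}(\hat h)^2]$ via the hypothesized product rates, and the Cauchy--Schwarz step in the induction) is consistent with the paper. Also, your passage from a high-probability bound to a bound in expectation needs the explicit tail-integration step the paper performs (using the $\delta_{n,\zeta}=\delta_n+O(\sqrt{\log(m/\zeta)/n})$ form of the concentration), rather than being absorbed solely by the lower bound $\delta_n=\Omega(\sqrt{r\log\log(n)/n})$.
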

Analogous results hold with high probability and exponential tail, if we make such exponential tail assumptions also on the guarantees provided by the nuisance functions. We omit them for succinctness.

\paragraph{Partial double robustness.} Note that most conditions on the nuisance functions have a doubly robust flavor, i.e. we need the product of two different nuisance function errors to be small. In particular, observe that by Jensen's inequality, for any $f, g$, we have that:\footnote{Since: $\E\left[\E[\|\hat{f}(Z_f) - f^*(Z_f)\|_2 \|\hat{g}(Z_g) - g^*(Z_g)\|_2\mid X_0]^2\right] \leq \E\left[\|\hat{f}(Z_f) - f^*(Z_f)\|_2^2 \|\hat{g}(Z_g) - g^*(Z_g)\|_2^2\right]$}
\begin{align}\label{eqn:holder-delta}
    \left\|\|\Delta(\hat{f}, \hat{g})\|_{op}\right\|_{2}^2 \leq \|\hat{f}-f^*\|_{2,4}^2\|\hat{g}-g^*\|_{2,4}^2
\end{align}
while when $X_0$ is the empty set (i.e. no heterogeneity), then we have:
\begin{align}
    \left\|\|\Delta(\hat{f}, \hat{g})\|_{op}\right\|_{2}^2\leq \|\hat{f}-f^*\|_{2,2}^2 \|\hat{g}-g^*\|_{2,2}^2
\end{align}
in which case the conditions on the nuisance estimates boil down to the same as those in Theorem~\ref{thm:mse} in the expository section.
Thus we need that either one or the other nuisance to be modeled and estimated accurately. The only exception is the functions $p_{t,t}$, which also need to satisfy that: $\|\hat{p}_{t,t} - p_{t,t}^*\|_{2,4}^4 = o_p(\delta_{n/2}^2)$. Thus one step ahead treatment propensities, need to be more accurately estimated than the remainder of the nuisance functions.

\paragraph{Alternative norm bounds.} We note that when the $\ell_{2,2}$ norm of $\hat{\psi}_t-\psi_t^*\in \Psi_t$ is lower bounded by some fraction of its $\ell_{2,\infty}$ norm, i.e. the sup norm over $X_0$ (e.g. if $\psi_t$ is a linear class and $\E[X_0\, X_0']\succeq \mu I$, a special case of which is when there is no heterogeneity), then invoking Lemma~\ref{lem:general} in the proof of Theorem~\ref{thm:hetero-thm} with $v=\infty$ instead of $v=2$, we can get a result of the form:
\begin{align}
    \max_{t\in [m]} \E[\|\hat{\psi}_t-\psi_t^*\|_{2,\infty}^2] = O\left( r^2\, \delta_{n}^2 \right)
\end{align}
subject to a weaker $\ell_{1}$ norm convergence for the products of the nuisance components:
\begin{align}
    \max_{1\leq t \leq j \leq m} \max\left\{\E\left[\left\|\|\Delta(\hat{p}_{t,t}, \hat{p}_{j,t})\|_{op}\right\|_{1}^2\right], \E\left[\left\|\|\Delta(\hat{p}_{t,t}, \hat{q}_t)\|_{op}\right\|_{1}^2\right]\right\} =~& O(r^2 \delta_{n/2}^2)
\end{align}
By applying a Holder inequality,\footnote{Since: $\left\|\|\Delta(\hat{p}_{t,t}, \hat{p}_{j,t})\|_{op}\right\|_{1}\leq \E\left[\|\hat{f}(Z_f) - f^*(Z_f)\|_2 \|\hat{g}(Z_g) - g^*(Z_g)\|_2\right]\leq \|\hat{f}-f^*\|_{2,2}\, \|\hat{g}-g^*\|_{2,2}$} the latter is satisfied whenever:
\begin{align}\label{eqn:suff-ell22}
    \|\hat{p}_{t,t} - p_{t,t}^*\|_{2,2} \left(\|\hat{q}_t - q_t^*\|_{2,2} + \sum_{j=t+1}^m\|\hat{p}_{j,t} - p_{j,t}^*\|_{2, 2}\right) = o_p(r\, \delta_{n/2})
\end{align}
Recovering again qualitatively the same norm convergence conditions as in Theorem~\ref{thm:mse}. When $\Psi_t$ is a parametric class with a bounded domain of parameters, then $\delta_n=O\left(n^{-1/2}\right)$, in which case, the latter requirement is satisfied if each nuisance function $f\in \{\hat{p}_{j,t}, \hat{q}_t\}_{1\leq t\leq j\leq m}$, satisfies that $\|\hat{f}-f^*\|_{2,2}=o_p(n^{-1/4})$, recovering the typical conditions in the Neyman orthogonality literature with parametric target estimands \citep{chernozhukov2018double}. 

For more general function classes $\Psi_t$, where the $\ell_{2,2}$ norm of $\hat{\psi}_t$ is not related to its $\ell_{2,\infty}$ norm and without any further restrictions on the correlations of errors among the nuisance components in each of the product term conditions in Theorem~\ref{thm:hetero-thm}, then with a simple Holder inequality applied to the nuisance constraints, as in Equation~\eqref{eqn:holder-delta} we can still derive a sufficient condition of the same form as in Equation~\eqref{eqn:suff-ell22}, but with the $\ell_{2,2}$ norms replaced by the slightly stronger $\ell_{2,4}$ norms. Hence, it suffices that each nuisance function satisfies $\|\hat{f}-f^*\|_{2,4}=o_p(\sqrt{\delta_n})$, which for parametric classes would be $o_p(n^{-1/4})$. If errors in the nuisance components are un-correlated then an $\ell_{2,2}$ norm convergence would suffice for most nuisances, since each product of nuisance error terms can be upper bounded as:
\begin{align}
\|\|\Delta(\hat{f},\hat{g})\|_{op}\|_2^2\leq \E\left[\|\hat{f}(Z_f)-f^*(Z_f)\|_2^2\, \|\hat{g}(Z_g)-g^*(Z_g)\|_2^2\right] \lesssim \|\hat{f}-f^*\|_{2,2}^2\, \|\hat{g}-g^*\|_{2,2}^2    
\end{align}
with the only $\ell_{2,4}$ norm required for the nuisances $\{p_{t,t}\}_{t=1}^m$, i.e. the one step ahead observational propensity models.

\paragraph{Uniform consistency.} Looking at Lemma~\ref{lem:general}, we note that if the $\ell_{2,2}$ norm of $\hat{\psi}_t-\psi_t^{*,n}$ is lower bounded by some fraction of its $\ell_{2,\infty}$, then a uniform consistency result can be derived. Since both of these functions lie in $\Psi_t^n$, if we choose the classes $\Psi_t^n$, such that their elements satisfy this property and such that as $n$ grows, the function spaces $\Psi_t^n$ uniformly approximate $\Psi_t$, then we can achieve a uniform consistency theorem. Moreover, in this case we can invoke the lemma with $v=\infty$, which would only require the weaker norm guarantees on the nuisances. 

\begin{theorem}[Sieve-Based Uniform Consistency]\label{thm:uniform}
Suppose that all random variables and functions are bounded. For $i\in [\fdim]$ and $t\in [m]$, let $\Psi_{t,i}^n=\{x_0 \to \theta'\alpha_n(x_0): \theta\in \R^{d_n}, \|\theta\|_2\leq 1\}$, for some sequence of $d_n$-dimensional feature maps $\alpha_n$ and
let:
\begin{align}
    \textsc{bias}_{n,\infty} = \max_{t=1}^m \inf_{\psi_t \in \Psi_t^n} \|\psi_t - \psi_t^*\|_{2,\infty}
\end{align}
and assume that for some $\infty > \gamma_n > 0$ and some $\infty > M >0$:
\begin{align}
 \E[a_n(X_0)\, a_n(X_0)'] \succeq~& \gamma_n I &
 \sup_{x_0\in \mcX_0} \|a_n(x_0)\|_2 \leq~& M
\end{align}
Suppose that the quantities $m, \lambda, \{c_{t,j}\}_{1\leq j\leq t\leq m}, M$ (as defined in Lemma~\ref{lem:general}, for $u=2$ and $v=\infty$) are constants independent of $n$. Let $\delta_n = \sqrt{\frac{\max\{d_n, \log\log(n)\}}{n}}$. Suppose that the nuisance estimates of both splits satisfy:
\begin{align}
    \max_{1\leq t \leq j \leq m} \left\{\E\left[\left\|\|\Delta(\hat{p}_{t,t}, \hat{p}_{j,t})\|_{op}\right\|_{1}^2\right], \E\left[\left\|\|\Delta(\hat{p}_{t,t}, \hat{q}_t)\|_{op}\right\|_{1}^2\right]\right\} = O\left(\gamma_n^{-1} r^2\,\delta_{n}^2 + \textsc{bias}_{n,\infty}^2\right)
\end{align}
Then the output of Algorithm~\ref{alg:hetero-sr} satisfies:
\begin{align}
    \max_{t\in [m]} \E[\|\hat{\psi}_t-\psi_t^*\|_{2,\infty}^2] = O\left(\gamma_n^{-2m + 1} \left(r^2\, \frac{\max\{d_n, \log\log(n)\}}{\gamma_n\, n} + \textsc{bias}_{n,\infty}^2\right)\right)
\end{align}
\end{theorem}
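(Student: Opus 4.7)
The plan is to adapt the recursive square-loss analysis of Theorem~\ref{thm:hetero-thm} by invoking Lemma~\ref{lem:general} with the dual pair $(u,v) = (2, \infty)$ (so $\bar u = 2,\,\bar v = 1$), which produces a sup-norm deficit on the left-hand side of the lemma. At each stage $t$, writing $\nu_t = \hat\psi_t - \psi_t^{*,n}$, $\nu_t^* = \hat\psi_t - \psi_t^*$, and $b_t = \psi_t^{*,n} - \psi_t^*$, the lemma yields
\begin{align*}
\tfrac{\lambda}{2}\|\nu_t\|_{2,2}^2 - \tfrac{\sigma}{4}\|\nu_t\|_{2,\infty}^2 \;\le\; \epsilon_t + \tfrac{4 c_{t,t}}{\sigma}\|b_t\|_{2,1}^2 + \tfrac{4}{\sigma}\rho_{t,2,\infty}(\hat h)^2 + \tfrac{2}{\sigma}\Bigl(\sum_{j>t} c_{t,j}\|\nu_j^*\|_{2,1}\Bigr)^2
\end{align*}
with $\epsilon_t$ the stage-$t$ ERM excess risk at the plugged-in nuisances. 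The key new ingredient, compared with Theorem~\ref{thm:hetero-thm}, is a \emph{sieve norm-equivalence}: since $\hat\psi_t,\psi_t^{*,n}\in\Psi_t^n$, we may write $\nu_t(x_0) = \Xi \alpha_n(x_0)$ with $\Xi\in\R^{r\times d_n}$ having rows of $\ell_2$-norm at most $2$, and the hypotheses $\E[\alpha_n\alpha_n']\succeq \gamma_n I$ and $\sup_{x_0}\|\alpha_n(x_0)\|_2 \le M$ give $\|\nu_t\|_{2,2}^2 \ge \gamma_n\|\Xi\|_F^2$ while $\|\nu_t\|_{2,\infty}^2 \le M^2\|\Xi\|_F^2$, hence the equivalence $\|\nu_t\|_{2,2}^2 \ge (\gamma_n/M^2)\|\nu_t\|_{2,\infty}^2$.

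Choosing $\sigma = \gamma_n \lambda/(2M^2)$ then turns the display into the recursive contraction
\begin{align*}
\|\nu_t\|_{2,\infty}^2 \;\le\; \tfrac{C}{\gamma_n}\Bigl(\epsilon_t + \rho_{t,2,\infty}(\hat h)^2 + \|b_t\|_{2,\infty}^2 + \Bigl(\sum_{j>t} c_{t,j}\|\nu_j^*\|_{2,1}\Bigr)^2\Bigr),
\end{align*}
with $C$ depending only on $M,\lambda$ and the (assumed constant) $c_{t,j}$. I would then bound each right-hand term separately. The ERM excess risk $\epsilon_t$ is controlled via the localized Rademacher machinery applied to the $r$-dimensional linear sieve $\Psi_t^n$, whose coordinate star-hulls each have critical radius of order $\delta_n = \sqrt{\max\{d_n,\log\log n\}/n}$ by the standard calculation for bounded $d_n$-dimensional linear classes; the usual orthogonal-learning argument (in the same spirit as the proof of Theorem~\ref{thm:hetero-thm}) then gives $\E[\epsilon_t] = O(r^2 \delta_n^2)$ plus a nuisance/previous-stage perturbation that is absorbed into the $\rho$ and $\nu_j^*$ slots by Neyman orthogonality. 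The nuisance assumption directly supplies $\E[\rho_{t,2,\infty}(\hat h)^2] = O(\gamma_n^{-1} r^2 \delta_n^2 + \textsc{bias}_{n,\infty}^2)$, while triangle inequality gives $\|b_t\|_{2,\infty} \le \textsc{bias}_{n,\infty}$ and $\|\nu_j^*\|_{2,1} \le \|\nu_j^*\|_{2,\infty} \le \|\nu_j\|_{2,\infty} + \textsc{bias}_{n,\infty}$.

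Collecting these bounds and applying Cauchy--Schwarz to the inner sum yields the backward recursion $A_t \le C'\gamma_n^{-1}(E + \sum_{j>t} A_j)$ for $A_t := \E[\|\nu_t\|_{2,\infty}^2]$ and $E := \gamma_n^{-1} r^2 \delta_n^2 + \textsc{bias}_{n,\infty}^2$. Unrolling from $t=m$ downwards, with $D = C'\gamma_n^{-1}$, a direct induction shows $A_{m-k} = D E(1+D)^k$, which for $\gamma_n\le 1$ is dominated by the stated rate $\gamma_n^{-2(m-t)-1}E \le \gamma_n^{-2m+1} E$ after absorbing polynomial prefactors into the constant. Concluding with the triangle inequality $\|\nu_t^*\|_{2,\infty} \le \|\nu_t\|_{2,\infty} + \textsc{bias}_{n,\infty}$ delivers the final sup-norm rate on $\hat\psi_t - \psi_t^*$.

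The main obstacle will be the excess-risk analysis at stage $t$: one must verify that, even when the regression targets are corrupted by the random first-stage nuisance estimates $\hat h$ and by the previously computed parameter estimates $\hat{\underline\psi}_{t+1}$, the ERM over the linear sieve $\Psi_t^n$ still achieves excess risk of order $r^2\delta_n^2$ up to additive terms that can be absorbed into the nuisance and propagation slots of Lemma~\ref{lem:general}. This decomposition rests on Neyman orthogonality at the population level and on the $\log\log(n)/n$ floor in $\delta_n$, which is precisely what secures uniform concentration over the star-hulls; the required ingredients are supplied by the orthogonal statistical learning toolkit of \citep{foster2019orthogonal,chernozhukov2018plugin}. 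A secondary technical subtlety is that $\psi_t^*$ need not lie in $\Psi_t^n$, so the approximation error $\textsc{bias}_{n,\infty}$ must be carried through the recursion as a separate term rather than folded into $\nu_t$, which explains its appearance alongside (but distinct from) the critical-radius contribution in the final bound.
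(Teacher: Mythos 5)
Your proposal follows essentially the same route as the paper's proof: invoke Lemma~\ref{lem:general} with $(u,v)=(2,\infty)$ and $\psi_t^{*,n}=\arg\inf_{\psi_t\in\Psi_t^n}\|\psi_t-\psi_t^*\|_{2,\infty}$, establish the sieve norm equivalence $\|\nu_t\|_{2,2}^2\ge (\gamma_n/M^2)\|\nu_t\|_{2,\infty}^2$ from $\E[\alpha_n\alpha_n']\succeq\gamma_n I$ and $\sup_{x_0}\|\alpha_n(x_0)\|_2\le M$, set $\sigma\asymp\lambda\gamma_n/M^2$, bound the ERM excess risk by the critical radius $M\sqrt{d_n/n}+\sqrt{\log\log(n)/n}$ of the bounded linear sieve, and unroll the resulting backward recursion. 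All of these ingredients match the paper's argument.

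There is, however, a bookkeeping slip in your contraction display that would make the recursion, as written, fail to follow from the lemma. With $\sigma\asymp\gamma_n$, the left-hand side of Lemma~\ref{lem:general} lower-bounds $c\,\gamma_n\|\nu_t\|_{2,\infty}^2$, while the terms $\rho_{t,2,\infty}(\hat h)^2$, the bias term, and the propagation term $\bigl(\sum_{j>t}c_{t,j}\|\nu_j^*\|_{2,1}\bigr)^2$ each already carry a prefactor $1/\sigma=O(\gamma_n^{-1})$ on the right-hand side. Dividing through therefore puts $O(\gamma_n^{-2})$ — not $O(\gamma_n^{-1})$ — in front of those terms; only the excess-risk slot $\epsilon_t$ picks up a single $\gamma_n^{-1}$. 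Your recursion $A_t\le C'\gamma_n^{-1}(E+\sum_{j>t}A_j)$ thus understates the per-step contraction factor, which is $O(\gamma_n^{-2})$, and your unrolled bound $\gamma_n^{-(m-t)-1}E$ is strictly stronger than what the argument delivers. The correct accounting gives a geometric factor $c_*=O(\gamma_n^{-2(m-1)})$ times $\lambda_*^{-1}=O(\gamma_n^{-1})$, i.e.\ exactly the $\gamma_n^{-2m+1}$ in the theorem multiplying the per-step error $\gamma_n^{-1}r^2\delta_n^2+\textsc{bias}_{n,\infty}^2$; the exponent in the statement is not slack to be absorbed but is forced by this double power of $\gamma_n^{-1}$. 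Once you restore that factor, your argument coincides with the paper's.
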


\section{Blip Model Selection via High Dimensional Sparse Linear Blip Functions}

The results we have discussed so far assume that the linear feature map that parameterizes the blip functions is low dimensional, i.e. $r\ll n$. Observe for instance that Theorem~\ref{thm:hetero-thm} is vacuous when $r=\Omega(n)$. In this section we examine the case where $r\gg n$ and provide guarantees under sparsity conditions on the true structural parameters. For simplicity, we will not consider non-parametric heterogeneity of the sparse coefficients with respect to some initial state $X_0$, i.e. we consider blip functions of the form: $\gamma_t(\bar{X}_t, \bar{T}_t;\psi_t)=\psi_t'\phi(\bar{X}_t,\bar{T}_t)$, with $\psi_t\in \R^r$ and $r\gg n$. However, we note that here the high-dimensionality of the feature map already offers a lot of modelling flexibility and one could encode heterogeneity of the blip effect through the feature map. 

Apart from the explicit dependence on $r$, when the feature map is high dimensional then the $\ell_{2,2}$ norm of the errors of the nuisance functions in Theorem~\ref{thm:hetero-thm} can accumulate across their $r$-dimensional components. Instead, we would ideally only require a bound on the maximum error across the $r$ dimensions of each nuisance function. Then an exponential tail bound on the MSE of each coordinate would imply a bound on the maximum that scales only logarithmically with $r$. To achieve this we can invoke Lemma~\ref{lem:general} with $u=1, v=\infty$ and $\psi^{*,n}=\psi^*$ and $X_0$ an empty set. Then we get the following recursive bound on the MSE of the heterogeneous structural parameters:
\begin{align}
    \frac{\lambda}{2} \|\nu_t^*\|_{2}^2 - \frac{\sigma}{4} \|\nu_t^*\|_{1}^2\leq~& \epsilon(\psi_t^{*,n},\hat{\underline{\psi}}_{t}, \hat{h}) + \frac{4\rho_{t,1,\infty}(\hat{h})^2}{\sigma} + \frac{2}{\sigma} \left(\sum_{j=t+1}^m c_{t,j}\, \|\nu_j^*\|_{1}\right)^2\\
    \rho_{t,1,\infty}(\hat{h}) :=~& \|\Delta(\hat{p}_{t}, \hat{q}_t)\|_{\infty}
    + 3 M \sum_{j=t}^m  \|\Delta(\hat{p}_{t}, \hat{p}_{j,t})\|_{\infty}
\end{align}
where $c_{t,j} := \|\E[C_{t,j}]\|_{\infty}$ and $M:=\sup_{t\in [m], \psi_t\in \Psi_t} \|\psi_t\|_1$ and we used the short-hand norm notation $\|A\|_{\infty}=\max_{i,j} |A_{i,j}| = \|A\|_{\infty,1}$.

However, we see that we incur a dependency on the $\ell_{1}$ norm of the error $\hat{\psi}_t - \psi_t^*$. Thus we need to be able to relate the $\ell_{1}$ with the $\ell_{2}$ norm of the error of our estimate. This is a restricted cone property on our estimate. We will thus invoke sparsity assumptions on the true parameter $\psi_t^*$ and augment the empirical risk minimization step with an $\ell_1$ penalty on $\psi_t$. This enforces the estimate to be primarily supported on the $s$ relevant dimensions. Within such a restricted cone the $\ell_1$ and the $\ell_2$ norm are equivalent with a constant that depends only on the sparsity level and not the ambient dimension.

Furthermore, the explicit dependence in Theorem~\ref{thm:hetero-thm} also stems from the fact that we invoked a multi-dimensional contraction inequality, across the $r$ output dimensions of $\psi_t$. However, observe that the loss function depends on each $\psi_t$ only through single indices of the form $\psi_t'\tilde{T}_{j,t}$. Thus we could instead control the critical radius of these $m$ index spaces at each iteration of the recursion, which would avoid the explicit dependence on $r$. Together these insights yield the following theorem.

\begin{algorithm}[htpb]
   \caption{Sparse Linear Dynamic DML for Structural Nested Mean Models (SNMMs)}
   \label{alg:sparse-linear}
\begin{algorithmic}
   \STATE {\bf Input:} A data set of $n$ samples of $m$-length paths: $\left\{\left(X_0^i, X_1^i, T_1^i, \ldots, X_m^i, T_m^i, Y^i\right)\right\}_{i=1}^n$
   \STATE {\bf Input:} A target dynamic policy $\pol$
   \STATE {\bf Input:} $\fdim$-dimensional feature vector maps $\{\phi_t\}_{t=1}^m$, which parameterize the blip functions
   \STATE Execute first stage identically to Algorithm~\ref{alg:snmms}
   \STATE Using all the data $S\cup S'$
   \FOR{$t=m$ {\bfseries down to} 1}
        \STATE Estimate heterogeneous structural function $\hat{\psi}_t$ by minimizing the $\ell_1$-penalized square loss:
        \begin{equation}
            \hat{\psi}_{t} := \argmin_{\psi_{t}\in \R^r: \|\psi_t\|_1\leq M} \frac{1}{n} \sum_{i=1}^n \left(\tilde{Y}_{t}- \sum_{j=t+1}^m \hat{\psi}_{j}' \tilde{T}_{j, t} - \psi_{t}'\tilde{T}_{t,t}^{i}\right)^2 + \kappa_t \|\psi_t\|_1
        \end{equation}
   \ENDFOR
   \STATE {\bf Return:} Structural function estimates $\hat{\psi} = (\hat{\psi}_1, \ldots, \hat{\psi}_m)$ of $\psi^*=(\psi_1^*, \ldots, \psi_m^*)$
\end{algorithmic}
\end{algorithm}

\begin{theorem}[$\ell_2$-Error Rate for Sparse Linear Dynamic DML]\label{thm:sparse-linear}
Suppose that $\psi_t^*$ have only $s$ non-zero coefficients and that:
\begin{align}
\E[\Cov(Q_t,Q_t\mid \bar{X}_t,\bar{T}_{t-1})]\succeq \lambda I \tag{average positivity}    
\end{align}
Let $c_0, c_1, c_3, c_4$ be sufficiently large universal constants. Then there exists a sequence of regularization levels (see proof for construction) $\kappa_1,\ldots, \kappa_m$, such that for $n\geq c_2\frac{s^2 \log(m\, r/\zeta)}{\lambda}$, the estimate output by Algorithm~\ref{alg:sparse-linear}, satisfies w.p. $1-c_0\, m\, \zeta$:
\begin{align}
    \forall t\in [m]: \|\hat{\psi}_t-\psi_t^*\|_2 \leq \|\hat{\psi}_t-\psi_t^*\|_1\leq c_1 \left(\frac{s}{\lambda}  \left(\sqrt{\frac{\log(m\, r/\zeta)}{n}} + \epsilon_n\right) \frac{c_n^{m-t + 1} - 1}{c_n - 1}\right)
\end{align}
where:
\begin{align}
    \epsilon_n :=~& \max_{m\geq j\geq t\geq 1, i,i'\in [r]} \max\left\{\|\hat{p}_{t,i} - p_{t,i}^*\|_2\, \|\hat{q}_{t,i'} - q_{t,i'}^*\|_2, M\,m \|\hat{p}_{t,i} - p_{t,i}^*\|_2\, \|\hat{p}_{j,t,i'} - p_{j,t,i'}^*\|_2\right\}\\
     c_n :=~& c_4\,\frac{s}{\lambda} \max_{t\in [m]} \sum_{j=t+1}^m \left(\|\E[\Cov(Q_{t,t}, Q_{t,j}\mid \bar{X}_t, \bar{T}_{t-1})]\|_{\infty} + \sqrt{\frac{\log(m\, r/\zeta)}{n}}\right)
\end{align}
\end{theorem}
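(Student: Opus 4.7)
The plan is to instantiate Lemma~\ref{lem:general} with $u=1$, $v=\infty$, no heterogeneity (treat $X_0$ as empty), and $\Psi_t^n=\R^r$ so that $\psi_t^{*,n}=\psi_t^*$ and the bias term vanishes. This yields the recursive inequality already displayed in the paper just above the theorem,
\begin{align*}
\tfrac{\lambda}{2}\|\nu_t^*\|_2^2 - \tfrac{\sigma}{4}\|\nu_t^*\|_1^2 \;\leq\; \epsilon(\psi_t^*,\hat{\underline{\psi}}_{t+1},\hat h) \;+\; \tfrac{4}{\sigma}\rho_{t,1,\infty}(\hat h)^2 \;+\; \tfrac{2}{\sigma}\Bigl(\textstyle\sum_{j>t} c_{t,j}\|\nu_j^*\|_1\Bigr)^2,
\end{align*}
where $\nu_t^*=\hat\psi_t-\psi_t^*$ and $c_{t,j}=\|\E[C_{t,j}]\|_\infty$. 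To turn this into a sparse-recovery bound I will (i) bound the empirical excess risk $\epsilon$ for the $\ell_1$-penalized minimizer via a standard basic-inequality argument, (ii) invoke sparsity through a cone condition so that $\|\nu_t^*\|_1\lesssim\sqrt{s}\|\nu_t^*\|_2$, which lets me absorb $-(\sigma/4)\|\nu_t^*\|_1^2$ into $(\lambda/2)\|\nu_t^*\|_2^2$ by choosing $\sigma\asymp\lambda/s$, and (iii) recurse in $t$.

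\textbf{Lasso analysis at stage $t$.} Let $\hat f_t(\psi)=\frac{1}{n}\sum_i(\bar Y_t^i-\psi'\tilde T_{t,t}^i)^2$. Optimality of the $\ell_1$-penalized estimator gives $\hat f_t(\hat\psi_t)-\hat f_t(\psi_t^*)+\kappa_t(\|\hat\psi_t\|_1-\|\psi_t^*\|_1)\leq 0$. Writing $S$ for the support of $\psi_t^*$, the standard argument shows that if $\kappa_t\geq 2\|\nabla\hat f_t(\psi_t^*)\|_\infty$ then one obtains both the cone property $\|\nu^*_{t,S^c}\|_1\leq 3\|\nu^*_{t,S}\|_1$ (so $\|\nu_t^*\|_1\leq 4\sqrt{s}\,\|\nu_t^*\|_2$) and an upper bound on the empirical excess risk of order $\kappa_t\|\nu_t^*\|_1$. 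The key step is the $\ell_\infty$ bound on $\nabla\hat f_t(\psi_t^*)$, which I decompose into four pieces: (a) the population Neyman-orthogonal moment at $(\psi^*,h^*)$ which is zero by the SNMM identification argument; (b) a max-Bernstein empirical-process fluctuation of order $\sqrt{\log(mr/\zeta)/n}$ over the $r$ coordinates of $\tilde T_{t,t}$; (c) a nuisance-bias term controlled by $\rho_{t,1,\infty}(\hat h)\lesssim\epsilon_n$; and (d) a propagated-error term $\sum_{j>t}\|\E[C_{t,j}]\|_\infty\,\|\nu_j^*\|_1$ inherited from residualizing with imperfect previous-stage estimates. Setting $\kappa_t$ to a constant multiple of the sum of (b)--(d), constructed backwards starting from $t=m$ using the inductive upper bound on $\|\nu_j^*\|_1$ (rather than its unknown realized value), furnishes the cone condition simultaneously for all $t$ after a union bound, which is where the $\log(m)$ inside $\log(mr/\zeta)$ comes from.

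\textbf{From cone to $\ell_2$ bound.} Within the cone I need an empirical restricted-eigenvalue property for the sample Gram matrix $\hat\Sigma_t=\frac{1}{n}\sum_i\tilde T_{t,t}^i(\tilde T_{t,t}^i)'$. Under the average positivity hypothesis and the sample-size requirement $n\gtrsim s^2\log(mr/\zeta)/\lambda$, Bernstein concentration of $\hat\Sigma_t$ in the $\|\cdot\|_\infty$ entry-wise norm combined with the dual-norm inequality $v'(\hat\Sigma_t-\E\hat\Sigma_t)v\leq\|\hat\Sigma_t-\E\hat\Sigma_t\|_\infty\|v\|_1^2$ yields $v'\hat\Sigma_t v\geq(\lambda/2)\|v\|_2^2$ for every $v$ in the cone. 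This is equivalent to taking $\sigma=c\lambda/s$ in the Lemma~\ref{lem:general} bound and using $\|\nu_t^*\|_1^2\leq 16s\|\nu_t^*\|_2^2$ to absorb the negative term, after which I square-root and convert back to an $\ell_1$ bound via the cone, obtaining
\begin{align*}
\|\nu_t^*\|_1 \;\lesssim\; \tfrac{s}{\lambda}\Bigl(\sqrt{\tfrac{\log(mr/\zeta)}{n}} + \epsilon_n\Bigr) \;+\; \tfrac{s}{\lambda}\sum_{j>t}\bigl(\|\E[C_{t,j}]\|_\infty + \sqrt{\tfrac{\log(mr/\zeta)}{n}}\bigr)\|\nu_j^*\|_1,
\end{align*}
where the extra $\sqrt{\log/n}$ in the coefficient of $\|\nu_j^*\|_1$ comes from having replaced $c_{t,j}$ by its empirical counterpart. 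The coefficient on $\sum_{j>t}\|\nu_j^*\|_1$ matches (up to universal constants) the definition of $c_n$ in the theorem, and backward induction from $t=m$ turns this into the geometric sum $(c_n^{m-t+1}-1)/(c_n-1)$, yielding the stated bound.

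\textbf{Main obstacle.} The delicate point is the joint construction of $(\kappa_t)_{t=1}^m$: because the effective ``noise'' at stage $t$ contains the random quantities $\|\nu_j^*\|_1$ for $j>t$, one cannot simply set $\kappa_t$ equal to the observed gradient. The fix is to set $\kappa_t$ equal to the deterministic high-probability \emph{upper bound} on that gradient coming from the induction hypothesis (which is why the statement only asserts existence of such a sequence), and to prove the cone condition and the $\ell_1$-bound simultaneously inside the induction step. A secondary issue is that the residualized targets $\bar Y_t^i$ depend on the same sample through $\hat\psi_j$ for $j>t$, but since the nuisance estimators $\hat p$ and $\hat q$ are produced via cross-fitting, the relevant concentration statements for the empirical gradient and Gram matrix remain valid after conditioning on the nuisance-training fold, so the sample reuse across stages only shows up through the controlled propagation term and does not degrade the concentration rates.
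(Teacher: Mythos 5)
Your proposal is correct and follows essentially the same route as the paper's proof: the same gradient decomposition (zero population moment via Neyman orthogonality, a $\sqrt{\log(mr/\zeta)/n}$ empirical-process fluctuation, the $\rho_{t,1,\infty}$ nuisance-product term, and the propagated $\sum_{j>t}c_{t,j}\|\nu_j^*\|_1$ term), the same cone condition and entry-wise restricted-eigenvalue argument under $n\gtrsim s^2\log(mr/\zeta)/\lambda$, and the same backward-inductive construction of deterministic $\kappa_t$ from the high-probability bounds on later-stage errors. The paper implements this as a separate cone lemma (Lemma~\ref{lem:cone}) plus a restricted-strong-convexity step, but the substance matches your outline, including the observation that the extra $\sqrt{\log(mr/\zeta)/n}$ inside $c_n$ arises from replacing $\E[\tilde{T}_{t,t}\tilde{T}_{j,t}']$ by its empirical counterpart.
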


\section{Experimental Results}

We consider data drawn from the DGP presented in Equation~\eqref{eqn:maineq}, with a linear observational policy:
\begin{align}
p(T_{t-1}, X_t) := C\, T_{t-1} + D\cdot X_t,
\end{align}
with $X_0, T_0 = 0$ and $\epsilon_t, \zeta_t, \eta_t$ standard normal r.v.'s. (recall that $d$ is the number of treatments and $p$ the number of state variables). We consider the instance where: $A_{ij} = .5$, for all $i\in [p]$, $j\in [d]$, $B = .5\, I_p$, $C = .2\, I_d$, $D[:, 1:2] = .4$, $D[:, 3:p]= 0$, $\mu[1:2] = .8$. We consider settings where the effect is \emph{constant}, i.e. $\theta_0\in \R^d$ or \emph{heterogeneous}, where:
\begin{align}
\theta_0(X)=\theta_0 + \ldot{\beta_0}{X[S]}
\end{align}
for some known low dimensional subset $S$ of the states. 

We compare the dynamic DML to several benchmarks. The results are presented in Figures~\ref{fig:benchmarks}, comparing the estimates of the dynamic DML algorithm to a number of other alternatives on a single instance. 
They fall into two categories.  
In the ``static'' set of approaches, each of the contemporaneous and lag effects is estimated one at a time, either by direct regression or (static) DML.
So for example, to estimate the one period lag effect $\theta_1$, we would regress $Y_t$ on $T_{t-1}$, with controls $X$.
We consider direct regression with no controls (``no-ctrls''), direct and DML with controls from the inital period (i.e $X_{t-1}$, ``init-ctrls'' and ``init-ctrls-dml'') and direct and DML with controls from the same period as the outcome (i.e $X_t$, ``fin-ctrls'' and ``fin-ctrls-dml'').
As an alternative to all of these, we try a ``direct'' dynamic approach, where initially we estimate $\theta_0$ using a lasso regression of $Y_t$ with all the controls and past treatments, and return the coefficient on $T_t$, and then ``peel'' off the estimated effect as in the main text before running another Lasso regression of $Y_t - \theta_0 T_t$ on $T_{t-1}$ to get the first lag effect etc.  
So this approach incorporates the peeling effect but doesn't do any orthogonalization.

The point estimate for $\theta_0$, $\theta_1$ and $\theta_2$ are depicted in the three panels of Figure~\ref{fig:benchmarks} and the error bars correspond to the constructed confidence interval.
For all three, the dynamic DML is relatively close to the truth and the confidence interval contains the truth.
The remaining approaches are not, although for the contemporaneous effect the approaches with final period controls have similar performance - it is really in the lagged effects that the differences become most apparent. Subsequently we run multiple experiments to evaluate the performance of DynamicDML. In each setting, we run $1000$ Monte Carlo experiments, where each experiment draws $N = 500$ samples from the above DGP and then estimated the effects and lag effects based on our Dynamic DML algorithm. Figure~\ref{fig:n500} considers the case of two treatments, and shows that the algorithm performs well in terms of giving reasonable coverage guarantees - for a nominal 95\% coverage, actual coverage varies from 91\% to 94.5\%. The right panel shows that the average estimates are close to the truth.
In Figure~\ref{fig:n2000} we repeat the experiments with $N=2000$, and actual coverage is now tightly in the range 94\% to 95\%, and the average estimates remain relatively unbiased.
We also find qualitatively similar performance for the case of estimating heterogeneous treatment effects and policy values (see Figures\ref{fig:n500hetero} and \ref{fig:n2000hetero}).    

\subsection{Constant Treatment Effects}

\begin{figure}[H]
\centering
\includegraphics[width=320pt]{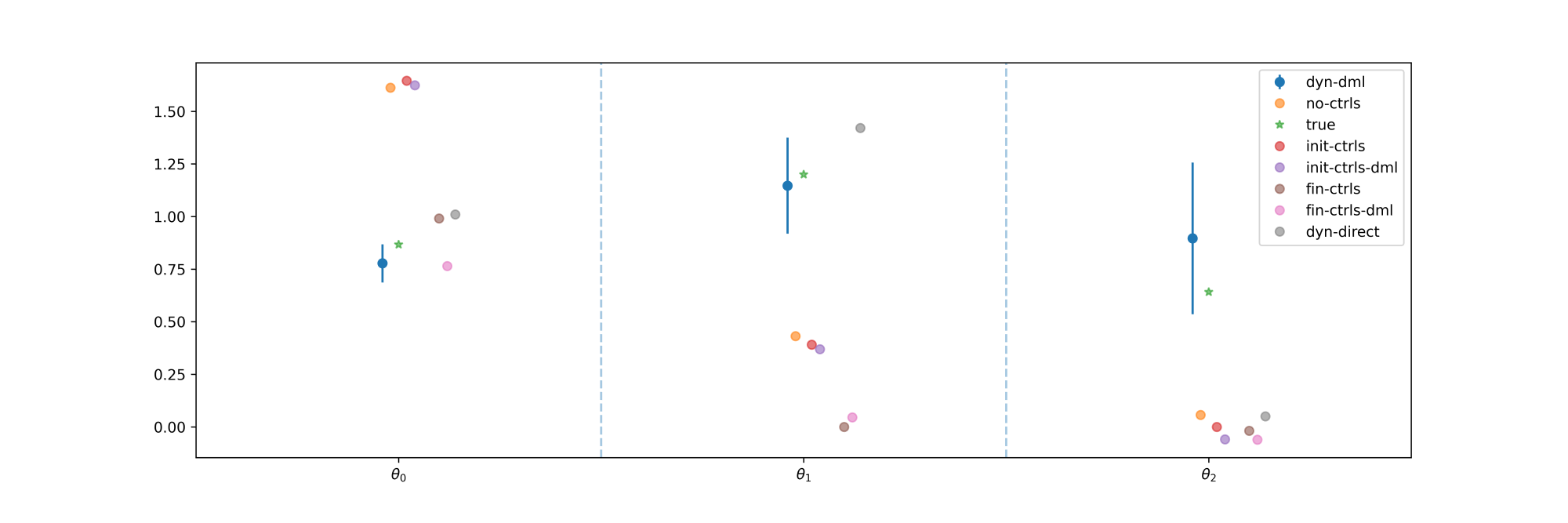}
\caption{Comparison of DynamicDML (with confidence intervals) with benchmarks on a single instance. $n=400$, $n_t=1$, $n_x=100$, $s=10$, $\sigma(\epsilon_t)=.5$, $\sigma(\zeta_t)=.5$, $\sigma(\eta_t)=.5$, $C=0$} 
\label{fig:benchmarks}
\end{figure}

\begin{figure}[H]
\centering
\subfigure[Coverage]{
\includegraphics[width=195pt]{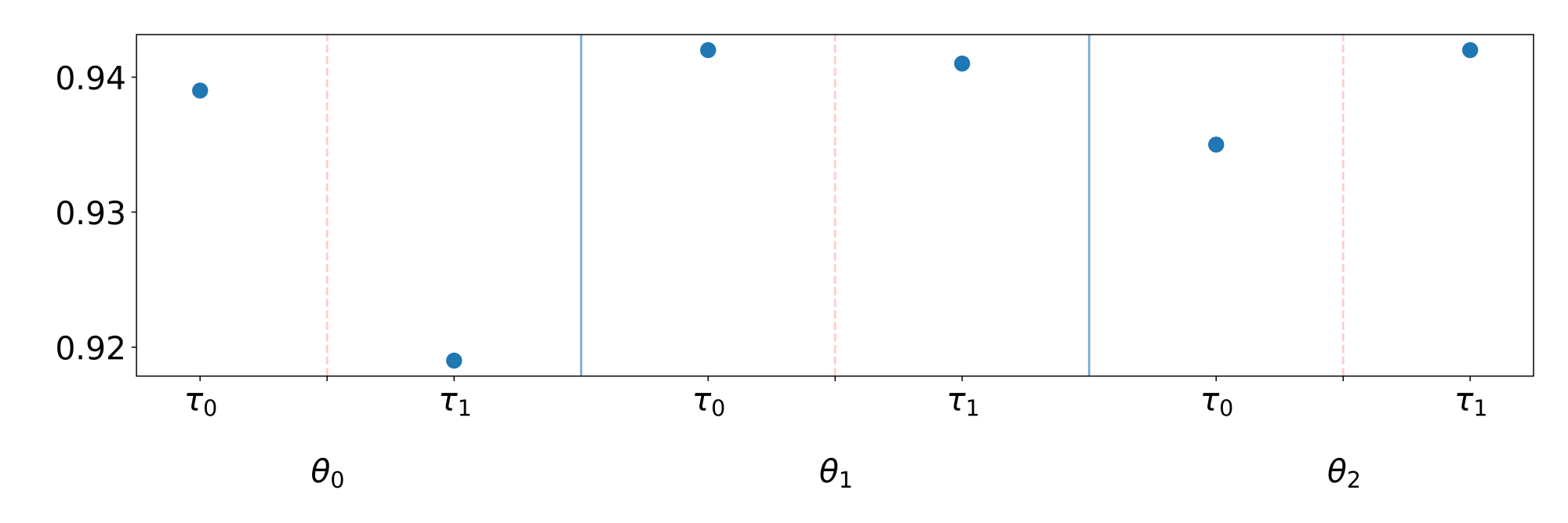}}
\subfigure[Point estimates]{
\includegraphics[width=195pt]{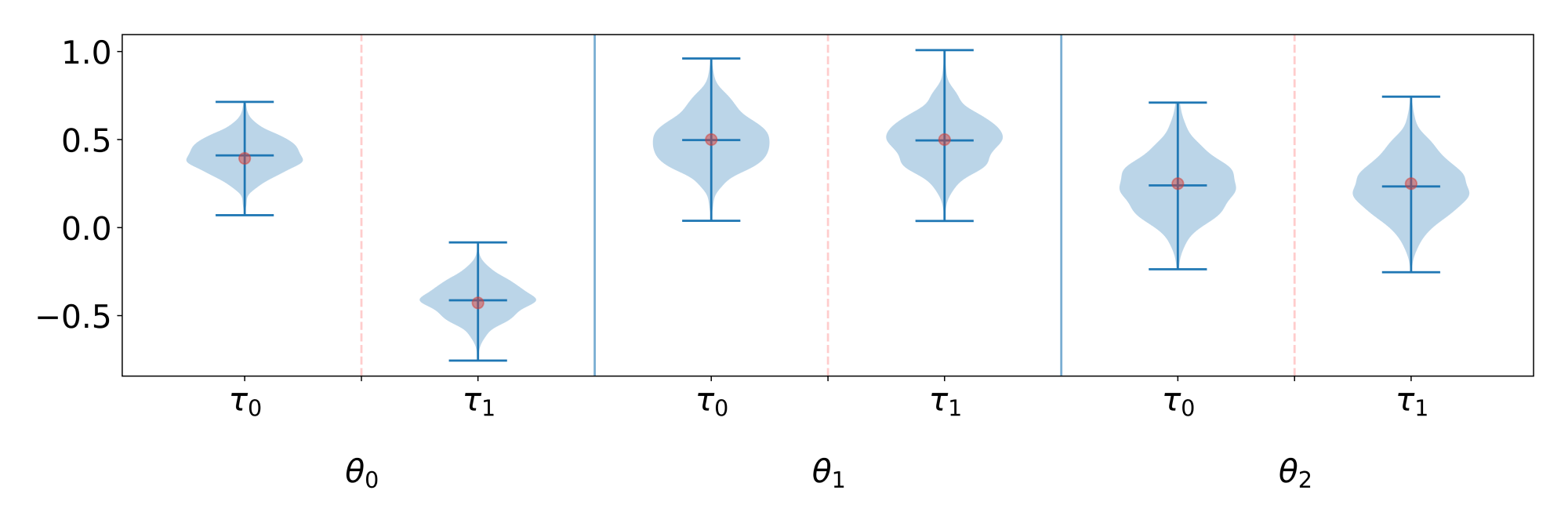}}
\caption{$n=500$, $n_t=2$, $n_x=450$, $s=2$, $\sigma(\epsilon_t)=1$, $\sigma(\zeta_t)=.5$, $\sigma(\eta_t)=1$}
\label{fig:n500}
\end{figure}

\begin{figure}[H]
\centering
\subfigure[Coverage]{
\includegraphics[width=195pt]{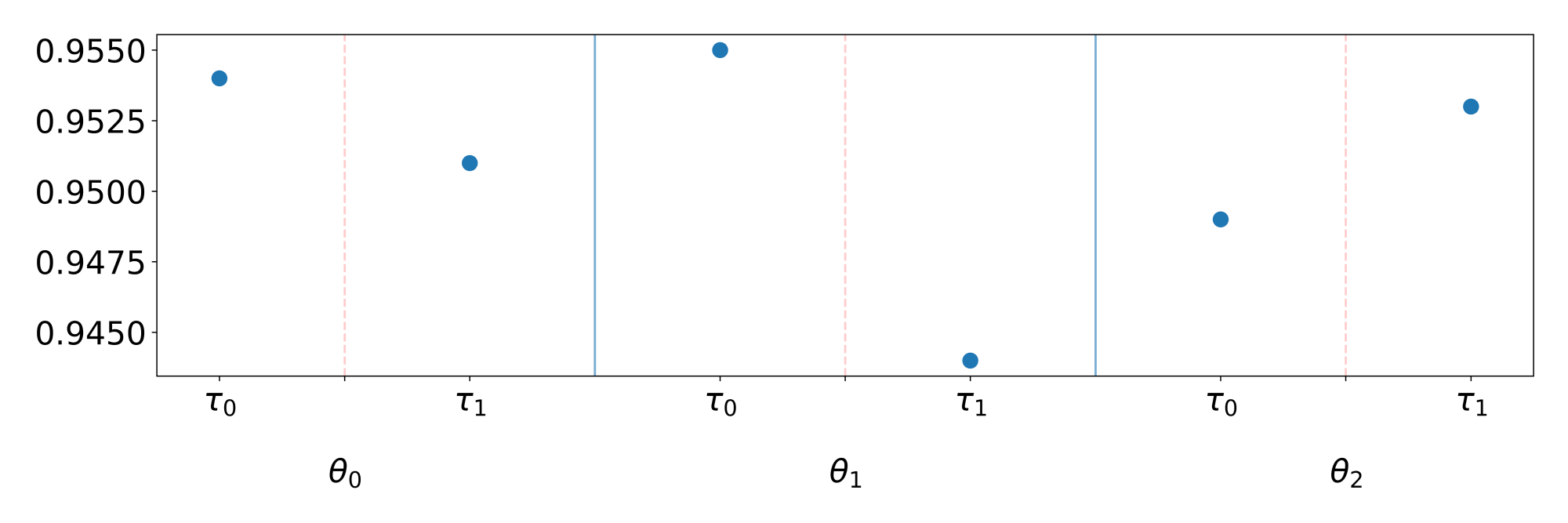}}
\subfigure[Point estimates]{
\includegraphics[width=195pt]{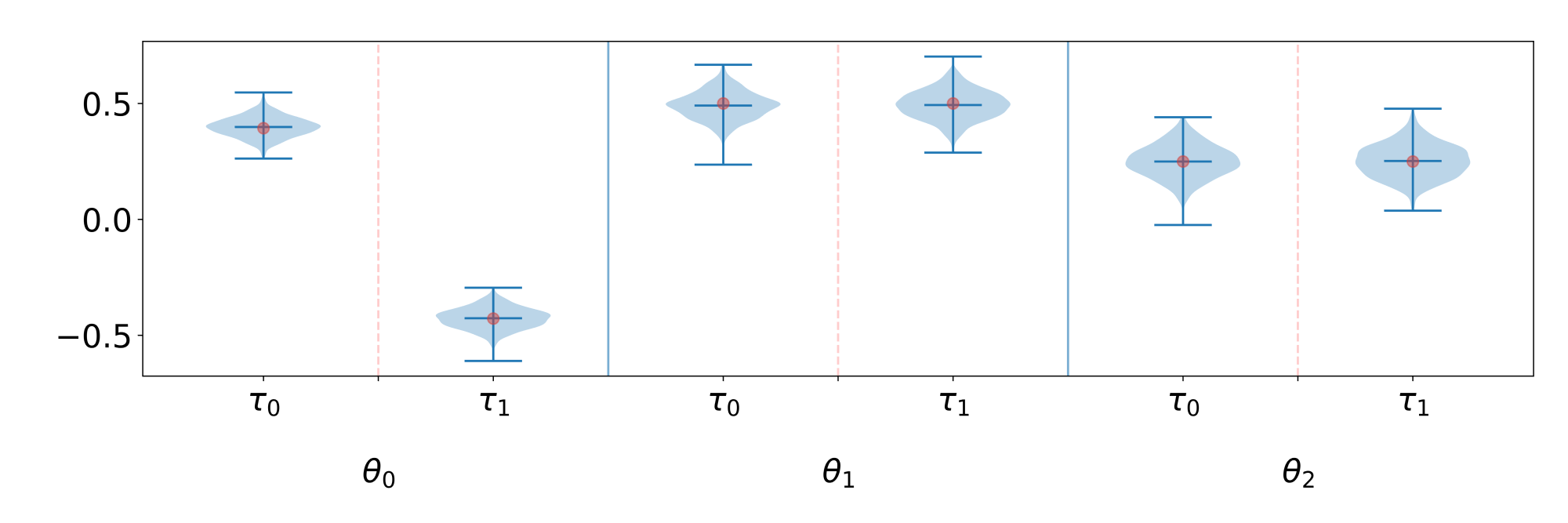}}
\caption{$n=2000$, $n_t=2$, $n_x=450$, $s=2$, $\sigma(\epsilon_t)=1$, $\sigma(\zeta_t)=.5$, $\sigma(\eta_t)=1$}
\label{fig:n2000}
\end{figure}

\subsection{Heterogeneous Treatment Effects}

\begin{figure}[H]
\centering
\subfigure[Coverage]{
\includegraphics[width=195pt]{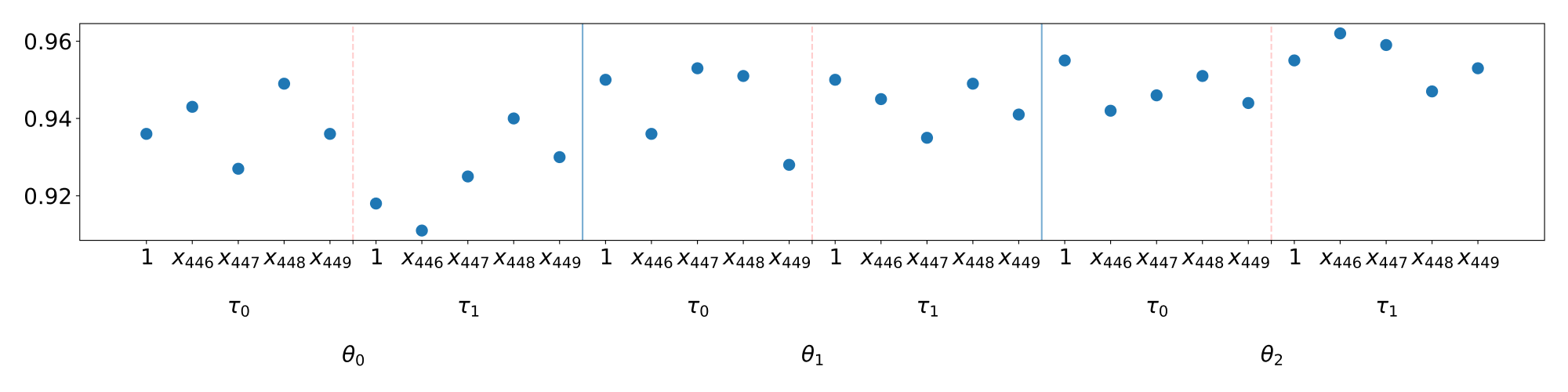}}
\subfigure[Point estimates]{
\includegraphics[width=195pt]{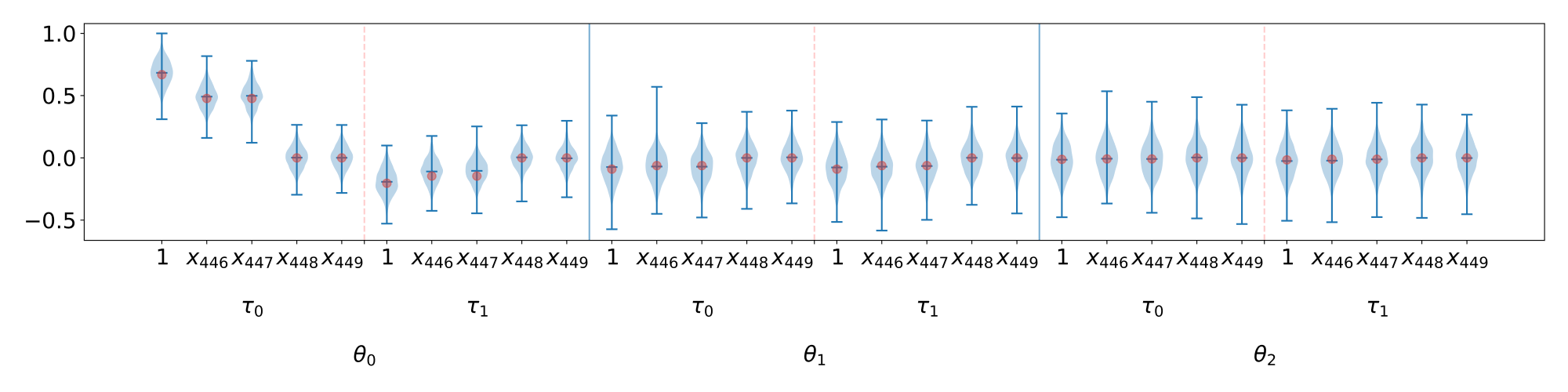}}
\caption{Heterogeneous effects: $n=500$, $n_t=2$, $n_x=450$, $s=2$, $\sigma(\epsilon_t)=1$, $\sigma(\zeta_t)=.5$, $\sigma(\eta_t)=1$}
\label{fig:n500hetero}
\end{figure}

\begin{figure}[H]
\centering
\subfigure[Coverage]{
\includegraphics[width=195pt]{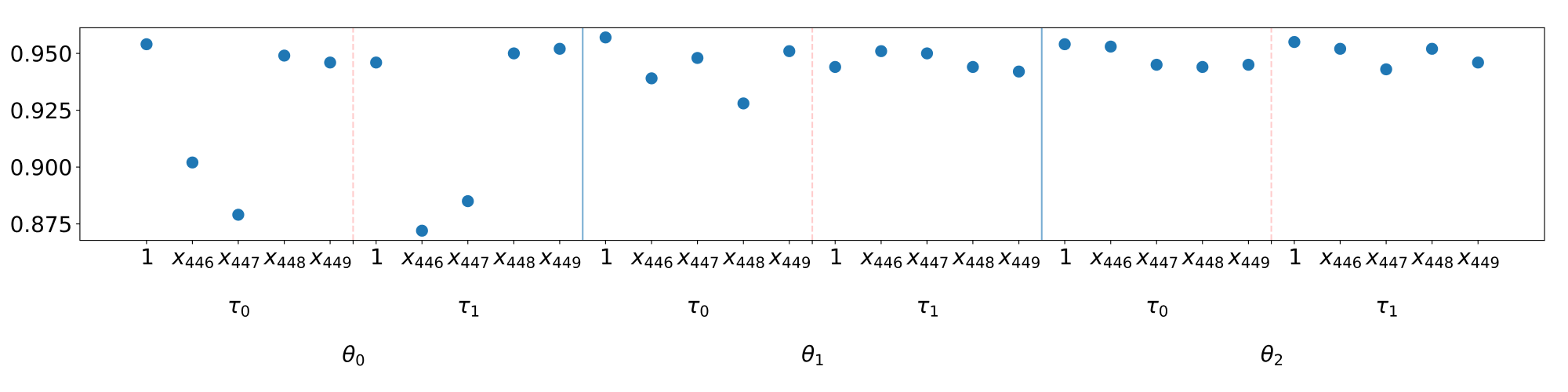}}
\subfigure[Point estimates]{
\includegraphics[width=195pt]{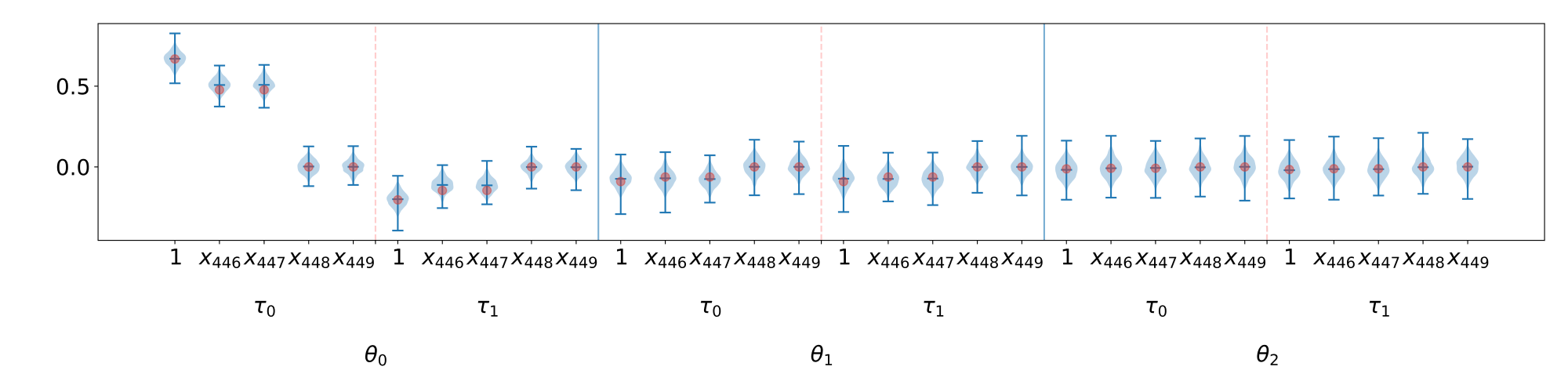}}
\caption{Heterogeneous effects: $n=2000$, $n_t=2$, $n_x=450$, $s=2$, $\sigma(\epsilon_t)=1$, $\sigma(\zeta_t)=.5$, $\sigma(\eta_t)=1$}
\label{fig:n2000hetero}
\end{figure}

\subsection{Counterfactual Policy Values}

\begin{figure}[H]
\centering
\subfigure[Coverage]{
\includegraphics[width=195pt]{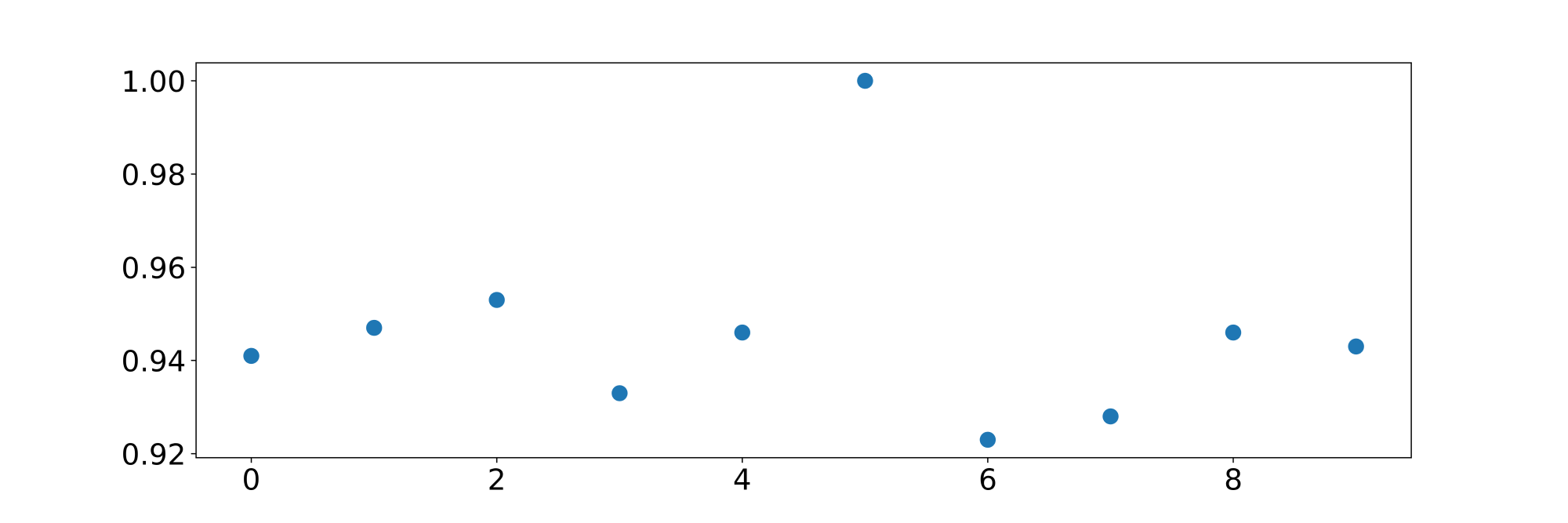}}
\subfigure[Point estimates]{
\includegraphics[width=195pt]{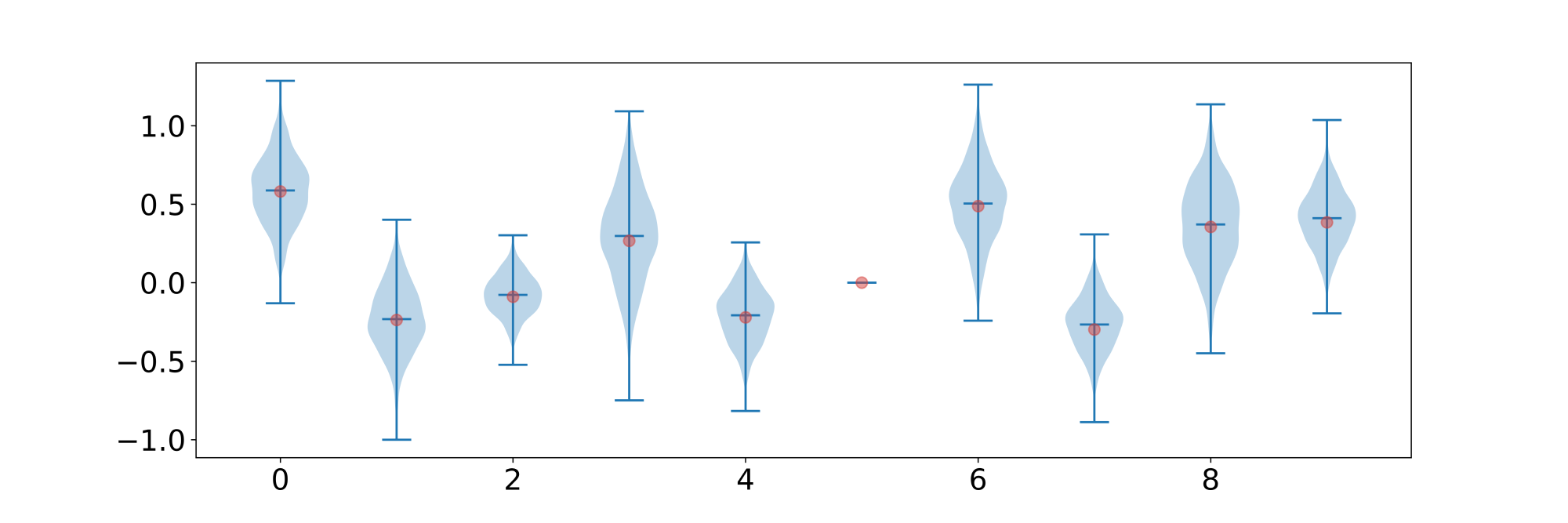}}
\caption{Counterfactual Policy Values for $10$ randomly chosen binary policies based on exogenous features: $n=500$, $n_t=2$, $n_x=450$, $s=2$, $\sigma(\epsilon_t)=1$, $\sigma(\zeta_t)=.5$, $\sigma(\eta_t)=1$}
\end{figure}

\begin{figure}[H]
\centering
\subfigure[Coverage]{
\includegraphics[width=195pt]{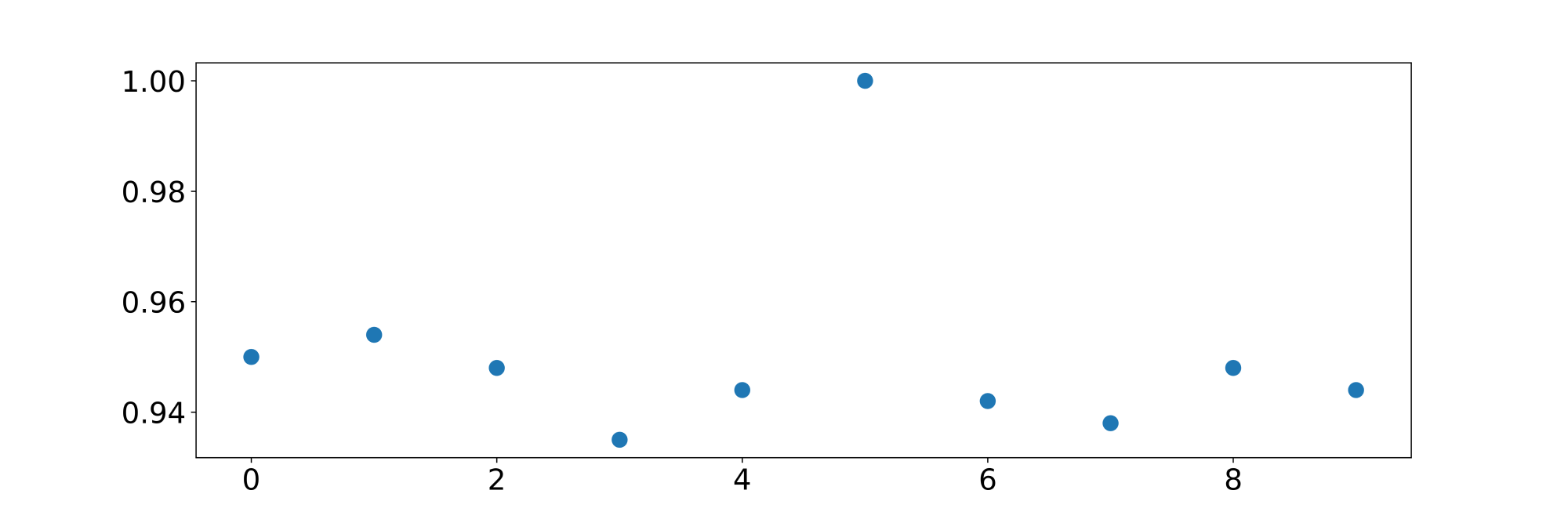}}
\subfigure[Point estimates]{
\includegraphics[width=195pt]{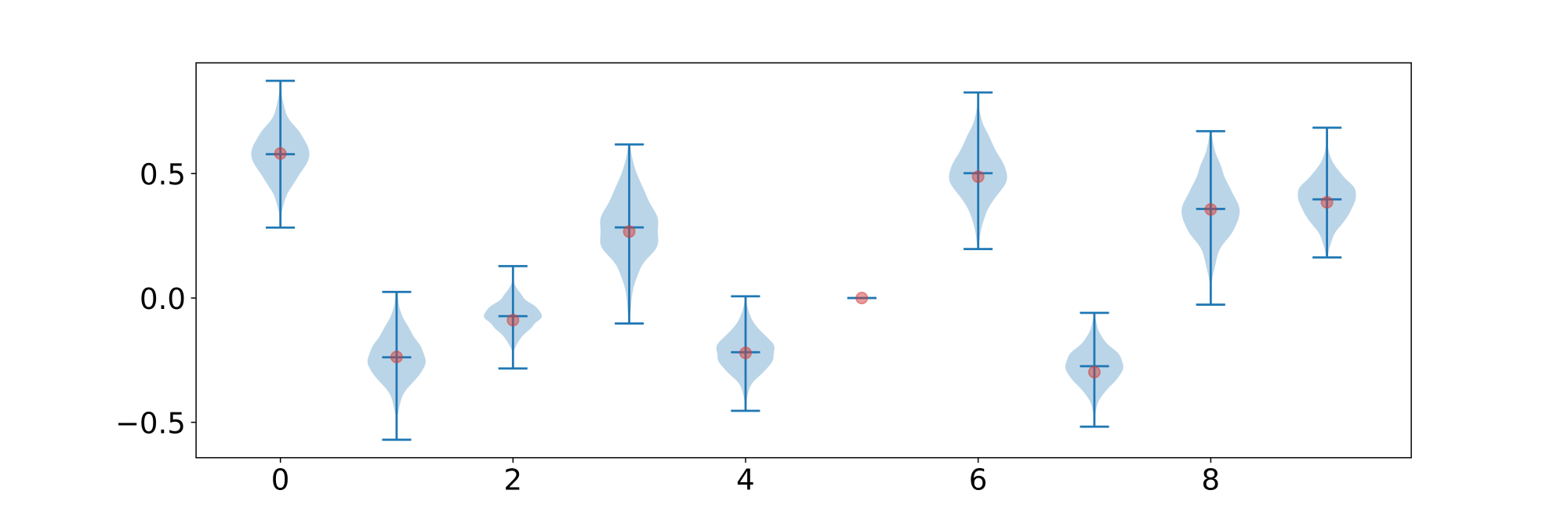}}
\caption{Counterfactual Policy Values for $10$ randomly chosen binary policies based on exogenous features: $n=2000$, $n_t=2$, $n_x=450$, $s=2$, $\sigma(\epsilon_t)=1$, $\sigma(\zeta_t)=.5$, $\sigma(\eta_t)=1$}
\end{figure}

\bibliography{refs}

\newpage

\newpage

\appendix

\etocdepthtag.toc{mtappendix}
\etocsettagdepth{mtchapter}{none}
\etocsettagdepth{mtappendix}{subsection}
\tableofcontents

\section{Omitted Proofs from Section~\ref{sec:expository}}

\subsection{Proof of Lemma~\ref{lem:char}}
\begin{proof}
 Observe that by repeatedly expanding the state space in terms of the previous state and action pairs we can write that under any non-state dependent sequence of interventions $\tau_1, \ldots, \tau_m$, the counterfactual value of the state at time $m$, $X_m^{(\tau)}$ can be written as:
 \begin{align}
     X_m^{(\tau)} =~& \sum_{\kappa=1}^{m-1} B^{\kappa-1} A\cdot \tau_{m-\kappa} + \delta_m 
 \end{align}
where $\delta_t$ is a linear function of the mean-zero exogenous random shocks at all periods. Thus under intervention $\tau$, the final outcome is distributed according to the random variable:
\begin{align}
    Y_m^{(\tau)} =~& \theta_0' \tau_m + \sum_{\kappa=1}^{m-1} \mu'B^{\kappa-1} A\cdot \tau_{m-\kappa} + \mu'\delta_m + \epsilon_t\\
    =~& \psi_m' \tau_m + \sum_{\kappa=1}^{m-1} \psi_{m-\kappa}' \tau_{m-\kappa} + \mu'\delta_m + \epsilon_t
\end{align}
Since $V(\tau)$ is the expected difference of this counterfactual final outcome and the counterfactual final outcome when $\tau=0$, and since $\delta_t$ is mean-zero conditional on any exogenous sequence of interventions that do not depend on the state or prior actions, we get the desired result.
\end{proof}

\subsection{Proof of Theorem~\ref{thm:id}}

\begin{proof}
For any $q\in \{0, \ldots, m-1\}$, by repeatedly unrolling the state $X_{m}$, $q-1$ times we have that:
\begin{align}
    Y_{m} =~& \theta_0' T_m + \sum_{\kappa=0}^{q} \mu'B^{\kappa-1} A T_{m-\kappa} + \mu'B^{q} X_{m-q} + \sum_{\kappa=0}^{q-1} \mu' B^{\kappa} \eta_{m-\kappa} + \epsilon_m\\
    =~& \psi_m' T_m + \sum_{\kappa=0}^{q} \psi_{m-\kappa}' T_{m-\kappa} + \mu'B^{q} X_{m-q}
    + \sum_{\kappa=0}^{q-1} \mu' B^{\kappa} \eta_{m-\kappa} + \epsilon_m\\
    =~& \psi_{m-q}'T_{m-q} + \sum_{j = m-q+1}^m \psi_j' T_j + \mu'B^{q} X_{m-q}
    + \sum_{\kappa=0}^{q-1} \mu' B^{\kappa} \eta_{m-\kappa} + \epsilon_m
\end{align}
Thus by re-arranging the equation and letting $t=m-q$, we have:
\begin{align}\label{eqn:recursive-peeling}
    \bar{Y}_{t} 
    =~& \psi_{t}' T_{t} + \mu'B^{m-t} X_{t} + \sum_{j=t+1}^{m} \mu' B^{m-j} \eta_{j} + \epsilon_m
\end{align}
Since for all $\epsilon_m$ and $\{\eta_j\}_{j=t+1}^m$ are subsequent random shocks to the variables $T_{t}, X_{t}$, we have that they are mean-zero conditional on $T_{t}, X_{t}$. 
Thus we have concluded that for any $t\in \{1, \ldots, m\}$:
\begin{align}\label{eqn:proof-cond-moment}
    \E[\bar{Y}_{t} - \psi_{t}' T_{t} - \mu'B^{m-q}\,X_{t} \mid T_{t}, X_{t}] = 0
\end{align}
which concludes the proof of the first part of the theorem.

For the second part, consider any $q$ and let $t=m-q$. Note that by taking an expectation of the above condition for with respect to $T_{t}$ conditional on $X_{t}$, we get that it implies
\begin{align}
    \E[\bar{Y}_{t}\mid X_t] = \psi_{t}' \E[T_{t}\mid X_t] + \mu'B^{m-q}\,X_{t}
\end{align}
Subtracting this from Equation~\eqref{eqn:proof-cond-moment}, we get:
\begin{align}
    \E[\bar{Y}_{t} - \E[\bar{Y}_t\mid X_t] - \psi_{t}' (T_{t} - \E[T_{t}\mid X_t]) \mid T_{t}, X_{t}] = 0
\end{align}
Since this holds for any $T_{t}, X_{t}$, the following vector of conditions also holds:
\begin{align}
    \E[(\bar{Y}_{t} - \E[\bar{Y}_t\mid X_t] - \psi_{t}' (T_{t} - \E[T_{t}\mid X_t]))\, (T_{t} - \E[T_{t}\mid X_t])] = 0
\end{align}
Noting that $\tilde{T}_t = T_{t} - \E[T_{t}\mid X_t]$, we get that the latter implies that:
\begin{align}
    \E[(\bar{Y}_{t} - \E[\bar{Y}_t\mid X_t] - \psi_{t}'\tilde{T}_t)\, \tilde{T}_t] = 0
\end{align}
If $J=\E[\tilde{T}_t\,\tilde{T}_t']=\E[\Cov(T_t, T_t\mid X_t)]$ is invertible, then this equation has the unique solution of:
\begin{align}
    \psi_{t} = J^{-1} \E[\tilde{T}_t\, (\bar{Y}_{t} - \E[\bar{Y}_t\mid X_t])]
\end{align}
\end{proof}

\subsection{Proof of Neyman Orthogonality}

First we define for any functional $L(f)$ the Frechet derivative as:
\begin{equation}
    D_f L(f)[\nu] = \frac{\partial}{\partial t} L(f + t\, \nu)\mid_{t=0} 
\end{equation}
Similarly, we can define higher order derivatives, denoted as $D_{g,f} L(f, g)[\mu,\nu]$.

\begin{lemma}[Neyman orthogonality]\label{lem:neyman-ortho-stylized}
The moment function:
\begin{equation}
    \mom_{D,t}(\psi; h) := \E\left[ \left(Y - q_t(X_t) - \sum_{j=t+1}^m \psi_j'(T_j - p_{j,t}(X_t)) - \psi_t'(T_t - p_{t,t}(X_t))\right)\, (T_t - p_{t,t}(X_t))\right]
\end{equation}
satisfies Neyman orthogonality with respect to $h$, i.e.:
\begin{equation}
    \forall h: D_h \mom_{D,t}(\psi^*; h^*)[h - h^*] = 0
\end{equation}
\end{lemma}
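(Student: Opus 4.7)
The plan is to compute the Gateaux derivative of $\mom_{D,t}(\psi^*; h)$ in direction $\tilde h - h^*$ by parameterizing $h = h^* + s(\tilde h - h^*)$, writing the moment as $\E[R(s)\, S(s)]$ with outcome-residual factor $R(s)$ and treatment-residual factor $S(s) := T_t - p_{t,t}(X_t)$, and applying the product rule at $s=0$. Since $p_{t,t}$ is the only nuisance that appears inside both factors, this produces four kinds of contributions: one each from perturbations of $q_t$ and of $\{p_{j,t}\}_{j>t}$, and two from the perturbation of $p_{t,t}$. The goal is then to show every contribution vanishes by iterated expectations, using only the definitions of the true nuisance models $q_t^*(X_t) = \E[Y\mid X_t]$ and $p_{j,t}^*(X_t) = \E[T_j\mid X_t]$ for $j\geq t$.

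The key observation driving every vanishing argument is that, at the truth, both factors of the moment have mean zero conditional on $X_t$. On the treatment side, $\E[S^*\mid X_t] = \E[T_t\mid X_t] - p_{t,t}^*(X_t) = 0$ by definition of $p_{t,t}^*$. On the outcome side, $\E[R^*\mid X_t] = \E[Y\mid X_t] - q_t^*(X_t) - \sum_{j\geq t}\psi_j^{*\prime}(\E[T_j\mid X_t] - p_{j,t}^*(X_t)) = 0$ for the analogous definitional reason (and this is also implied by Theorem~\ref{thm:id}). Each perturbation direction produces a derivative term in which either $S^*$ or $R^*$ multiplies a function of $X_t$ alone --- namely one of $\delta q(X_t)$, $\delta p_{j,t}(X_t)$, or $\delta p_{t,t}(X_t)$ --- and pulling the $X_t$-measurable factor outside by tower annihilates the term.

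Explicitly, the $q_t$ piece is $-\E[\delta q(X_t)\, S^*]$; each $j>t$ piece is $\E[\psi_j^{*\prime}\,\delta p_{j,t}(X_t)\, S^*]$; and the $p_{t,t}$ piece is the sum $\E[\psi_t^{*\prime}\,\delta p_{t,t}(X_t)\, S^*] - \E[R^*\,\delta p_{t,t}(X_t)]$. The first three use $\E[S^*\mid X_t]=0$; the last uses $\E[R^*\mid X_t]=0$. Summing gives $D_h \mom_{D,t}(\psi^*; h^*)[\tilde h - h^*] = 0$ for arbitrary $\tilde h$, which is precisely Neyman orthogonality.

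The main obstacle, though a mild one, is the correct bookkeeping of the $p_{t,t}$ derivative: because $p_{t,t}$ enters both $R$ and $S$, the product rule yields two terms rather than one, and these are cancelled by two \emph{different} zero-conditional-mean identities (one for the treatment residual, one for the outcome residual). Beyond this product-rule subtlety, no further structural assumption enters; in particular, the argument uses only the definitions of $h^*$ and the tower property, and does not depend on the linear-Markovian form of the DGP in Section~\ref{sec:expository}, foreshadowing the generalization to SNMMs in Section~\ref{sec:snmm}.
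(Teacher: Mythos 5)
Your proposal is correct and follows essentially the same route as the paper's proof: decompose the Gateaux derivative into the contributions of $q_t$, $\{p_{j,t}\}_{j>t}$, and $p_{t,t}$ (the last yielding two terms by the product rule since it enters both factors), and annihilate each term using $\E[T_t - p_{t,t}^*(X_t)\mid X_t]=0$ and $\E[R^*\mid X_t]=0$. If anything, your bookkeeping is slightly more careful than the paper's, which drops the $X_t$-measurable perturbation directions from its displayed computations and thus implicitly relies on the tower-property step you make explicit.
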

\begin{proof}
First note that by the definition of the Frechet derivative and the chain rule of differentiation:
\begin{align}
    D_h \mom_{D,t}(\psi^*; h^*)[h - h^*] = D_{q_t} \mom_{D,t}(\psi^*; h^*)[q_t - q_t^*] + \sum_{j=t}^m D_{p_{j,t}} \mom_{D,t}(\phi^*; h^*)[p_{j,t} - p_{j,t}^*]
\end{align}
We consider each component of the nuisance parameters separately. 

For $q_t$, we have:
\begin{align}
    D_{q_t} \mom_{D,t}(\psi^*; h^*)[q_t - q_t^*] = -\E[T_t - p_{t,t}^*(X_t)] = - \E[T_t - \E[T_t\mid X_t]] = 0
\end{align}
For $p_{j,t}$ with $j>t$, we have:
\begin{align}
    D_{p_{j,t}} \mom_{D,t}(\psi^*; h^*)[p_{j,t} - p_{j,t}^*] = \E[T_t - p_{t,t}^*(X_t)] = \E[T_t - \E[T_t\mid X_t]] = 0
\end{align}
For $p_{t,t}$, we have:
\begin{align}
    D_{p_{t,t}} \mom_{D,t}(\psi^*; h^*)[p_{t,t} - p_{t,t}^*] =~& \E[T_t - p_{t,t}^*(X_t)] - \E\left[Y - q_t^*(X_t) - \sum_{j=t}^m (\psi_j^*)'(T_j - p_{j,t}^*(X_t))\right] \\
    =~& - \E[Y - \E[Y|X_t]] + \sum_{j=t}^m (\psi_j^*)'\E[T_t - \E[T_t\mid X_t]] = 0
\end{align}
\end{proof}

\begin{lemma}\label{lem:second-order-frechet-stylized}
The second order Frechet derivative of the moment $\mom_{D,t}$ satisfies:
\begin{align}
   \forall h, \tilde{h}: D_{hh} \mom_{D,t}(\psi^*; \tilde{h})[h - h^*, h-h^*] :=~&  \E\left[\left(\hat{p}_{t,t}(X_t) -p_{t,t}^*(X_t)\right)\, \left(\hat{q}_t(X_t) - q_t^*(X_t)\right)\right] \\
   ~& - \sum_{j=t+1}^m \E\left[\left(\hat{p}_{t,t}(X_t) -p_{t,t}^*(X_t)\right)\, \left(\hat{p}_{j,t}(X_t) - p_{j,t}^*(X_t)\right)'\right]\, \psi_j^*\\
   ~& - 2\, \E\left[\left(\hat{p}_{t,t}(X_t) -p_{t,t}^*(X_t)\right)\, \left(\hat{p}_{t,t}(X_t) - p_{t,t}^*(X_t)\right)'\right]\, \psi_t^*
\end{align}
\end{lemma}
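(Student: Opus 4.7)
The plan is to exploit the fact that $\mom_{D,t}(\psi^*; h)$ is \emph{quadratic} in the nuisance $h$, so its second-order Taylor expansion around any point $\tilde h$ is exact and can be read off by a direct polynomial expansion. First I would rewrite the moment as $\E[A(h)\cdot B(h)]$, where
\begin{align*}
A(h) &:= Y - q_t(X_t) - \sum_{j=t}^{m}(\psi_j^*)'(T_j - p_{j,t}(X_t)), \\
B(h) &:= T_t - p_{t,t}(X_t),
\end{align*}
combining the $j=t$ summand into $A$ so that both factors are manifestly affine in $h$. The key structural observation is that among all nuisance components, only $p_{t,t}$ appears in both factors $A$ and $B$; this is exactly what produces the asymmetry (the extra coefficient on the $\psi_t^*$ term) in the claimed formula.

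Next I would parameterize the linear path $h(s) := \tilde h + s(h-h^*)$ and expand $A(h(s)) = A_0 + s A_1$ and $B(h(s)) = B_0 + s B_1$, identifying the slopes explicitly as
\begin{align*}
A_1 &= -(q_t - q_t^*) + \sum_{j=t}^{m}(\psi_j^*)'(p_{j,t} - p_{j,t}^*), &
B_1 &= -(p_{t,t} - p_{t,t}^*),
\end{align*}
all functions evaluated at $X_t$. Since $A(h(s))B(h(s))$ is quadratic in $s$, the bilinear Hessian form $D_{hh}\mom_{D,t}(\psi^*;\tilde h)[h-h^*,h-h^*]$ is read off directly from the $s^2$-coefficient, which equals $\E[A_1 B_1]$ (up to the convention-dependent constant factor used for the second Frechet derivative).

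Finally, I would carry out the multiplication, setting $\delta_q := q_t - q_t^*$ and $\delta_{p,j} := p_{j,t} - p_{j,t}^*$, and splitting the sum in $A_1$ into the $j=t$ diagonal piece and the $j>t$ cross pieces:
$$ A_1 B_1 \;=\; \delta_q\,\delta_{p,t} \;-\; \bigl((\psi_t^*)'\delta_{p,t}\bigr)\delta_{p,t} \;-\; \sum_{j>t}\bigl((\psi_j^*)'\delta_{p,j}\bigr)\delta_{p,t}. $$
Taking expectations and using that the componentwise expectation of a scalar-times-vector $((\psi_j^*)'\delta_{p,j})\delta_{p,t}$ is the matrix-vector product $\E[\delta_{p,t}\delta_{p,j}']\psi_j^*$, one recovers the three terms displayed in the lemma, with the term corresponding to $j=t$ singled out precisely because both copies of $p_{t,t}$ (the one inside $A$ and the one inside $B$) contribute to it.

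There is no real obstacle here: the whole argument is a quadratic expansion. The only care needed is bookkeeping, specifically (i) keeping track of scalar-vs-vector factors when converting $\E[(\mathrm{scalar})\cdot(\mathrm{vector})]$ into matrix-vector products against $\psi_j^*$, and (ii) separating the pure $p_{t,t}$-in-both-factors contribution from the cross contributions with the other nuisances, which is what breaks the symmetry in the final formula.
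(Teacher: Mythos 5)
Your route is essentially the paper's route in cleaner clothing: the paper also exploits that $\mom_{D,t}$ is quadratic in $h$, decomposing $D_{hh}$ via the chain rule into mixed partials over pairs of nuisance components and observing that only pairs involving $p_{t,t}$ survive; your product-of-affine-factors path expansion is the same computation with less bookkeeping, and your identification of $A_1$ and $B_1$ and the expansion of $A_1B_1$ are correct.

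The one place your argument does not close is the final claim that taking expectations of your $A_1B_1$ "recovers the three terms displayed in the lemma" up to a convention-dependent constant. Your expansion gives coefficients $(1,1,1)$ on the $\delta_q$, $\{\delta_{p,j}\}_{j>t}$, and $\delta_{p,t}$ terms, while the lemma displays $(1,1,2)$ -- the $j=t$ term carries an extra factor of $2$ that the cross terms do not. No single global constant (the $\tfrac12$ vs.\ $1$ vs.\ $2$ ambiguity in normalizing $\tfrac{d^2}{ds^2}$) can reconcile a \emph{relative} mismatch between the diagonal and cross coefficients. The origin of the $2$ in the paper is its decomposition over \emph{unordered} pairs of nuisance components: there the pure $p_{t,t}$-in-both-factors contribution enters as $D_{p_{t,t},p_{t,t}}$, which for the term $-(\psi_t^{*\prime}p_{t,t})p_{t,t}$ equals $-2(\psi_t^{*\prime}\delta_{p,t})\delta_{p,t}$, while each cross pair $(p_{t,t},q_t)$ and $(p_{t,t},p_{j,t})$ is counted once rather than twice. (The true second directional derivative $\tfrac{d^2}{ds^2}\E[A(s)B(s)]=2\E[A_1B_1]$ has coefficients $(2,2,2)$, so the lemma's display is itself a convention artifact; this is harmless downstream because the bound $\rho(\hat h)$ carries a slack factor $4M$ over $\sum_{j=t}^m$.) To make your write-up correct you should either track the diagonal factor explicitly under the paper's pairwise convention, or state your answer as $\E[A_1B_1]$ (equivalently $\tfrac12 D_{hh}$ in the standard normalization) and note that it upper-bounds to the same $\rho(\hat h)$; as written, the sentence asserting exact agreement with the lemma's coefficients is false.
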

\begin{proof}
Note that by the definition of the Frechet derivative and the chain rule of differentiation:
\begin{align}
    D_{hh} \mom_{D,t}(\psi^*; \tilde{h})[h - h^*, h-h^*] =~& \sum_{j=t}^{m} D_{p_{j,t}, q_t} \mom_{D,t}(\psi^*; \tilde{h})[p_{j,t} - p_{j,t}^*, q_t-q_t^*]\\
    ~&~+ \sum_{j=t}^{m}\sum_{j'=j}^m D_{p_{j,t}, p_{j',t}} \mom_{D,t}(\psi^*; \tilde{h})[p_{j,t} - p_{j,t}^*, p_{j',t}-p_{j',t}^*]
\end{align}
However, note that by the definition of $\mom_{D,t}$, it only contains quadratic nuisance terms of the form $p_{t,t}\,p_{j,t}$ for $j\geq t$ and $p_{t,t}\, q_t$. Thus we have that all other second derivative terms will be zero and we can write:
\begin{align}
    D_{hh} \mom_{D,t}(\psi^*; \tilde{h})[h - h^*, h-h^*] =~& D_{p_{t,t}, q_t} \mom_{D,t}(\psi^*; \tilde{h})[p_{t,t} - p_{t,t}^*, q_t-q_t^*]\\
    ~&~+ \sum_{j=t}^{m} D_{p_{t,t}, p_{j,t}} \mom_{D,t}(\psi^*; \tilde{h})[p_{t,t} - p_{t,t}^*, p_{j,t}-p_{j,t}^*]
\end{align}
By simple calculus each of these terms can be shown to take the form given in the lemma.
\end{proof}

\subsection{Proof of Theorem~\ref{thm:mse}}
\label{app:finite_sample}

\begin{proof}
Let $Z=(X_1, T_1, \ldots, X_m, T_m, Y_m)$ denote the random variables associated with a single sample series and let $h = (p, q)$ denote the union of all nuisance functions. Observe that if we let:
\begin{equation}
\ell_{t}(Z; \psi, h) = \frac{1}{2}\left(Y - q_t(X_t) - \sum_{j=t+1}^m \psi_j'(T_j - p_{j,t}(X_t)) - \psi_t'(T_t - p_{t,t}(X_t)\right)^2
\end{equation} 
and if we let:
\begin{equation}
m_t(Z; \psi, h) := \nabla_{\psi_t} \ell_{t}(Z; \psi, h),
\end{equation}
then we have that:
\begin{align}
m_t(Z;\psi, h) = \left(Y - q_t(X_t) - \sum_{j=t+1}^m \psi_j'(T_j - p_{j,t}(X_t)) - \psi_t'(T_t - p_{t,t}(X_t)\right)\, (T_t - p_{t,t}(X_t))
\end{align}
Finally, let $\E_S[\cdot]$ denote the empirical average over the samples in $S$ and let:
\begin{align}
    \loss_S(\psi, h) :=~& \E_S\left[\ell_t(Z; \psi, h)\right] &  \loss_D(\psi, h) :=~& \E\left[\ell(Z;\psi, h)\right]\\
    \mom_S(\psi, h) :=~& \E_S\left[m_t(Z; \psi, h)\right] & \mom_D(\psi, h) :=~& \E\left[m_t(Z; \psi, h)\right]
\end{align}
Moreover, let $\hat{h}_S$, be the nuisance estimate applied to samples in split $S$, i.e. the estimated trained on $S'$ and vice versa. 

\paragraph{Properties of $\hat{\psi}$ and $\psi^*$.} Note that since $\hat{\psi}_t$ is an optimal point of loss $\loss_n(\hat{\psi},\hat{h})$, i.e.:
\begin{equation}
    \hat{\psi}_t \in \argmin_{\psi_t\in \R^d: \|\psi_t\|_2\leq M} \frac{1}{2}\sum_{O\in \{S,S'\}} \loss_{O}((\psi_t, \hat{\psi}_{-t}), \hat{h}_O)
\end{equation}
Since $\|\psi_t^*\|_2\leq M$, $\hat{\psi}_t$ must also satisfy the first-order optimality condition:
\begin{equation}\label{eqn:cross-fit-opt-hat}
    \frac{1}{2}\sum_{O\in \{S,S'\}} \mom_O(\hat{\psi}, \hat{h}_O)'(\psi_t^* - \hat{\psi}_t) \geq 0
\end{equation}
Note that by Theorem~\ref{thm:id} we have that the true parameters $\psi^*$ and true nuisance parameters $p, q$ satisfy:
\begin{align}
    \mom_D(\psi^*, h^*) := \E[m_t(Z; \psi^*, h^*] = 0
\end{align}
equivalently, $\psi^*$ is the minimum of the following loss:
\begin{align}
    \psi_t^* \in \argmin_{\psi_t\in \R^d: \|\psi_t\|_2\leq M} \loss_D((\psi_t, \psi_{-t}^*), h^*)
\end{align}

\paragraph{Strong convexity of $\loss_D$ for all $\psi, h$.} First we argue that by our assumption on the minimum eigenvalue of $\E[\Cov(T_t,T_t\mid X_t)]\succeq \lambda I$, we have that for any $\psi, h$ the loss function $\loss_D(\psi, h)$ is $\lambda$-strongly convex with respect to $\psi_t$. Note that for any $\nu\in \R^d$:
\begin{align}
    \nu'\nabla_{\psi_t,\psi_t} \loss_D(\psi, h)\nu =~& \nu'\E\left[(T_t - \hat{p}(X_t))\, (T_t - \hat{p}(X_t))'\right]\nu
    = \E\left[\left(\nu'(T_t - \hat{p}(X_t))\right)^2\right]\\
    \geq~& \E\left[\left(\nu'(T_t - \E[T_t\mid X_t])\right)^2\right]\\
    =~& \nu'\E\left[(T_t - \E[T_t\mid X_t])\, (T_t - \E[T_t\mid X_t])'\right]\nu\\
    =~& \nu'\E\left[\Cov(T_t, T_t\mid X_t)\right]\nu\\
    \geq~& \lambda \|\nu\|_2^2
\end{align}

\paragraph{Bounding $\ell_2$-error by moments.} By strong convexity we have for any $\hat{h}$:
\begin{align}
    \loss_D(\hat{\psi}, \hat{h}) - \loss_D((\psi_t^*, \hat{\psi}_{-t}), \hat{h}) \geq \mom_D((\psi_t^*, \hat{\psi}_{-t}), \hat{h})'\,(\hat{\psi}_t - \psi_t^*) + \frac{\lambda}{2} \|\hat{\psi}_t-\psi_t^*\|_2^2
\end{align}
Moreover, by convexity of $\loss_D$ we have:
\begin{align}
    \loss_D((\psi_t^*, \hat{\psi}_{-t}), \hat{h}) - \loss_D(\hat{\psi}, \hat{h})
    \geq \mom_D(\hat{\psi}, \hat{h})'\,(\psi_t^* - \hat{\psi}_t)
\end{align}
Combining the two, yields:
\begin{align}
    \frac{\lambda}{2} \|\hat{\psi}_t-\psi_t^*\|_2^2 \leq~& \left(\mom_D((\psi_t^*, \hat{\psi}_{-t}), \hat{h}) - \mom_D(\hat{\psi}, \hat{h})\right)'\,(\psi_t^* - \hat{\psi}_t)
\end{align}
Using the fact that $\mom_D(\psi^*, h^*)=0$, yields:
\begin{align}
    \frac{\lambda}{2} \|\hat{\psi}_t-\psi_t^*\|_2^2 \leq~& \left(\mom_D((\psi_t^*, \hat{\psi}_{-t}), \hat{h}) - \mom_D(\psi^*, h^*)  - \mom_D(\hat{\psi}, \hat{h})\right)'\,(\psi_t^* - \hat{\psi}_t)
\end{align}
Applying the latter for each $O\in \{S, S'\}$ and averaging, we get:
\begin{align}
    \frac{\lambda}{2} \|\hat{\psi}_t-\psi_t^*\|_2^2 \leq~& \frac{1}{2} \sum_{O\in \{S,S'\}} \left(\mom_D((\psi_t^*, \hat{\psi}_{-t}), \hat{h}_O) - \mom_D(\psi^*, h^*)  - \mom_D(\hat{\psi}, \hat{h}_O)\right)'\,(\psi_t^* - \hat{\psi}_t)
\end{align}
By first-order optimality of $\hat{\psi}_t$, from Equation~\eqref{eqn:cross-fit-opt-hat}, we have that:
\begin{align}
    \frac{\lambda}{2} \|\hat{\psi}_t-\psi_t^*\|_2^2 \leq~& \frac{1}{2} \sum_{O\in \{S,S'\}} \left(\mom_D((\psi_t^*, \hat{\psi}_{-t}), \hat{h}_O) - \mom_D(\psi^*, h^*) + \mom_O(\hat{\psi}, \hat{h}_O) - \mom_D(\hat{\psi}, \hat{h}_O)\right)'\,(\psi_t^* - \hat{\psi}_t) \\
    \leq~& \frac{1}{2} \sum_{O\in \{S,S'\}} \left\|\mom_D((\psi_t^*, \hat{\psi}_{-t}), \hat{h}_O) - \mom_D(\psi^*, h^*) + \mom_O(\hat{\psi}, \hat{h}_O) - \mom_D(\hat{\psi}, \hat{h}_O)\right\|_2\, \|\psi_t^* - \hat{\psi}_t\|_2
\end{align}
Thus we have:
\begin{align}
    \|\hat{\psi}_t-\psi_t^*\|_2 \leq \frac{1}{\lambda} \sum_{O\in \{S,S'\}}\left(\underbrace{\left\|\mom_D((\psi_t^*, \hat{\psi}_{-t}), \hat{h}_O) - \mom_D(\psi^*, h^*)\right\|_2}_{A_O} + \underbrace{\left\|\mom_O(\hat{\psi}, \hat{h}_O) - \mom_D(\hat{\psi}, \hat{h}_O)\right\|_2}_{B_O}\right)
\end{align}

\paragraph{Upper Bounding $A$ via Neyman orthogonality.} 
We will analyze $A_O$ for $O=S$ and let $\hat{h}\equiv \hat{h}_S$ and $A\equiv A_S$. The same analysis holds for $S'$. We further split this term by sequentially changing its inputs:
\begin{align}
    A = \underbrace{\mom_D((\psi_t^*, \hat{\psi}_{-t}), \hat{h}) - \mom_D(\psi^*, \hat{h})}_{A_1} + \underbrace{\mom_D(\psi^*, \hat{h}) - \mom_D(\psi^*, h^*)}_{A_2}
\end{align}
Moreover, by linearity of $\mom_D$ in $\psi_{-t}$, we note that:
\begin{align}
    A_1 =~& -\sum_{j=t+1}^{m} \E[\tilde{T}_{t,t}\, \tilde{T}_{j,t}'] (\hat{\psi}_j - \psi_j^*)\\
    =~& -\sum_{j=t+1}^{m} \left(\E[\Cov(T_t, T_j\mid X_t)] + \E[(\hat{p}_t(X_t) - p_t^*(X_t))\, (\hat{p}_{j,t}(X_t) - p_{j,t}^*(X_t))'] \right)\, (\hat{\psi}_j - \psi_j^*)
\end{align}
where we remind that $\Cov(T_t, T_j\mid X_t) = \E[(T_t - \E[T_t\mid X_t])\, (T_j - \E[T_j\mid X_t])'\mid X_t]$.
Thus if we let $c_t = \sum_{j=t+1}^{m} \|\E[\Cov(T_t, T_j\mid X_t)]\|_{op}$, and using the fact that $\|\hat{\psi}_j - \psi_j^*\|_2\leq 2M$, we have that:
\begin{align}
    \|A_1\|_2 \leq~& c_t \max_{j=t+1}^m \|\hat{\psi}_j - \psi_j^*\|_2 + 2M\, \sum_{j=t+1}^{m} \left\|\E[(\hat{p}_t(X_t) - p_t^*(X_t))\, (\hat{p}_{j,t}(X_t) - p_{j,t}^*(X_t))'] \right\|_{op}
\end{align}

For term $A_2$, we can apply a second order functional Taylor expansion around $h^*$. Using the Neyman orthogonality property of Lemma~\ref{lem:neyman-ortho-stylized} and the form of the second order functional derivatives of Lemma~\ref{lem:second-order-frechet-stylized}, we deduce that:
\begin{align}
    \|A_2\|_2 \leq~& \|D_{h} \mom_D(\psi^*, h^*)[\hat{h}-h^*] + D_{hh} \mom_D(\psi^*, \bar{h})[\hat{h}-h^*, \hat{h}-h^*]\|_2\\
    \leq~& \|\E\left[\left(\hat{p}_{t,t}(X_t) -p_{t,t}^*(X_t)\right)\, \left(\hat{q}_t(X_t) - q_t^*(X_t)\right)\right]\|_2 \\
   ~& + M\, \sum_{j=t+1}^m \|\E\left[\left(\hat{p}_{t,t}(X_t) -p_{t,t}^*(X_t)\right)\, \left(\hat{p}_{j,t}(X_t) - p_{j,t}^*(X_t)\right)'\right]\|_{op}\\
   ~& + 2\, M\, \|\E\left[\left(\hat{p}_{t,t}(X_t) -p_{t,t}^*(X_t)\right)\, \left(\hat{p}_{t,t}(X_t) - p_{t,t}^*(X_t)\right)'\right]\|_{op}
\end{align}
Moreover, note that for any random vectors $X, Y \in \R^d$:
\begin{align}
    \|\E[X\,Y']\|_{op} \leq \E[\|X\, Y'\|_{op}] \leq \E[\|X\|_2 \|Y\|_2] \leq \sqrt{\E[\|X\|_2^2]}\, \sqrt{\E[\|Y\|_2^2]} = \|X\|_{2,2}\, \|Y\|_{2,2}
\end{align}
Thus we can upper bound the $\|A\|_2$ as:
\begin{align}
    \|A\|_2 \leq~& c_t \max_{j=t+1}^m \|\hat{\psi}_j - \psi_j^*\|_2 + \underbrace{\|\hat{p}_{t,t}-p_{t,t}^*\|_{2,2} \left(\|\hat{q}_t - q_t^*\|_{2} + 4\, M\,\sum_{j=t}^m \|\hat{p}_{j, t} - p_{j,t}^*\|_2 \right)}_{\rho(\hat{h})}
\end{align}

Hence, so far we have that:
\begin{align}
    \|\hat{\psi}_t-\psi_t^*\|_2 \leq \frac{1}{\lambda} \left(2\, c_t \max_{j=t+1}^m \|\hat{\psi}_j - \psi_j^*\|_2 + \sum_{O\in \{S,S'\}} (\rho(\hat{h}_O) + B_O)\right)
\end{align}

\paragraph{Upper bounding $B$ via concentration.} We will analyze $B_O$ for $O=S$ and let $\hat{h}\equiv \hat{h}_S$ and $B\equiv B_S$. The same analysis holds for $S'$. Note that each coordinate $i$ of the $d$-dimensional vector $B$ is upper bounded by the supremum of a centered empirical process over $n/2$ samples:
\begin{align}
    |B_i| \leq~& \sup_{\psi\in \Psi} \left|\frac{2}{n} \sum_{i=1}^n \left(m_{t,i}(Z^i; \psi, \hat{h}) - \E[m_{t,i}(Z; \psi, \hat{h})]\right)\right|\\
    \leq~& \sup_{\psi\in \Psi, \delta\in \{-1, 1\}} \frac{2}{n} \sum_{i=1}^n \left(\delta\,m_{t,i}(Z^i; \psi, \hat{h}) - \E[\delta\,m_{t,i}(Z; \psi, \hat{h})]\right)
\end{align}
Moreover, if all random variables and the ranges of all nuisance functions are absolutely bounded by $H$ a.s., then each $m_{t,i}$ is absolutely bounded by $U := 2\, H\, m\, d$. Thus by \cite{bartlett2002rademacher} (see also Theorem~1 of \cite{maurer2016vector}), we can upper bound this w.p. $1-\zeta$ by:
\begin{align}
    |B_i| \leq~& \E_{\epsilon}\left[\sup_{\psi\in \Psi, \delta\in\{-1, 1\}} \frac{4}{n}\sum_{i=1}^n \epsilon_i \delta\, m_{t,i}(Z^i; \psi, \hat{h})\right] + U\,\sqrt{\frac{9\log(2/\zeta)}{n}}\\
    \leq~& \E_{\epsilon}\left[\sup_{\psi\in \Psi} \frac{8}{n}\sum_{i=1}^n \epsilon_i m_{t,i}(Z^i; \psi, \hat{h})\right] + U\,\sqrt{\frac{9\log(2/\zeta)}{n}}
\end{align}
Note that $m_{t,i}$ depends on $\psi$ via a vector of $m-t$ linear indices of the form: $\psi_{j}'\tilde{T}_{j,t}$ for $j\in \{t,\ldots, m\}$. Moreover, the function $m_{t,i}$ is $L:=2H\,\sqrt{m}$-Lipschitz with respect to these $m-t$ indices (and with respect to the $\ell_2$ norm definition of Lipschitzness). Thus by the multivariate contraction inequality of \cite{maurer2016vector}, we have that:
\begin{align}
    \E_{\epsilon}\left[\sup_{\psi\in \Psi} \frac{8}{n}\sum_{i=1}^n \epsilon_i m_{t,i}(Z^i; \psi, \hat{h})\right] \leq \sqrt{2}L\sum_{j=t}^m \E\left[\sup_{\psi_j: \|\psi_j\|_2\leq M} \frac{8}{n} \sum_{i=1}^n \epsilon_i\, \psi_j'\tilde{T}_{j,t}^{i}\right]
\end{align}
since $\|\tilde{T}_{j,t}\|_2\leq 2H\sqrt{d}$, a.s., by Lemma~26.10 of \cite{Shalev2014}, we have that:
\begin{align}
    \E\left[\sup_{\psi_j: \|\psi_j\|_2\leq M} \frac{8}{n} \sum_{i=1}^n \epsilon_i\, \psi_j'\tilde{T}_{j,t}^{i}\right] \leq 16 \frac{M\,H\,\sqrt{d}}{\sqrt{n}}
\end{align}
Thus overall we have:
\begin{align}
    |B_i| \leq 32 L\,m\, \frac{M\,H\,\sqrt{d}}{\sqrt{n}} + U\,\sqrt{\frac{9\log(2/\zeta)}{n}} = 64 H\,m^{3/2}\, \frac{M\,H\,\sqrt{d}}{\sqrt{n}} + 2H\,m\,d\,\sqrt{\frac{9\log(2/\zeta)}{n}}
\end{align}
Applying these for all $i$, we have that w.p. $1-\zeta$:
\begin{align}
    \|B\|_2 \leq~& \sqrt{d}\left(64 H\,m^{3/2}\, \frac{M\,H\,\sqrt{d}}{\sqrt{n}} + 2H\,m\,d\,\sqrt{\frac{9\log(2d/\zeta)}{n}}\right)\\
    \leq~& 2 d H m \left(32 H M \sqrt{\frac{d\, m}{n}} + \sqrt{\frac{9\,d\,\log(2d/\zeta)}{n}}\right) 
\end{align}

\paragraph{Combining bounds and concluding.} Let:
\begin{align}
    \mu :=~& \frac{2}{\lambda} \left(2 d H m \left(32 H M \sqrt{\frac{d\, m}{n}} + \sqrt{\frac{9\,d\,\log(2d/\zeta)}{n}}\right) + \max_{O\in \{S,S'\}} \rho(\hat{h}_O)\right)
\end{align}
Then overall we have:
\begin{equation}
\|\hat{\psi}_t-\psi_t^*\|_2 \leq \mu + \frac{2}{\lambda} c_t \max_{j> t} \|\hat{\psi}_{j} -\psi_{j}^*\|_2
\end{equation}
Further, let:
\begin{align}
    \kappa :=~& \max_{t=1}^m \frac{2\,c_t}{\lambda}
\end{align}
Then observe that:
\begin{equation}
     \|\hat{\psi}_{t} -\psi_{t}^*\|_2 \leq \mu + \kappa \, \max_{j> t} \|\hat{\psi}_{j} -\psi_{j}^*\|_2
\end{equation}
Then we have that: $\|\hat{\psi}_{m} -\psi_{m}^*\|_2 \leq \mu$ and by induction, if we assume that $\|\hat{\psi}_{j} -\psi_{j}^*\|_2\leq \mu  \sum_{\tau=0}^{m-j} \kappa^\tau$ for all $j>t$, then:
\begin{equation}
    \forall t\in [m]: \|\hat{\psi}_{t} -\psi_{t}^*\|_2 \leq \mu + \kappa\, \mu \sum_{\tau=0}^{m-t-1} \kappa^\tau = \mu \sum_{t=0}^{m-t}\kappa^\tau \leq \mu \frac{\kappa^{m - t + 1} - 1}{\kappa - 1}
\end{equation}
which concludes the proof of the theorem.
\end{proof}

\subsection{Proof of Theorem~\ref{thm:normality}}\label{app:normality}

\begin{proof} 
We introduce the following notation:
\begin{align}
m_t(Z;\psi, h) :=~& \left(Y - q_t(X_t) - \sum_{j=t+1}^m \psi_j'(T_j - p_{j,t}(X_t)) - \psi_t'(T_t - p_{t,t}(X_t)\right)\, (T_t - p_{t,t}(X_t)) \\
m(Z;\psi, h) :=~& (m_1(Z;\psi, h); \ldots; m_m(Z;\psi, h))
\end{align}
where by $(a;b)$ we denote the concatenation of two vectors. 
Finally, let $\E_S[\cdot]$ denote the empirical average over the samples in $S$ and let:
\begin{align}
    \mom_S(\psi, h) :=~& \E_S\left[m(Z; \psi, h)\right] & \mom_D(\psi, h) :=~& \E\left[m(Z; \psi, h)\right]
\end{align}
Observe that the estimator from Equation~\eqref{eqn:z-estimator-alg} can be equivalently viewed as the solution to an cross-fitted plug-in empirical version of the following vector of moment conditions: 
\begin{align}
    \frac{1}{2}\sum_{O\in \{S,S'\}} \mom_S(\hat{\psi}, \hat{h}_O) = 0
\end{align}
where each $\hat{h}_O$ is trained on samples outside of set $O$. Moreover, the true parameter $\psi^*$ satisfies the population moment conditions at the true nuisance parameters:
\begin{align}
    \mom_D(\psi^*, h^*) = 0
\end{align}
Furthermore, by Lemma~\ref{lem:neyman-ortho-stylized} the moment vector $\mom_D(\psi, h)$ satisfies the property of Neyman orthogonality with respect to $h$. Moreover, by Lemma~\ref{lem:second-order-frechet-stylized}, the second order term of $\mom_D(\psi^*, \hat{h})$ in second-order Taylor expansion around $h^*$, satisfies:
\begin{align}
    \|D_{hh} \mom_{D,t}(\psi^*; \tilde{h})[\hat{h} - h^*, \hat{h}-h^*]\|_{2} \leq~& \|\hat{p}_{t,t}-p_{t,t}^*\|_{2,2} \left(\|\hat{q}_t - q_t^*\|_{2} + 4\, M\,\sum_{j=t}^m \|\hat{p}_{j, t} - p_{j,t}^*\|_2 \right)\\
     =:~& \rho(\hat{h})
\end{align}

Moreover, the Jacobian $J$ of the moment vector $\mom_D$ at the true values $\psi^*, h^*$ is a block upper triangular matrix whose block values are of the form:
\begin{align}
\forall 1\leq t\leq j \leq m: J_{t, j} = \E[\Cov(T_t, T_j\mid X_t)]
\end{align}
Thus its diagonal block values take the form $J_{t, t}=\E[\Cov(T_t, T_t\mid X_t)]\succeq \lambda I$.  Hence, the minimum eigenvalue of $M$ is at least $\lambda$.

Thus our setting and our estimator satisfy all the assumptions required to apply Theorem~3.1 of \cite{chernozhukov2018double} to get the following result:
if we have that $\|\hat{h}-h^*\|_{2,2}=o_p(1)$ and that:
\begin{equation}
\max_{O\in \{S, S'\}} \rho(\hat{h}_O) = o_p(n^{-1/2})
\end{equation}
Then if we let:
\begin{align}
    \Sigma = \E[m(Z;\psi^*, h^*)\, m(Z;\psi^*, h^*)']
\end{align}
and $V=J^{-1} \Sigma (J^{-1})'$, we have that:
\begin{align}
    \sqrt{n} V^{-1/2} (\hat{\psi} - \psi^*) = \frac{1}{\sqrt{n}}\sum_{i=1}^n \bar{m}(Z_i) + o_p(1) \to_d N(0, I_{d\cdot m})
\end{align}
where:
\begin{align}
    \bar{m}(\cdot) = -V^{-1/2} J^{-1}\, m(\cdot; \psi^*, h^*)
\end{align}
Finally, we expand the block structure of $\Sigma$. If we let $\bar{Y}_t := Y - \sum_{j=t+1}^m \psi_j^* T_j$ then we have:
\begin{align}
    m_t(Z;\psi^*, h^*) = (\bar{Y}_{t-1} - \E\left[\bar{Y}_{t-1}\mid X_t\right]) (T_t - \E[T_t\mid X_t]) 
\end{align}
Hence we can write $\Sigma$ in block format of $d\times d$ blocks, such that the $(t, j)$ block for any $j,t \in [m]$ is of the form:
\begin{align}
    \Sigma_{t,j} =~& \E[(\bar{Y}_{t-1} - \E\left[\bar{Y}_{t-1}\mid X_t\right])\, (T_t - \E[T_t\mid X_t])\, (T_j - \E[T_j\mid X_j])'\,  (\bar{Y}_{j-1} - \E\left[\bar{Y}_{j-1}\mid X_j\right])]
\end{align}

The second part of the theorem on the construction of confidence intervals follows then directly by Corollary 3.1 of \cite{chernozhukov2018double}.
\end{proof}

\subsection{Detailed Exposition of Bounds on Parameter $\kappa$}\label{app:kappa}

If we denote with $\Delta = \alpha\, \gamma' + B$, then we can write by recursively expanding the linear Markovian expressions:
\begin{align}
    X_t =~& \Delta^q X_{t-q} + \eta_t + \sum_{\tau=0}^{q-1} \Delta^{\tau} \eta_{t-\tau} + \sum_{\tau=1}^q \Delta^{\tau} \alpha \zeta_{t-\tau}
\end{align}
Combining with the linear observational policy, we get that:
\begin{align}
    T_t =~& \gamma'\Delta^q X_{t-q} + \gamma'\eta_t + \sum_{r=0}^{q-1} \gamma'\Delta^{r} \eta_{t-r} + \sum_{r=1}^q \gamma'\Delta^{r} \alpha \zeta_{t-r} + \zeta_t =: \gamma'\Delta^q X_{t-q} + \iota_{t,q}
\end{align}
Since all error terms are independent of each other across iterations, we have that:
\begin{align}
    \Cov(T_t, T_j\mid X_t) = \E\left[(T_t - \E[T_t\mid X_t])\, (T_j - \E[T_j\mid X_t])\right] = \E\left[\zeta_t\, \iota_{j, j-t}\right] = \E\left[\zeta_t^2\right] \gamma'\Delta^{j-t}\alpha
\end{align}
Thus we conclude that:
\begin{align}
    \frac{2}{\lambda} \sum_{j=t+1}^{m} \|\Cov(T_t, T_j\mid X_t)\|_{op} =~& 2 \sum_{j=t+1}^m |\gamma'\Delta^{j-t}\alpha| \leq 2 \|\gamma\|_2 \|\alpha\|_2 \sum_{j=t+1}^m \|\Delta\|_{op}^{j-t}\\
    =~& 2 \|\gamma\|_2 \|\alpha\|_2 \sum_{\tau=1}^{m-t} \|\Delta\|_{op}^{\tau} = 2 \|\gamma\|_2 \|\alpha\|_2 \|\Delta\|_{op} \frac{\|\Delta\|_{op}^{m-t} - 1}{\|\Delta\|_{op} -1}
\end{align}
Thus we get that:
\begin{align}
    \kappa =  2 \|\gamma\|_2 \|\alpha\|_2 \|\Delta\|_{op} \frac{\|\Delta\|_{op}^{m-1} - 1}{\|\Delta\|_{op}-1}
\end{align}

\subsection{Proof of Theorem~\ref{thm:normality-dependent}}

Consider blocks of size $m$ and let $Z_{b}$ denote the set of all random variables in block $b$, starting from the first state $X_{b, 0}$ of that block and ending in the final outcome $Y_{b, m}$. Moreover, let $\cF_b$ denote the filtration before each block $b$ (i.e. all past variables before $b$) and for each $t\in [m]$:
\begin{align}
\mom_{b, t}(\psi, h) :=~& \E[m_t(Z_{b}; \psi, h_b)\mid \cF_{b}]\\
\mom_{D, t}(\psi, h) :=~& \frac{2}{B}\sum_{b=B/2}^B \mom_{b,t}(\psi, h_b)\\
\mom_{B, t}(\psi, h) :=~& \frac{2}{B}\sum_{b=B/2}^B m_{t}(Z_b; \psi, h_b)
\end{align}
Similarly, let $\mom_{b}$, $\mom_{D}$ and $\mom_{B}$, be the concatenation of these moment vectors across the $m$ values of $\kappa$.

Note that because the moment function $m_t$ is linear $\psi$, we can write:
\begin{align}
    \mom_{D}(\psi, h) = J(h)\psi + K(h)\\
    \mom_{B}(\psi, h) = \hat{J}(h)\psi + \hat{K}(h)
\end{align}
where $J(h)$ and $\hat{J}(h)$ are $(m d)\times (m d)$ block upper triangular matrices with blocks of size $d\times d$, such that for $1\leq t\leq j\leq m$ the $(t,j)$ blocks take the form:
\begin{align}
    J_{t,j}(h) =~& \frac{2}{B} \sum_{b=B/2}^B \E[(T_{b,t} - p_{b,t,t}(X_{b,t}))\, (T_{b,j} - p_{b,j,t}(X_{b,t}))'\mid \mcF_b]\\
    \hat{J}_{t,j}(h) =~& \frac{2}{B} \sum_{b=B/2}^B (T_{b,t} - p_{b,t,t}(X_{b,t}))\, (T_{b,j} - p_{b,j,t}(X_{b,t}))'
\end{align}

By a second order Taylor expansion of $\mom_{D}(\psi, h)$ around $\psi^*, h^*$, and due to orthogonality of the moment with respect to $h$, the linearity of the moment with respect to $\psi$, and the fact that $\mom_D(\psi^*, h^*)=0$ we have:
\begin{align}
    \mom_D(\psi, h) = \nabla_{\psi} \mom_D(\psi^*, h^*)\, (\psi - \psi^*) + \xi = J(h^*)\, (\psi - \psi^*) + \xi
\end{align}
where $\xi$, satisfies that:
\begin{align}
    \|\xi\| \leq c_0\, m\, d\, \frac{2}{B}\sum_{b=B/2}^B \|\hat{h}_b - h^*\|_{b, 2}^2
\end{align}
for some universal constant $c_0$, where we denote with $\|f\|_{b, 2}=\sqrt{\E[\|f\|_{2}^2\mid \mcF_b]}$ and where $\hat{h}_b$ are the nuisance models used on the samples in block $b$ during the progressive nuisance estimation. For simplicity of exposition, let:
\begin{align}
    \epsilon_B(\hat{h}) :=  \frac{2}{B}\sum_{b=B/2}^B \|\hat{h}_b - h^*\|_{b, 2}^2
\end{align}
Hence,
\begin{align}
    J(h^*)\, (\psi-\psi^*) =~& \mom_D(\psi, h) - \xi
\end{align}

Moreover, observe that by the definition of our estimator, we have:
\begin{equation}
    \mom_{B}(\hat{\psi}, \hat{h}) = 0
\end{equation}
Thus we conclude that:
\begin{align}
    J(h^*) (\hat{\psi}-\psi^*) =~& \left( \mom_D(\hat{\psi}, \hat{h}) - \mom_{B}(\hat{\psi}, \hat{h})\right)  - \xi
\end{align}

Moreover, observe that $\mom_D(\psi^*, h^*)=0$. Thus we can further expand the right hand side as:
\begin{align}
    \mom_D(\hat{\psi}, \hat{h}) - \mom_{B}(\hat{\psi}, \hat{h}) =~& -\mom_{B}(\psi^*, h^*) + \mom_D(\hat{\psi}; \hat{h}) - \mom_B(\hat{\psi}; \hat{h}) - (\mom_D(\psi^*, h^*) - \mom_B(\psi^*, h^*))\\
    =~& \underbrace{-\mom_{B}(\psi^*, h^*)}_{\text{Asymptotic normal term}} + \underbrace{\mom_D(\hat{\psi}; \hat{h}) - \mom_D(\psi^*, h^*) - (\mom_B(\hat{\psi}; \hat{h}) - \mom_B(\psi^*, h^*))}_{A=\text{Stochastic equicontinuous term}}
\end{align}

First we prove that the term $A$ decays to zero faster than a root-$B$ rate. To achieve this we further split $A$ as the sum of two stochastic equicontinuous terms:
\begin{align}
    A_1 :=~& \mom_D(\hat{\psi}; \hat{h}) - \mom_D(\psi^*, \hat{h}) - (\mom_B(\hat{\psi}; \hat{h}) - \mom_B(\psi^*, \hat{h}))\\
    A_2 :=~& \mom_D(\psi^*; \hat{h}) - \mom_D(\psi^*, h^*) - (\mom_B(\psi^*; \hat{h}) - \mom_B(\psi^*, h^*))
\end{align}

We note that due to linearity of the moment we have:
\begin{align}
    \|A_1\|_2 =~& \left\|J(\hat{h}) (\hat{\psi}-\psi^*) - \hat{J}(\hat{h}) (\hat{\psi}-\psi^*)\right\|_2
    = \|(J(\hat{h}) - \hat{J}(\hat{h}))\, (\hat{\psi}-\psi^*)\|_2
\end{align}
Let $A_{1t}$ denote the sub-vector of $A_1$ associated with moment $\mom_{D,t}$ and correspondingly let $J_t(\hat{h}) - \hat{J}_t(\hat{h})$ the sub-matrix containing the corresponding rows. Note that these rows are non-zero only on the parameters $\underline{\psi}_{t} := (\psi_{t}; \ldots; \psi_m)$. Thus we have:
\begin{align}
    \|A_{1,t}\|_2 \leq~&  \sum_{j=t}^m \left\|J_{t,j}(\hat{h}) - \hat{J}_{t,j}(\hat{h})\right\|_{op} \|\hat{\psi}_j - \psi_j^*\|_2
\end{align}
Each entry of $J_{t,j}(\hat{h}) - \hat{J}_{t,j}(\hat{h})$ is a martingale sequence. Thus by a Hoeffding-Azuma bound we have, w.p. $1-\delta$:
\begin{align}
    \forall 1\leq t\leq j \leq m: \|J_{t,j}(\hat{h}) - \hat{J}_{t,j}(\hat{h})\|_{op} \leq c_0\left(d\,\sqrt{\frac{\log(d\,m/\delta)}{B}}\right)
\end{align}
for some universal constant $c_0$.

Moreover, since $\hat{h}_b$ is trained in a progressive manner, we also have that each term in the vector $A_2$ is a martingale.
By a martingale Bernstein inequality (see e.g. \cite{freedman1975} or \cite{peel2013empirical}), and since we assumed that $\|\psi_t^*\|_2\leq M$ (for some $M>0$, which for simplicity we assume $M>1$), w.p. $1-\delta$:
\begin{multline}
    \|A_2\|_{2} \leq\\
      c_1\,d\,\left(\max_{\substack{t\in [m]\\ i\in [d]}}\sqrt{\frac{\frac{1}{B}\sum_{b}\E[(m_{t,i}(Z_b; \psi^*, \hat{h})-m_{t,i}(Z_b; \psi^*, h^*))^2\mid \cF_b]\, \log(m\,d/\delta)}{B}} +\frac{ m\, M\,d\log(m\,d/\delta)}{B}\right)
\end{multline}
for some universal constant $c_1$. Moreover, by the Lipschitzness of $m_{t,i}$ with respect to the nuisances, we have that:
\begin{align}
    \E[(m_{t,i}(Z_b; \psi^*, \hat{h})-m_{t,i}(Z_b; \psi^*, h^*))^2\mid \cF_b]\leq
     c_2\, m\,d\, \|\hat{h}_b-h^*\|_{b, 2}^2
\end{align}
for some universal constant $c_2$.

Thus we conclude that:
\begin{align}
    J(h^*) (\hat{\psi}-\psi^*) = - \mom_B(\phi^*, h^*) + \rho
\end{align}
where if we denote with $\rho_t$ the coordinates of $\rho$ associated with $\mom_{D,t}$, then w.p. $1-\delta$:
\begin{align}
    \forall t\in [m]: \|\rho_t\|_2 \leq~& c_0\, d\, \sqrt{\frac{\log(d\,m/\delta)}{B}} \sum_{j\geq t}\|\hat{\psi}_j-\psi_j^*\|_2 + c_1\,d\,\left(\sqrt{\frac{\epsilon_B(\hat{h})\, \log(m\,d/\delta)}{B}} +\frac{ m\, M\,d\log(m\,d/\delta)}{B}\right)\\
    ~& + c_0\, m\, d\, \epsilon_B(\hat{h})
\end{align}
Applying an AM-GM inequality we have:
\begin{align}
    \|\rho_t\|_2 \leq~& c_0\, d\, \sqrt{\frac{\log(d\,m/\delta)}{B}} \sum_{j\geq t}\|\hat{\psi}_j-\psi_j^*\|_2  + \frac{2c_1 \, M\, m\, d\, \log(d/\delta)}{B} + (c_0 + c_1) \, m\, d\, \epsilon_B(\hat{h})
\end{align}

\paragraph{$\ell_2$-error rate.} First note that:
\begin{align}
    J_{t,j}(h^*) := \E[\Cov(T_t,T_j\mid X_t)]
\end{align}
Since $J_{t,t}(h^*) = \E[\Cov(T_t,T_t\mid X_t)] \succeq \lambda I$, then we have that:
\begin{align}
    \|\hat{\psi}_t-\psi_t^*\|_2 \leq~&  \frac{1}{\lambda}\sum_{j=t+1}^m \|J_{t,j}(h^*)\|_{op} \|\hat{\psi}_j - \psi_j^*\|_2 + \frac{1}{\lambda} \|\mom_{B,t}(\psi^*, h^*)\|_2 + \frac{1}{\lambda}\|\rho\|_2
\end{align}
If we let $\kappa := \max_{t=1}^m \frac{1}{\lambda} \sum_{j=t+1}^m \|J_{t,j}(h^*)\|_{op}$, then
note that for $B\geq \frac{2\,c_0^2 d^2 \log(d\,m/\delta)}{\lambda^2}$, and 
\begin{align}
    \|\hat{\psi}_t-\psi_t^*\|_2 \leq~& \frac{1}{\lambda}\|\mom_{B,t}(\psi^*, h^*)\|_2 + \frac{1}{2} \|\hat{\psi}_t-\psi_t^*\|_2 + \underbrace{\left(\kappa + c_0\, d\, \sqrt{\frac{\log(d\,m/\delta)}{B}}\right)}_{\kappa_B}\max_{j=t+1}^m \|\hat{\psi}_j-\psi_j^*\|_2\\
    ~& + \frac{2c_1 \, M\, m\, d\,\log(d/\delta)}{\lambda\, B} + \frac{(c_0 + c_1) \, m\, d}{\lambda}  \epsilon_B(\hat{h})
\end{align}
and hence:
\begin{align}
    \|\hat{\psi}_t-\psi_t^*\|_2 \leq~& 
    \frac{2}{\lambda}\|\mom_B(\psi^*, h^*)\|_2 + \frac{4c_1 \, M\, m\, d\,\log(d/\delta)}{\lambda\, B} + \frac{2(c_0 + c_1) \, m\, d}{\lambda}  \epsilon_B(\hat{h}) + \kappa_B\max_{j=t+1}^m \|\hat{\psi}_j-\psi_j^*\|_2
\end{align}
Moreover, every coordinate of $\mom_B(\psi^*,h^*)$ is a martingale and thereby by a Hoeffding-Azuma inequality, w.p. $1-\delta$:
\begin{align}
    \|\hat{\psi}_t-\psi_t^*\|_2 \leq~& 
    \frac{2c_3}{\lambda}\sqrt{\frac{m\,d\,\log(m\,d/\delta)}{B}} + \frac{4c_1 \, M\, m\, d\,\log(d/\delta)}{\lambda\, B} + \frac{2(c_0 + c_1) \, m\, d}{\lambda}  \epsilon_B(\hat{h}) + \kappa_B\max_{j=t+1}^m \|\hat{\psi}_j-\psi_j^*\|_2
\end{align}
for some universal constant $c_3$.

Via an inductive argument identical to that at the end of the proof of Theorem~\ref{thm:mse}, we can then conclude that w.p. $1-\delta$, $\forall t\in [m]$:
\begin{align}
    \|\hat{\psi}_t-\psi_t^*\|_2 \leq~& \left(
    \frac{2c_3}{\lambda}\sqrt{\frac{m\,d\,\log(m\,d/\delta)}{B}} + \frac{4c_1 \, M\, m\, d\,\log(d/\delta)}{\lambda\, B} + \frac{2(c_0 + c_1) \, m\, d}{\lambda}  \epsilon_B(\hat{h})\right)  \frac{\kappa_B^{m-t+1}-1}{\kappa_B - 1}
\end{align}

Replacing this bound back in the upper bound on $\rho$, we have that w.p. $1-\delta$:
\begin{align}
    \|\rho\|_2 \leq~& \frac{c_4 \, M\, m^3\, d\, \log(m\,d/\delta)}{\lambda\, B} \frac{\kappa_B^{m-t+1}-1}{\kappa_B - 1} + c_5\, m^2\, d\, \epsilon_B(\hat{h}) + c_0 m^3 d \sqrt{\frac{\log(d\,m/\delta)}{B}} \frac{\kappa_B^{m-t+1}-1}{\kappa_B - 1} \epsilon_B(\hat{h}) 
\end{align}
for some constants $c_4, c_5$. For any constant $\epsilon$ and for $B\geq \frac{2\,c_0^2 d^2 \log(d\,m/\delta)}{\epsilon^2}$, we have that:
\begin{align}
    \frac{\kappa_B^{m-t+1}-1}{\kappa_B - 1} \leq \frac{(\kappa + \epsilon)^{m-t+1}-1}{(\kappa+\epsilon) - 1}
\end{align}

Thus as long as for some constant $\epsilon>0$:
\begin{align}
m^3 \sqrt{\log(d\,m)}\frac{(\kappa + \epsilon)^{m-t+1}-1}{\kappa + \epsilon - 1} \E[\epsilon_B(\hat{h})]=~& o(1),\\
\sqrt{B} m^2 \E[\epsilon_B(\hat{h})] =~& o(1)\\
\frac{m^3 \log(d\,m)}{\sqrt{B}} \frac{(\kappa + \epsilon)^{m-t+1}-1}{\kappa + \epsilon - 1} =~& o(1)
\end{align}
we have that:
\begin{align}
    \sqrt{B}\, \|\rho\|_2 = o_p(1)
\end{align}

\paragraph{Asymptotic normal term.} Thus it suffices to show that:
\begin{equation}
    -\sqrt{B/2}\, \mom_{B}(\psi^*, h^*) \rightarrow_d N(0, \Sigma)
\end{equation}
as we can then apply Slutzky's theorem to get the desired result.

Observe that under the true values of $\psi^*, h^*$, the moment $m_{b, t}(Z_b; \psi^*, h^*)$ can be written as:
\begin{align}
    m_{b, t}(Z_b; \psi^*, h^*) = \left(\sum_{j=t+1}^{m} \mu' B^{m-j} \eta_{b,j} + \epsilon_{b,m}\right) \zeta_{b,t}
\end{align}
Thus the random vector $m_{b}(Z_b; \psi^*, h^*)$ is only a function of the exogenous shocks, $\{\zeta_{b,t}, \eta_{b,t}\}_{t=1}^m, \epsilon_m$. Since these variables are independent and identically distributed across all blocks\footnote{Our theorem would easily extend if these variables form a martingale with a non-zero variance at each step}, we can conclude that the conditional variance:
\begin{equation}
    \Var(m_b(Z_b; \psi^*, h^*) \mid \cF_b) = \Sigma \succ \sigma^2 \lambda I
\end{equation}
where $\Sigma$ is a block diagonal matrix (since $\zeta_t$ is independent of $\zeta_{t'}$ for any $t\neq t'$ and mean zero), with diagonal blocks $\E[\left(\sum_{j=t+1}^{m} \mu' B^{m-j} \eta_{b,j} + \epsilon_{b,m}\right)^2 \zeta_{b,t}\zeta_{b,t}']$ and thus we can take $\sigma^2 = \E[\epsilon_{b,m}^2]$ and $\lambda$ is the minimum eigenvalue of $\E[\zeta_{b,t}\zeta_{b,t}']$. Importantly $\Sigma$ is a constant, independent of $b$ and $\mcF_b$, positive definite symmetric matrix with a lower bound on the minimum eigenvalue that is independent of $m$. We can thus apply a Martingale Central Limit Theorem (see e.g. \cite{Hall1980}) to get the desired statement:
\begin{equation}
    \sqrt{B/2} \Sigma^{-1/2} J(h^*) (\hat{\psi}-\psi^*) \rightarrow_d N(0, I)
\end{equation}

\section{Omitted Proofs from Generalization to SNMMs}

\subsection{Proof of Lemma~\ref{lem:cntf-char-rho}}
\begin{proof}
First we note that by invoking the sequential conditional exogeneity condition, for any dynamic policy $\pol$ and for any treatment sequence $\bar{\tau}_m$, we have:
\begin{align}
    \gamma(\bar{x}_j, (\bar{\tau}_{j-1}, \pol(\bar{x}_j, \bar{\tau}_{j-1}))) =~& \E[Y^{(\bar{\tau}_{j-1}, \underline{\pol}_j)} - Y^{(\bar{\tau}_{j-1}, 0, \underline{\pol}_{j+1})}\mid \bar{X}_j=\bar{x}_j, \bar{T}_{j-1} = \bar{\tau}_{j-1}, T_j=\pol(\bar{X}_j, \bar{T}_{j-1})] \nonumber\\
    =~&  \E[Y^{(\bar{\tau}_{j-1}, \underline{\pol}_j)} - Y^{(\bar{\tau}_{j-1}, 0, \underline{\pol}_{j+1})}\mid \bar{X}_j=\bar{x}_j, \bar{T}_{j-1} = \bar{\tau}_{j-1}, T_j=\tau_j] \nonumber\\
    =~&  \E[Y^{(\bar{\tau}_{j-1}, \underline{\pol}_j)} - Y^{(\bar{\tau}_{j-1}, 0, \underline{\pol}_{j+1})}\mid \bar{X}_j=\bar{x}_j, \bar{T}_{j} = \bar{\tau}_{j}]\nonumber
\end{align}
Using the latter fact and invoking the definition of $\rho_j(\bar{x}_j, \bar{\tau}_j)$, we can write:
\begin{align}
    \rho_j(\bar{x}_j, \bar{\tau}_j) =~& \gamma(\bar{x}_j, (\bar{\tau}_{j-1}, \pol(\bar{x}_j, \bar{\tau}_{j-1}))) - \gamma(\bar{x}_j, \bar{\tau}_j) \nonumber\\
    =~& \E\left[Y^{(\bar{\tau}_{j-1}, \underline{\pol}_j)} - Y^{(\bar{\tau}_{j-1}, 0, \underline{\pol}_{j+1})} - \left(Y^{(\bar{\tau}_{j}, \underline{\pol}_{j+1})} - Y^{(\bar{\tau}_{j-1}, 0, \underline{\pol}_{j+1})}\right)\mid \bar{X}_j=\bar{x}_j, \bar{T}_j = \bar{\tau}_j\right]\nonumber\\
    =~& - \E\left[Y^{(\bar{\tau}_j, \underline{\pol}_{j+1})} - Y^{(\bar{\tau}_{j-1}, \underline{\pol}_j)}\mid \bar{X}_j=\bar{x}_j, \bar{T}_j = \bar{\tau}_j\right]  \label{eqn:rho-identity}
\end{align}

Second, note that by the definition of the observed $Y$, we have that:
\begin{align}
    Y \equiv Y^{(\bar{T}_m)}
\end{align}
and that for any treatment sequence $\bar{\tau}_m$, via a telescoping sum argument, we can write:
\begin{align}
Y^{(\tau)} - Y^{(\bar{\tau}_{t-1}, \underline{\pol}_t)} = \sum_{j=t}^m Y^{(\bar{\tau}_j, \underline{\pol}_{j+1})} - Y^{(\bar{\tau}_{j-1}, \underline{\pol}_j)}
\end{align}
Thus, applying linearity of expectation, the tower law of expectations and Equation~\eqref{eqn:rho-identity}, we have:
\begin{align}
    \E\left[Y - Y^{(\bar{\tau}_{t-1}, \underline{\pol}_t)} \mid \bar{X}_t, \bar{T}_{t}=\bar{\tau}_t\right] =~& 
    \E\left[Y^{(T)} - Y^{(\bar{T}_{t-1}, \underline{\pol}_t)} \mid \bar{X}_t, \bar{T}_{t}=\bar{\tau}_t\right]\\
    =~& 
    \E\left[\sum_{j=t}^m Y^{(\bar{T}_j, \underline{\pol}_{j+1})} - Y^{(\bar{T}_{j-1}, \underline{\pol}_j)} \mid \bar{X}_t, \bar{T}_{t}=\bar{\tau}_t\right]\\
    =~& 
    \sum_{j=t}^m \E\left[Y^{(\bar{T}_j, \underline{\pol}_{j+1})} - Y^{(\bar{T}_{j-1}, \underline{\pol}_j)} \mid \bar{X}_t, \bar{T}_{t}=\bar{\tau}_t\right]\\
    =~& 
    \sum_{j=t}^m \E\left[\E\left[Y^{(\bar{T}_j, \underline{\pol}_{j+1})} - Y^{(\bar{T}_{j-1}, \underline{\pol}_j)}\mid \bar{X}_j, \bar{T}_j\right] \mid \bar{X}_t, \bar{T}_{t}=\bar{\tau}_t\right]\\
    =~& 
    \sum_{j=t}^m - \E\left[\rho(\bar{X}_j, \bar{T}_j) \mid \bar{X}_t, \bar{T}_{t}=\bar{\tau}_t\right]\\
    =~& - \E\left[
    \sum_{j=t}^m \rho(\bar{X}_j, \bar{T}_j) \mid \bar{X}_t, \bar{T}_{t}=\bar{\tau}_t\right]
\end{align}
By re-arranging we conclude the desired property.
\end{proof}

\subsection{Proof of Lemma~\ref{lem:moment-restrictions}}
\begin{proof}
For simplicity of notation, for any $f\in \mcF$, let:
\begin{align}
    \bar{f}(\bar{X}_t, \bar{T}_t) = f(\bar{X}_t, \bar{T}_t) - \E[f(\bar{X}_t, \bar{T}_t)\mid \bar{X}_t, \bar{T}_{t-1}]
\end{align}
and observe that by the definition of $\bar{f}$, we crucially have that for any $f\in \mcF$:
\begin{align}
\E\left[\bar{f}(\bar{X}_t, \bar{T}_t)\mid \bar{X}_t, \bar{T}_{t-1}\right] = 0.
\end{align}

By Lemma~\ref{lem:cntf-char-rho}, we have that at the true $\psi^*$:
\begin{align}
    \E\left[ H_t(\psi^*)\, \bar{f}(\bar{X}_t, \bar{T}_t)\right]=~& \E\left[ \E\left[H_t(\psi^*) \mid \bar{X}_t, \bar{T}_t\right]\, \bar{f}(\bar{X}_t, \bar{T}_t)\right] \tag{tower law}\\
    =~& \E\left[ \E\left[Y^{(\bar{T}_{t-1}, \bar{\pol}_t)} \mid \bar{X}_t, \bar{T}_t\right]\, \bar{f}(\bar{X}_t, \bar{T}_t)\right] \tag{Equation~\eqref{eqn:cntf-rho}}\\
    =~& \E\left[ \E\left[Y^{(\bar{T}_{t-1}, \bar{\pol}_t)} \mid \bar{X}_t, \bar{T}_{t-1}\right]\, \bar{f}(\bar{X}_t, \bar{T}_t)\right] \tag{sequential exogeneity}\\
    =~& \E\left[ \E\left[Y^{(\bar{T}_{t-1}, \bar{\pol}_t)} \mid \bar{X}_t, \bar{T}_{t-1}\right]\, \E\left[\bar{f}(\bar{X}_t, \bar{T}_t)\mid \bar{X}_t, \bar{T}_{t-1}\right]\right] \tag{tower law}\\
    =~& 0 \nonumber
\end{align}
\end{proof}

\subsection{Proof of Universal Neyman Orthogonality Property of Loss}

First we define for any functional $L(f)$ the Frechet derivative as:
\begin{equation}
    D_f L(f)[\nu] = \frac{\partial}{\partial t} L(f + t\, \nu)\mid_{t=0} 
\end{equation}
Similarly, we can define higher order derivatives, denoted as $D_{g,f} L(f, g)[\mu,\nu]$, $D_{h, g,f} L(h, f, g)[\kappa, \mu,\nu]$.

\begin{lemma}[Universal Neyman orthogonal loss]\label{lem:neyman-ortho-snmm}
The loss function:
\begin{equation}
    \loss_{D,t}(\psi; h) := \frac{1}{2}\E\left[ \left(Y - q_t(X_t) - \sum_{j=t+1}^m \psi_j'(T_j - p_{j,t}(X_t)) - \psi_t'(T_t - p_{t,t}(X_t))\right)^2\right]
\end{equation}
satisfies universal Neyman orthogonality with respect to $h$, i.e.:
\begin{equation}
    \forall h, \psi_t, \tilde{\psi}: D_{h,\psi_t} \loss_{D,t}(\tilde{\psi}; h^*)[h - h^*, \psi_t - \tilde{\psi}_t] = 0
\end{equation}
\end{lemma}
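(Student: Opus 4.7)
\textbf{Proof plan for Lemma~\ref{lem:neyman-ortho-snmm}.}

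My plan is to reduce universal Neyman orthogonality of the loss to the (non-universal) Neyman orthogonality of the moment $\mom_{D,t}$ already established in Lemma~\ref{lem:neyman-ortho-stylized}, and then verify that the cancellation in that earlier proof did not actually rely on the evaluation point being $\psi^*$. Concretely, I would begin by computing the first-order derivative of the quadratic loss with respect to $\psi_t$. A direct calculation shows
\begin{align}
D_{\psi_t}\loss_{D,t}(\tilde{\psi}; h)[\psi_t - \tilde{\psi}_t]
= -\mom_{D,t}(\tilde{\psi}; h)\,'(\psi_t - \tilde{\psi}_t),
\end{align}
so the mixed derivative we must bound equals
\begin{align}
D_{h,\psi_t}\loss_{D,t}(\tilde{\psi}; h^*)[h - h^*,\,\psi_t-\tilde{\psi}_t]
= -\left(D_h \mom_{D,t}(\tilde{\psi}; h^*)[h-h^*]\right)'(\psi_t - \tilde{\psi}_t).
\end{align}
It therefore suffices to prove the vector identity $D_h \mom_{D,t}(\tilde{\psi}; h^*)[h-h^*]=0$ for every $\tilde{\psi}$ (not only at $\psi^*$).

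Next I would split the directional derivative along the three families of nuisance components $q_t$, $p_{j,t}$ for $j>t$, and $p_{t,t}$, exactly as in the proof of Lemma~\ref{lem:neyman-ortho-stylized}. For the $q_t$ direction and each $p_{j,t}$ with $j>t$, the derivative is a single expectation of the form $\E[(g(X_t)-g^*(X_t))\cdot(T_t - p_{t,t}^*(X_t))]$, up to a multiplicative constant that may involve $\tilde{\psi}_j$. Conditioning on $X_t$ and using $p_{t,t}^*(X_t)=\E[T_t\mid X_t]$ kills the inner factor, so these pieces vanish regardless of the value of $\tilde{\psi}$. For the $p_{t,t}$ direction there are two contributions, because $p_{t,t}$ enters the loss both through the residual and through the Riesz representer; one contribution is $\tilde{\psi}_t'\E[(p_{t,t}-p_{t,t}^*)(T_t-p_{t,t}^*)]$, which again vanishes by the tower law, and the other is an expectation of $(p_{t,t}-p_{t,t}^*)(X_t)$ multiplied by
\[
\E\!\left[Y - q_t^*(X_t) - \sum_{j>t}\tilde{\psi}_j'(T_j-p_{j,t}^*(X_t)) - \tilde{\psi}_t'(T_t-p_{t,t}^*(X_t))\,\Big|\,X_t\right],
\]
which is identically zero because each of $Y, T_j$ has been centered by its $X_t$-conditional mean at $h^*$.

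The key conceptual point I want to highlight — and the only place any care is needed — is that this inner conditional expectation vanishes identically in $\tilde{\psi}$: the centering property of $q_t^*, p_{j,t}^*, p_{t,t}^*$ makes each residual mean-zero given $X_t$, so one does not need $\tilde{\psi}=\psi^*$ to eliminate the $\tilde{\psi}$-dependent terms. This is precisely what upgrades Lemma~\ref{lem:neyman-ortho-stylized} (orthogonality at $\psi^*$) into universal orthogonality of the loss, and it is exactly the structural feature that places our setting inside the orthogonal statistical learning framework of \citep{foster2019orthogonal,chernozhukov2018plugin}. No other obstacles arise; the conclusion follows by combining the three vanishing directional derivatives via the chain rule, yielding $D_h\mom_{D,t}(\tilde{\psi}; h^*)[h-h^*]=0$ and hence the claim.
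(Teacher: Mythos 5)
Your proof is correct and follows essentially the same route as the paper's: both reduce the mixed derivative $D_{h,\psi_t}\loss_{D,t}(\tilde{\psi};h^*)[h-h^*,\cdot]$ to directional derivatives along the three nuisance families $q_t$, $p_{j,t}$ ($j>t$) and $p_{t,t}$, and kill each one by conditioning on the relevant history and using that the true nuisances are conditional means, so that no cancellation requires $\tilde{\psi}=\psi^*$. Your intermediate step of routing through the moment function via $D_{\psi_t}\loss_{D,t}(\tilde{\psi};h)[\nu_t]=-\mom_{D,t}(\tilde{\psi};h)'\nu_t$ is only a reordering of the two differentiations, not a different argument.
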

\begin{proof}
Let $\nu_t := \psi_t - \tilde{\psi}_t$. First note that by the definition of the Frechet derivative and the chain rule of differentiation:
\begin{align}
    D_{h,\psi_t} \loss_{D,t}(\tilde{\psi}; h^*)[h - h^*, \nu_t] =~& D_{q_t, \psi_t} \loss_{D,t}(\tilde{\psi}; h^*)[q_t - q_t^*, \nu_t]
    + \sum_{j=t}^m D_{p_{j,t}, \psi_t} \loss_{D,t}(\tilde{\psi}; h^*)[p_{j,t} - p_{j,t}^*, \nu_t]
\end{align}
We consider each component of the nuisance parameters separately. We also introduce the short-hand notation: $I_t=(\bar{X}_t, \bar{T}_{t-1})$ and not that $\bar{X}_t$ also contains $X_0$.

For $q_t$, we have:
\begin{align}
    D_{q_t, \psi_t} \loss_{D,t}(\tilde{\psi}; h^*)[q_t - q_t^*, \nu_t] =~& -\E[\nu_t(X_0)'(T_t - p_{t,t}^*(I_t))]\\
    =~& - \E[\nu_t(X_0)'(T_t - \E[T_t\mid I_t])] = 0
\end{align}
For $p_{j,t}$ with $j>t$, we have:
\begin{align}
    D_{p_{j,t} \psi_t} \loss_{D,t}(\tilde{\psi}; h^*)[p_{j,t} - p_{j,t}^*, \nu_t] =~& \E[\nu_t(X_0)'(T_t - p_{t,t}^*(I_t))]\\
    =~& \E[\nu_t(X_0)'(T_t - \E[T_t\mid I_t])] = 0
\end{align}
For $p_{t,t}$, we have:
\begin{align}
    D_{p_{t,t} \psi_t} \loss_{D,t}(\tilde{\psi}; h^*)[p_{t,t} - p_{t,t}^*, \nu_t] =~& \E[\nu_t(X_0)'(T_t - p_{t,t}^*(I_t))]\\
    ~& - \sum_{i=1}^r \E\left[\nu_{t,i}(X_0)\, (Y - q_t^*(I_t))\right] \\
    ~& + \sum_{i=1}^r\sum_{j=t}^m \E\left[\nu_{t,i}(X_0)\, \tilde{\psi}_j(X_0))'(T_j - p_{j,t}^*(I_t))\right] \\
    =~&  0
\end{align}
\end{proof}

\subsection{Proof of Lemma~\ref{lem:general}}

\begin{lemma}\label{lem:cov-property}
For any random vectors $I, X, Y$ and any two vector-valued functions $f, g$, we have that:
\begin{align}
    \E[(X - f(I))\, (Y - g(I))'\mid I] = \Cov(X, Y\mid I) + \E\left[\left(\E[X\mid I] - f(I)\right)\, \left(\E[Y\mid I] - g(I)\right)' \mid I\right]
\end{align}
\end{lemma}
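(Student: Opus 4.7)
The plan is to prove this identity by the classic add–and–subtract trick, inserting the conditional means $\E[X\mid I]$ and $\E[Y\mid I]$ into each of the two residual factors, expanding the resulting outer product into four terms, and then using the tower property to kill the two cross terms.

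Specifically, I would write
\begin{align}
X - f(I) =~& (X - \E[X\mid I]) + (\E[X\mid I] - f(I)), \\
Y - g(I) =~& (Y - \E[Y\mid I]) + (\E[Y\mid I] - g(I)),
\end{align}
and then expand $(X-f(I))(Y-g(I))'$ as a sum of four outer products. Taking $\E[\cdot\mid I]$ of the expansion, the first term contributes exactly $\Cov(X, Y\mid I)$ by definition, and the last term is $I$-measurable and hence passes through the conditional expectation to give the second term on the right-hand side of the claimed identity.

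The main step is to verify that both cross terms vanish. For the cross term $\E[(X - \E[X\mid I])(\E[Y\mid I] - g(I))'\mid I]$, the factor $(\E[Y\mid I] - g(I))'$ is $I$-measurable and can be pulled outside the conditional expectation, leaving $\E[X - \E[X\mid I]\mid I] = 0$ by the tower/pulling-out property; the symmetric argument kills the other cross term. Summing the four contributions yields the claimed decomposition. There is no real obstacle here: it is just the standard bias/variance-style decomposition for conditional second moments of residuals, and the only care required is to respect the order of the factors (outer product, not inner product) so the transpose is placed correctly throughout.
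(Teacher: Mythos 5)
Your proof is correct and is essentially the same argument as the paper's: the paper performs the same add-and-subtract decomposition, just organized as two sequential two-term splits (first in the $X$ factor, then in the $Y$ factor) rather than a simultaneous four-term expansion, with each cross term killed by the same pulling-out-measurable-factors argument you use.
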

\begin{proof}
Let:
\begin{align}
A(f, g) :=~& \E[(X - f(I))\, (Y - g(I))']\\
B(f, g) :=~&\E\left[\left(\E[X\mid I] - f(I)\right)\, \left(\E[Y\mid I] - g(I)\right)'\right]
\end{align}
By simple numeric manipulations we have:
\begin{align}
    A(f, g) =~& \E[(X - \E[X\mid I])\, (Y - g(I))'\mid I] + \E[(\E[X\mid I] - f(I))\, (Y - g(I))'\mid I]\\
    =~& \E[(X - \E[X\mid I])\, (Y - g(I))'\mid I] + \E[(\E[X\mid I] - f(I))\, (Y - \E[Y\mid I])'\mid I] + B(f,g)\\
    =~& \E[(X - \E[X\mid I])\, (Y - g(I))'\mid I] + B(f,g)\\
    =~& \E[(X - \E[X\mid I])\, (Y - \E[Y\mid I])'\mid I] + \E[(X - \E[X\mid I])\, (\E[Y\mid I] - g(I))'\mid I] + B(f,g)\\
    =~& \E[(X - \E[X\mid I])\, (Y - \E[Y\mid I])'\mid I] + B(f,g)\\
    =~& \Cov(X, Y\mid I)  + B(f,g)
\end{align}
\end{proof}

\begin{proof}[Proof of Lemma~\ref{lem:general}]
First we define for any functional $L(f)$ the Frechet derivative as:
\begin{equation}
    D_f L(f)[\nu] = \frac{\partial}{\partial t} L(f + t\, \nu)\mid_{t=0} 
\end{equation}
Similarly, we can define higher order derivatives, denoted as $D_{g,f} L(f, g)[\mu,\nu]$ and $D_{h,g,f}L(f,g,h)[\kappa, \mu, \nu]$.

We first remind the definition of the population loss that we will use throughout the proof:
\begin{equation}
    \loss_{D,t}(\psi_t; \underline{\psi}_t, h) := \frac{1}{2}\E\left[ \left(Y - q_t(X_t) - \sum_{j=t+1}^m \psi_j'(T_j - p_{j,t}(X_t)) - \psi_t'(T_t - p_{t,t}(X_t))\right)^2\right]
\end{equation}
For simplicity, in the remainder of the proof we will use the short-hand notation $\loss_{D,t}\equiv \loss_D$, since the time $t$ will be evident by the inputs to the loss. Moreover, we'll use the shorthand notation $I_t := (\bar{X}_t, \bar{T}_{t-1})$. Moreover, we will be using the shorthand notation $\tilde{T}_{t} \equiv \tilde{T}_{t,t}$ and $p_{t,t}^* \equiv p_t$, $\hat{p}_{t,t} \equiv \hat{p}_t$. Finally, let:
\begin{align}
\nu_t :=~& \hat{\psi}_t - \psi_t^{*,n} & \nu_t^* :=~& \hat{\psi}_t - \psi_t^{*}
\end{align}

\paragraph{Strong convexity.} First, we argue that the plug-in loss function is functionally $\lambda$-strongly convex. 
For any $\psi_t\in \Psi_t^n$:
\begin{align}
    D_{\psi_t, \psi_t}  \loss_{D}(\psi_t; \hat{\underline{\psi}}_{t+1}, \hat{h}) [\nu_t, \nu_t]
    =~& \E\left[\nu_t(X_0)' \tilde{T}_{t} \tilde{T}_{t}' \nu_t(X_0)\right]\\
    =~& \E\left[\nu_t(X_0)' \Cov(Q_{t,t}, Q_{t,t}\mid I_t)\, \nu_t(X_0)\right]
     + \E\left[\left(\nu_t(X_0)' (p_{t}^*(I_t) - \hat{p}_{t}(I_t))\right)^2\right]\\
    \geq~& \E\left[\nu_t(X_0)' \E\left[\Cov(Q_{t,t}, Q_{t,t}\mid I_t)\mid X_0\right]\nu_t(X_0)\right]\\
    \geq~& \lambda \|\nu_t\|_{2,2}^2
\end{align}
where in the second identity we invoked Lemma~\ref{lem:cov-property} and in the last inequality we used our positivity assumption. 
By the $\lambda$-strong convexity of $\loss_D$, we have that
\begin{align}
    \loss_D(\hat{\psi}; \hat{\underline{\psi}}_{t+1}, \hat{h}) &\ge \loss_{D}(\psi_t^{*,n}; \hat{\underline{\psi}}_{t+1}, \hat{h}) + D_{\psi_t} \loss_{D}(\psi_t^{*,n}; \hat{\underline{\psi}}_{t+1}, \hat{h})[\nu_t] +  \frac{\lambda}{2}\,\|\nu_t\|_{2,2}^2.
\end{align}

Furthermore, our excess risk assumption and the optimality of $\psi^*$ give us 
\begin{align}
    \frac{\lambda}{2}\|\nu_t\|_{2,2}^2 \leq~& \underbrace{\loss_{D}(\hat{\psi}_t; \hat{\underline{\psi}}_{t+1}, \hat{h}) - \loss_{D}(\psi_t^{*,n}; \hat{\underline{\psi}}_{t+1}, \hat{h})}_{\text{plug-in excess risk of $\hat{\psi}_t$}} - D_{\psi_t}\loss_{D}(\psi_t^{*,n}; \hat{\underline{\psi}}_{t+1}, \hat{h})[\nu_t]\\
    \overset{(a)}{\leq}~& \epsilon(\hat{\underline{\psi}}_{t}, \hat{h}) - \underbrace{D_{\psi_t}\loss_{D}(\psi_t^{*,n}; \underline{\psi}_{t+1}^*, h^*)[\nu_t]}_{=:F}\\
    &~~ + \underbrace{D_{\psi_t} (\loss_{D}(\psi_t^{*,n}; \underline{\psi}_{t+1}^*, h^*)- \loss_{D}(\psi_t^{*,n}; \underline{\psi}_{t+1}^*, \hat{h}))[\nu_t]}_{=:A}\\
    &~~ + \underbrace{D_{\psi_t} (\loss_{D}(\psi_t^{*,n}; \underline{\psi}_{t+1}^*, \hat{h})- \loss_{D}(\psi_t^{*,n}; \hat{\underline{\psi}}_{t+1}, \hat{h}))[\nu_t]}_{=:B}.
\end{align}

\paragraph{Upper bounding $|F|$.} Note that we can write $D_{\psi_t}\loss_{D}(\psi_t^{*,n}; \underline{\psi}_{t+1}^*, h^*)[\nu_t]$ as:
\begin{align}
     -\E[(H_{t+1}(\psi^*) - \E[H_{t+1}(\psi^*)\mid I_t] - \psi_t^{*,n}(X_0)'(Q_t - \E[Q_t\mid I_t]))\, (Q_t - \E[Q_t\mid I_t])'\nu_t(X_0)]
\end{align}
Moreover, by applying the conditional moment restriction Lemma~\ref{lem:moment-restrictions} with $f(I_t)=Q_t'\,\nu_t(X_0)$ we have that:
\begin{align}
    \E[(H_{t+1}(\psi^*) - \E[H_{t+1}(\psi^*)\mid I_t] - \psi_t^{*}(X_0)'(Q_t - \E[Q_t\mid I_t]))\, (Q_t - \E[Q_t\mid I_t])'\nu_t(X_0)] = 0
\end{align}
Thus, we can equivalently write $D_{\psi_t}\loss_{D}(\psi_t^{*,n}; \underline{\psi}_{t+1}^*, h^*)[\nu_t]$ as:
\begin{multline}
    -\E[(\psi_t^*(X_0) - \psi_t^{*,n}(X_0))'(Q_t - \E[Q_t\mid I_t])\, (Q_t - \E[Q_t\mid I_t])'\nu_t(X_0)] \\
    = -\E[(\psi_t^*(X_0) - \psi_t^{*,n}(X_0))'\E[C_{t,t}\mid X_0]\nu_t(X_0)]
\end{multline}
By Holder's inequality, we have:
\begin{align}
    |F| \leq~& \E\left[\|\psi_t^*(X_0) - \psi_t^{*,n}(X_0)\|_{u} \|\E[C_{t,t}\mid X_0]\|_{\bar{u}, u} \|\nu_t(X_0)\|_{u}\right]\\
    \leq~& \|\nu_t\|_{u,v} \E\left[ \|\E[C_{t,t}\mid X_0]\|_{\bar{u}, u}^{\bar{v}} \|\psi_t^*(X_0) - \psi_t^{*,n}(X_0)\|_{u}^{\bar{v}}\right]^{1/\bar{v}}\\
    \leq~& \|\nu_t\|_{u,v} \|\psi_t^* - \psi_t^{*,n}\|_{u, \bar{v}} \sup_{x_0} \|\E[C_{t,t}\mid X_0]\|_{\bar{u}, u}\\
    =~& \|\nu_t\|_{u,v} \|\psi_t^* - \psi_t^{*,n}\|_{u, \bar{v}} c_{t,t}
\end{align}

\paragraph{Upper bounding $|B|$.} Next, note that:
\begin{align}
    D_{\psi_t} \loss_{D}(\psi_t; \underline{\psi}_{t+1}, \hat{h})[\nu_t] = - \E\left[\nu_t(X_0)'\tilde{T}_{t}\, \left(\tilde{Y}_t - \sum_{j=t+1}^m \psi_{j}(X_0)'\tilde{T}_{j,t} - \psi_t(X_0)'\tilde{T}_{t}\right)\right]
\end{align}
Thus by invoking Lemma~\ref{lem:cov-property} and applying the tower law of expectations, we can upper bound $B$ as:
\begin{align}
    |B| \leq~& \sum_{j=t+1}^m \left|\E\left[\nu_t(X_0)'\tilde{T}_{t}\, \tilde{T}_{j,t}' \nu_j^*(X_0)\right]\right|\\
    \leq~& \sum_{j=t+1}^m \left|\E\left[\nu_t(X_0)'\Cov(Q_{t,t}, Q_{j,t}\mid I_t)\,  \nu_j^*(X_0)\right]\right|\\
    ~& + \left|\E\left[\nu_t(X_0)'\left(\hat{p}_{t}(I_t) - p_{t}^*(I_t)\right)\, (\hat{p}_{j,t}(I_t) - p_{j,t}^*(I_t))'\, \nu_j^*(X_0)\right]\right|\\
    =~& \sum_{j=t+1}^m \left|\E\left[\nu_t(X_0)'\E[\Cov(Q_{t,t}, Q_{j,t}\mid I_t)\mid X_0]  \nu_j^*(X_0)\right]\right|\\
    ~& + \left|\E\left[\nu_t(X_0)'\E\left[\left(\hat{p}_{t}(I_t) - p_{t}^*(I_t)\right)\, (\hat{p}_{j,t}(I_t) - p_{j,t}^*(I_t))'\mid X_0\right]\, \nu_j^*(X_0)\right]\right|
\end{align}
Invoking the definition of $C_{t,j}$ and $\Delta(f,g)$ from the theorem statement and by repeatedly applying Holder's inequality for the dual norms associated with $u,\bar{u}$ and $v,\bar{v}$, we have:
\begin{align}
    |B| \leq~& \sum_{j=t+1}^m \|\nu_t\|_{u,v} \E\left[\|\E[C_{t,j}\mid X_0]\|_{\bar{u},u}^{\bar{v}} \left\|\nu_j^*(X_0)\right\|_{u}^{\bar{v}}\right]^{1/\bar{v}}
    + 2\,M\, \|\nu_t\|_{u,v} \left\|\|\Delta(\hat{p}_{t}, \hat{p}_{j,t})\|_{\bar{u},u}\right\|_{\bar{v}}\\
    \leq~& \sum_{j=t+1}^m \sup_{x_0} \|\E[C_{t,j}\mid X_0=x_0]\|_{\bar{u},u}\, \|\nu_t\|_{u,v} \|\nu_j^*\|_{u,\bar{v}}
    + 2\,M\, \|\nu_t\|_{u,v}\left\|\|\Delta(\hat{p}_{t}, \hat{p}_{j,t})\|_{\bar{u},u}\right\|_{\bar{v}}\\
    =~& \sum_{j=t+1}^m c_{t,j}\, \|\nu_t\|_{u,v} \|\nu_j^*\|_{u,\bar{v}} + 2\,M\, \|\nu_t\|_{u,v} \left\|\|\Delta(\hat{p}_{t}, \hat{p}_{j,t})\|_{\bar{u},u}\right\|_{\bar{v}}
\end{align}

\paragraph{Upper bound $|A|$.} Finally, we analyze the quantity $A$. By the universal Neyman orthogonality property of the loss from Lemma~\ref{lem:neyman-ortho-snmm}, we have for any $h$:
\begin{align}
    D_{h,\psi_t} \loss_{D}(\psi_t^{*,n}; \underline{\psi}_{t+1}^*, h^*))[h - h^*, \nu_t] = 0
\end{align}
By a second order Taylor's theorem with integral remainder,
\begin{align}
    A =~& \frac{1}{2} \int_{0}^1  \underbrace{D_{h,h,\psi_t}\loss_{D}(\psi_t^{*,n}; \underline{\psi}_{t+1}^*, h_\tau))[\hat{h}-h^*, \hat{h}-h^*, \nu_t]}_{G(h_\tau, \hat{h}, \nu_t)} d\tau
\end{align}
where $h_\tau = h^* + \tau (\hat{h} - h^*)$. Moreover:
\begin{align}
   G(h_\tau, \hat{h}, \nu_t) =~&  - \E\left[\nu_t(X_0)'\left(\hat{p}_{t}(I_t) -p_{t}^*(I_t)\right)\, \left(\hat{q}_t(I_t) - q_t^*(I_t)\right)\right] \\
   ~&~ + \sum_{j=t+1}^m\E\left[\nu_t(X_0)'\left(\hat{p}_{t}(I_t) -p_{t}^*(I_t)\right)\, \left(\hat{p}_{j,t}(I_t) - p_{j,t}^*(I_t)\right)'\psi_j^*(X_0)\right]\\
   ~&~ + 2\, \E\left[\nu_t(X_0)'\left(\hat{p}_{t}(I_t) -p_{t}^*(I_t)\right)\, \left(\hat{p}_{t,t}(I_t) - p_{t,t}^*(I_t)\right)'\psi_t^{*,n}(X_0)\right]
\end{align}
Thus by applying the tower law of expectations and repeatedly applying Holder's inequality, we have:
\begin{align}
    |A| \leq~& \|\nu_t\|_{u,v} \left\|\|\Delta(\hat{p}_{t}, \hat{q}_t)\|_{\bar{u},u}\right\|_{\bar{v}}
    + M \|\nu_t\|_{u,v} \sum_{j=t}^m  \left\|\|\Delta(\hat{p}_{t}, \hat{p}_{j,t})\|_{\bar{u},u}\right\|_{\bar{v}}
\end{align}

\paragraph{Concluding.} Overall we have:
\begin{align}
    \frac{\lambda}{2} \|\nu_t\|_{2,2}^2 \leq~& \epsilon(\psi_t^{*,n}, \hat{\underline{\psi}}_{t}, \hat{h}) + |F| + |A| + |B|\\
    \leq~& \epsilon(\psi_t^{*,n}, \hat{\underline{\psi}}_{t}, \hat{h}) + \|\nu_t\|_{u,v}\, \left(c_{t,t} \|\psi_t^{*,n} - \psi_t^*\|_{u,\bar{v}} + \sum_{j=t+1}^m c_{t,j} \|\nu_j^*\|_{u,\bar{v}} \right) \\
    ~&~ + \|\nu_t\|_{u,v} \left(\left\|\|\Delta(\hat{p}_{t}, \hat{q}_t)\|_{\bar{u},u}\right\|_{\bar{v}}
    + 3 M \sum_{j=t}^m  \left\|\|\Delta(\hat{p}_{t}, \hat{p}_{j,t})\|_{\bar{u},u}\right\|_{\bar{v}}\right)
\end{align}
Let:
\begin{align}
    \rho_{t,u,v}(\hat{h}) = \left\|\|\Delta(\hat{p}_{t}, \hat{q}_t)\|_{\bar{u},u}\right\|_{\bar{v}}
    + 3 M \sum_{j=t}^m  \left\|\|\Delta(\hat{p}_{t}, \hat{p}_{j,t})\|_{\bar{u},u}\right\|_{\bar{v}}
\end{align}

By an AM-GM inequality, for all $a,b\geq 0$ and $\sigma >0$: $a\cdot b \leq \half(\frac{2}{\sigma} a^2 + \frac{\sigma}{2} b^2)$. Applying the AM-GM inequality to the term that contains $\|\nu_t\|_{u,v}$ and re-arranging, yields that:
\begin{align}
    \frac{\lambda}{2}\|\nu_t\|_{2,2}^2 - \frac{\sigma}{4} \|\nu_t\|_{u,v}^2 \leq~& \epsilon(\psi_t^{*,n}, \hat{\underline{\psi}}_{t}, \hat{h}) + \frac{1}{\sigma}\left(c_{t,t} \|\psi_t^{*,n} - \psi_t^*\|_{u,\bar{v}} + \rho_{t,u,v}(\hat{h}) + \sum_{j=t+1}^m c_{t,j} \|\nu_j^*\|_{u,\bar{v}}\right)^2
\end{align}
Repeatedly invoking the fact that for any $a,b \geq 0$, $(a+b)^2 \leq a^2 + b^2$:
\begin{align}
    \frac{\lambda}{2}\|\nu_t\|_{2,2}^2 - \frac{\sigma}{4} \|\nu_t\|_{u,v}^2 \leq~&  \epsilon(\psi_t^{*,n}, \hat{\underline{\psi}}_{t}, \hat{h}) + \frac{4 c_{t,t} }{\sigma} \|\psi_t^* - \psi_{t}^{*,n}\|_{u,\bar{v}} + \frac{4}{\sigma} \rho_{t,u,v}(\hat{h})^2 + \frac{2}{\sigma} \left(\sum_{j=t+1}^m c_{t,j}\, \|\nu_j^*\|_{u,\bar{v}}\right)^2
\end{align}
\end{proof}

\subsection{Proof of Theorem~\ref{thm:hetero-thm}}

First we prove a preliminary auxiliary lemma.

\begin{lemma}\label{lem:inductive-bound}
Consider any sequence of non-negative numbers $a_1, \ldots, a_m$ satisfying the inequality:
\begin{align}
    a_t \leq \mu_t + c_t \max_{j=t+1}^m a_j
\end{align}
with $\mu_t, c_t\geq 0$. Let $c := \max_{t\in [m]} c_t$ and $\mu:=\max_{t\in [m]}\mu_t$. Then it must also hold that:
\begin{align}
    a_t \leq \mu \frac{c^{m-t+1}-1}{c-1}
\end{align}
\end{lemma}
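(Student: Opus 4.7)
The plan is to prove the lemma by backward (downward) induction on $t$, running from $t = m$ down to $t = 1$. Writing $S_k := \sum_{i=0}^{k-1} c^i = (c^k - 1)/(c-1)$ makes the target bound take the cleaner form $a_t \leq \mu\, S_{m-t+1}$; I will work with $S_k$ throughout and then convert back at the end. The key algebraic identity driving the induction is $1 + c\, S_k = S_{k+1}$, which comes straight from the geometric series.

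For the base case $t = m$, the index set $\{j : t+1 \leq j \leq m\}$ is empty, so by the usual convention the maximum is $0$ (alternatively, the recursion at $t = m$ is really just $a_m \leq \mu_m$). Hence $a_m \leq \mu_m \leq \mu = \mu\, S_1$, as required.

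For the inductive step, fix $t < m$ and assume that $a_j \leq \mu\, S_{m-j+1}$ for every $j \in \{t+1, \ldots, m\}$. Since $c \geq 0$, the sequence $S_{m-j+1}$ is nonincreasing in $j$ (each additional term in the partial sum is nonnegative), so it is maximized at $j = t+1$, giving $\max_{j > t} a_j \leq \mu\, S_{m-t}$. Plugging this into the hypothesis of the lemma and using $\mu_t \leq \mu$, $c_t \leq c$ together with the identity above yields
\begin{align*}
a_t \;\leq\; \mu_t + c_t \max_{j > t} a_j \;\leq\; \mu + c\,\mu\, S_{m-t} \;=\; \mu\,\bigl(1 + c\, S_{m-t}\bigr) \;=\; \mu\, S_{m-t+1},
\end{align*}
which closes the induction. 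Translating $S_{m-t+1}$ back into the stated closed form $(c^{m-t+1}-1)/(c-1)$ gives the bound in the lemma.

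I do not expect a real obstacle here: the only subtlety is the sign/monotonicity issue in step two, where one has to be careful that the inductive bound on $a_j$ is indeed largest at $j = t+1$ (so that bounding the maximum by the bound at $j = t+1$ is valid); this is immediate from $c \geq 0$. The $c = 1$ edge case, where the closed form is $0/0$, is handled seamlessly by the partial-sum representation $S_k = m-t+1$.
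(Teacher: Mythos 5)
Your proof is correct and follows essentially the same route as the paper's: a backward induction from $t=m$ showing $a_t \leq \mu \sum_{\tau=0}^{m-t} c^\tau$, with the geometric-series identity closing the inductive step. Your explicit handling of the monotonicity of the inductive bound in $j$ and of the $c=1$ degenerate case are small refinements the paper leaves implicit, but the argument is the same.
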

\begin{proof}
By the assumed inequality we have that:
\begin{align}
    a_t \leq \mu_t + c_t \max_{j>t} a_j \leq \mu + c \max_{j>t} a_j
\end{align}
We prove by induction that $a_t\leq \mu \sum_{\tau=0}^{m-t} c^\tau$. By the above inequality, the property holds for $t=m$. Assuming it holds for $j>t$, we have that:
\begin{align}
    a_t \leq~& \mu + c \mu \sum_{\tau=0}^{m-t-1} c^\tau = \mu + \mu \sum_{\tau=0}^{m-t-1} c^{\tau+1} = \mu \sum_{\tau=0}^{m-t} c^{\tau}
\end{align}
which completes the inductive proof. Finally, observe that $\sum_{\tau=0}^{m-t} c^{\tau} = \frac{c^{m-t+1}-1}{c-1}$.
\end{proof}

\begin{proof}[Proof of Theorem~\ref{thm:hetero-thm}]
Let $\loss_{S,t}, \loss_{S',t}$ denote the empirical loss over the samples in set $S$ and $S'$ correspondingly and let $\hat{h}_{S}, \hat{h}_{S'}$ the nuisance functions trained on sample $S$ and $S'$ correspondingly.

Denote the random variable $Z=\{X_0,X_1,T_1,\ldots,X_m,T_m,Y\}$ and consider the loss function:
\begin{align}
    u(Z, \underline{\psi}_{t}(X_0); \hat{h}) := \left(Y - \hat{q}(I_t) - \sum_{j=t}^m \psi_j(X_0)'(Q_{j,t} - \hat{p}_{j,t}(I_t)) - \psi_t(X_0)'(\phi(\bar{X}_t, \bar{T}_t) - \hat{p}_{t,t}(I_t))\right)^2
\end{align}
and the centered loss function:
\begin{align}
    \ell(Z, \underline{\psi}_{t}(X_0); \hat{h}) := u(Z,  \underline{\psi}_{t}(X_0); \hat{h}) - u(Z, \psi_t^{*,n}, \underline{\psi}_{t+1}(X_0); \hat{h})
\end{align}
where $\psi_t^{*,n} := \arginf_{\psi_t \in \Psi_t^n} \|\psi_t - \psi_t^*\|_{2,2}$.
When all random variables are bounded, then $\ell$ is $L$-Lipschitz, for some constant $L$, with respect to $\underline{\psi}_t(X_0)$, i.e.:
\begin{align}
    |\ell(Z; \underline{\psi}_{t}(X_0); \hat{h}) - \ell(Z; \underline{\psi}_{t}'(X_0); \hat{h})| \leq L \|\underline{\psi}_{t}(X_0) - \underline{\psi}_{t}'(X_0)\|_2
\end{align}
Moreover, note that:
\begin{align}
    \loss_{S,t}(\psi_t; \underline{\psi}_{t+1}, \hat{h}) - \loss_{S,t}(\psi_t^{*,n}; \underline{\psi}_{t+1}, \hat{h}) = \frac{1}{|S|} \sum_{i\in S} \ell(Z^i, \underline{\psi}_{t}(X_0^i); \hat{h})
\end{align}
and that $\ell(Z, \underline{\psi}_{t}^*(X_0); \hat{h}) = 0$. For succinctness, for any given $\hat{h}$, let:
\begin{align}
    \loss_{S,t}^*(\underline{\psi}_{t}, \hat{h}) :=~& \loss_{S,t}(\psi_t; \underline{\psi}_{t+1}, \hat{h}) - \loss_{S,t}(\psi_t^{*,n}; \underline{\psi}_{t+1}, \hat{h}) = \frac{1}{|S|} \sum_{i\in S} \ell(Z^i, \underline{\psi}_{t}(X_0^i); \hat{h})\\
    \loss_{D,t}^*(\underline{\psi}_{t}, \hat{h}) :=~& \loss_{D,t}(\psi_t; \underline{\psi}_{t+1}, \hat{h}) - \loss_{D,t}(\psi_t^{*,n}; \underline{\psi}_{t+1}, \hat{h}) = \E\left[\ell(Z; \underline{\psi}_{t}(X_0); \hat{h})\right]\\
\end{align}
Suppose that $\delta_n$ upper bounds the critical radius of the function classes $\{\Psi_{t,i}^n - \psi_{t,i}^{*,n}\}_{t\in [m], i\in [\fdim]}$ and $\delta_n = \Omega\left(\frac{r\log(\log(n))}{n}\right)$ and let $\delta_{n,\zeta}=\delta_n + c_0\sqrt{\frac{\log(c_1\,m/\zeta)}{n}}$, for some appropriately defined universal constants $c_0,c_1$. Moreover, note that $\loss_{S,t}^*({\underline{\psi}}_{t}^{*,n}, \hat{h}_{S'})=\loss_{D,t}^*(\underline{\psi}_{t}^{*,n}, \hat{h}_{S'})=0$. Then by Lemma~11 of \citep{foster2019orthogonal}, w.p. $1-\zeta$, for all $t\in [m]$:
\begin{align}
    \left|\loss_{S,t}^*(\hat{\underline{\psi}}_{t}, \hat{h}_{S'}) - \loss_{D,t}^*(\hat{\underline{\psi}}_{t}, \hat{h}_{S'})\right| =~& \left|\loss_{S,t}^*(\hat{\underline{\psi}}_{t}, \hat{h}_{S'}) - \loss_{S,t}^*({\underline{\psi}}_{t}^{*,n}, \hat{h}_{S'}) - \left(\loss_{D,t}^*(\hat{\underline{\psi}}_{t}, \hat{h}_{S'}) - \loss_{D,t}^*(\underline{\psi}_{t}^{*,n}, \hat{h}_{S'})\right)\right|\\ 
    \leq~& c_0\left(r\,m\,\delta_{n/2,\zeta} \|\hat{\underline{\psi}}_{t} - \underline{\psi}_{t}^{*,n}\|_{2,2} + r\,m\,\delta_{n/2,\zeta}^2\right)
\end{align}
for some universal constant $c_0\geq 1$.
Analogously we can exchange the roles of $S$ and $S'$ and define corresponding losses and properties.

Moreover, we have that by the definition of the final stage of Dynamic DML:
\begin{equation}
    \frac{1}{2} \left(\loss_{S,t}^*(\underline{\psi}_{t}, \hat{h}_{S'}) + \loss_{S',t}^*(\underline{\psi}_{t}, \hat{h}_{S})\right)  \leq 0
\end{equation}
Thus we have that w.p. $1-2\zeta$:
\begin{equation}\label{eqn:bound-on-average}
\frac{1}{2} \sum_{O\in \{S, S'\}} \loss_{D,t}^*(\hat{\underline{\psi}}_{t})    \leq 
c_0\left(r\,m\,\delta_{n/2,\zeta} \|\hat{\underline{\psi}}_{t} - \underline{\psi}_{t}^{*,n}\|_{2,2} + r\,m\,\delta_{n/2,\zeta}^2\right)
\end{equation}
Moreover, note that instantiating Lemma~\ref{lem:general} with nuisance functions $\hat{h}_S$ and $\hat{h}_{S'}$ correspondingly, we have that in the notation of that lemma:
\begin{align}
    \loss_{D,t}^*(\hat{\underline{\psi}}_{t}, \hat{h}_S) := \epsilon(\psi_t^{*,n}, \hat{\underline{\psi}}_t, \hat{h}_S)
\end{align}
Applying Lemma~\ref{lem:general} for each $\hat{h}_S, \hat{h}_{S'}$, with $u=v=2$ and $\sigma=\lambda$ and $\psi_t^{*,n}$ as defined by the theorem statement and averaging the final inequality we get:
\begin{align}\label{eqn:inductive-bound-hetero}
    \frac{\lambda}{4} \|\nu_t\|_{2,2}^2\leq~& \epsilon(\psi_t^{*,n},\hat{\underline{\psi}}_{t}, \hat{h}) + \frac{4c_{t,t}}{\lambda}\textsc{bias}_n^2 + \frac{2}{\lambda}\sum_{O\in \{S,S'\}}\rho_{t,2,2}(\hat{h}_O)^2 + \frac{2}{\lambda} \left(\sum_{j=t+1}^m c_{t,j}\, \|\nu_j^*\|_{2,2}\right)^2
\end{align}
where $\nu_t=\hat{\psi}_t - \psi_{t}^{*,n}$ and $\nu_t^* = \hat{\psi}_t - \psi_t^*$ and $\rho_{t,2,2}$ as defined in Lemma~\ref{lem:general}.
Moreover, we have:
\begin{align}
    \frac{\lambda}{8} \|\hat{\psi}-\psi_t^{*}\|_{2,2}^2 \leq \frac{\lambda}{4} \|\hat{\psi}-\psi_t^{*,n}\|_{2,2}^2 + \frac{\lambda}{4} \textsc{bias}_n^2
\end{align}
and therefore:
\begin{align}
    \frac{\lambda}{8} \|\nu_t^*\|_{2,2}^2\leq~& \epsilon(\psi_t^{*,n},\hat{\underline{\psi}}_{t}, \hat{h}) + \frac{16c_{t,t} + \lambda^2}{4\lambda}\textsc{bias}_n^2 + \frac{2}{\lambda}\sum_{O\in \{S,S'\}}\rho_{t,2,2}(\hat{h}_O)^2 + \frac{2}{\lambda} \left(\sum_{j=t+1}^m c_{t,j}\, \|\nu_j^*\|_{2,2}\right)^2
\end{align}
If we let $c_t=\sum_{j=t+1}^m c_{t,j}$, then:
\begin{align}
    \frac{\lambda}{8} \|\nu_t^*\|_{2,2}^2\leq~& \epsilon(\psi_t^{*,n},\hat{\underline{\psi}}_{t}, \hat{h}) + \frac{16c_{t,t} + \lambda^2}{4\lambda}\textsc{bias}_n^2 + \frac{2}{\lambda}\sum_{O\in \{S,S'\}}\rho_{t,2,2}(\hat{h}_O)^2 + \frac{2c_t^2}{\lambda} \max_{j > t} \|\nu_j^*\|_{2,2}^2
\end{align}
Plugging in the bound from Equation~\eqref{eqn:bound-on-average}, we have:
\begin{align}
\frac{\lambda}{8} \|\nu_t^*\|_{2,2}^2\leq~& c_0\,r\,m\,\delta_{n/2,\zeta} \|\hat{\underline{\psi}}_{t} - \underline{\psi}_{t}^{*,n}\|_{2,2} + c_0\,r\,m\,\delta_{n/2,\zeta}^2 + \frac{16c_{t,t} + \lambda^2}{4\lambda}\textsc{bias}_n^2 \\
~&~  + \frac{2}{\lambda}\sum_{O\in \{S,S'\}}\rho_{t,2,2}(\hat{h}_O)^2 + \frac{2c_t^2}{\lambda} \max_{j > t} \|\nu_j^*\|_{2,2}^2
\end{align}
By an AM-GM inequality, for all $a,b\geq 0$: $a\cdot b \leq \half(\frac{2}{\sigma} a^2 + \frac{\sigma}{2} b^2)$. Thus for any $\epsilon>0$, invoking the latter AM-GM inequality:
\begin{align}
    c_0\,r\,m\,\delta_{n/2,\zeta} \|\hat{\underline{\psi}}_{t} - \underline{\psi}_{t}^{*,n}\|_{2,2} + c_0\,r\,m\,\delta_{n/2,\zeta}^2 \leq~& 2\,c_0^2\frac{4+\epsilon}{\epsilon}r^2 m^2\delta_{n/2,\zeta}^2 +  \frac{\epsilon}{16} \|\hat{\underline{\psi}}_{t} - \underline{\psi}_{t}^{*,n}\|_{2,2}^2\\
    =~& 2\,c_0^2\frac{4+\epsilon}{\epsilon}r^2 m^2\delta_{n/2,\zeta}^2 +  \frac{\epsilon}{16} \sum_{j=t}^m \|\nu_j\|_{2,2}^2\\
    \leq~& 2\,c_0^2\frac{4+\epsilon}{\epsilon}r^2 m^2\delta_{n/2,\zeta}^2 + \frac{\epsilon\,m}{8} \textsc{bias}_n^2 +  \frac{\epsilon}{8} \sum_{j=t}^m \|\nu_j^*\|_{2,2}^2
\end{align}
Thus, we get for some universal constant $c_1$:
\begin{align}
\frac{\lambda-\epsilon}{8} \|\nu_t^*\|_{2,2}^2\leq~& c_1\frac{4+\epsilon}{\epsilon}r^2 m^2\delta_{n/2,\zeta}^2 + \left(\frac{16c_{t,t} + \lambda^2}{4\lambda} + \frac{\epsilon\,m}{8}\right)\textsc{bias}_n^2 \\
~&~  + \frac{2}{\lambda}\sum_{O\in \{S,S'\}}\rho_{t,2,2}(\hat{h}_O)^2 + \left(\frac{2c_t^2}{\lambda} + \frac{\epsilon\,m}{8}\right) \max_{j > t} \|\nu_j^*\|_{2,2}^2
\end{align}

Invoking Lemma~\ref{lem:inductive-bound}, we thus get that:
\begin{align}
    \|\hat{\psi}_t-\psi_t^*\|_{2,2}^2 \leq \mu_{\epsilon} \frac{c_{\epsilon}^{m-t+1} - 1}{c_{\epsilon}-1}
\end{align}
where:
\begin{align}
    c_{\epsilon} :=~& \max_{t\in [m]} \frac{8}{\lambda - \epsilon} \left(\frac{2c_t^2}{\lambda} + \frac{\epsilon\,m}{8}\right)\\
    \mu_{\epsilon} :=~& \frac{8}{\lambda-\epsilon}\left(c_1\frac{4+\epsilon}{\epsilon}r^2 m^2\delta_{n/2,\zeta}^2 + \left(\frac{16c_{t,t} + \lambda^2}{4\lambda} + \frac{\epsilon\,m}{8}\right)\textsc{bias}_n^2 + \max_{t=1}^m \frac{2}{\lambda}\sum_{O\in \{S,S'\}}\rho_{t,2,2}(\hat{h}_O)^2\right)
\end{align}
Let $c :=  \max_{t\in [m]} \frac{16 c_t^2}{\lambda^2}$.
Choosing $\epsilon \leq \frac{c \lambda}{m^2}$ and $\epsilon \leq \frac{\lambda}{2m}$, we have:
\begin{align}
    c_{\epsilon} \leq~& \frac{16 c_t^2}{\lambda^2 (1-1/2m)} + \frac{m}{\lambda (1-1/2m)} \epsilon \leq \frac{1+1/m}{1-1/2m} c \\
    \mu_{\epsilon} \leq~& \frac{16c_1}{\lambda} \left(1 + \frac{4m}{\lambda}\max\left\{\frac{m}{c}, 2\right\}\right) r^2\, m^2\,\delta_{n/2,\zeta}^2 + \frac{32}{\lambda} \max_{\substack{t\in [m]\\ O\in \{S,S'\}}} \rho_{t,2,2}(\hat{h}_O)^2 + \frac{16c_{t,t} + 2\lambda^2}{4\lambda}\textsc{bias}_n^2
\end{align}
Moreover, if we let $\kappa = \frac{(1+1/m)}{(1-1/2m)}$:
\begin{align}
    c_{\epsilon}^m \leq \kappa^m c^m \leq 2e c^m
\end{align}
If $\kappa\,c > 1$, then $\frac{c_{\epsilon}^m-1}{c_{\epsilon}-1}\leq \kappa^m \frac{c^m-1}{c-1}\leq 2e \frac{c^m-1}{c-1}$ and if $\kappa\, c<1$ then $\frac{c_{\epsilon}^m-1}{c_{\epsilon}-1}\leq \frac{1}{1-\kappa c}$.
Thus, letting $c_* = \max\left\{\frac{1}{1-\kappa c}, 2e \frac{c^m-1}{c-1}\right\}$ we have that:
\begin{align}
    \|\hat{\psi}_t-\psi_t^*\|_{2,2}^2 \leq~& \frac{c_*}{\lambda} \left(16c_1 \left(1 + \frac{4m}{\lambda}\max\left\{\frac{m}{c}, 2\right\}\right) r^2\, m^2\,\delta_{n/2,\zeta}^2 + \frac{16c_{t,t} + 2\lambda^2}{4}\textsc{bias}_n^2\right)\\
    ~&~ + \frac{c_*}{\lambda} 32 \max_{\substack{t\in [m]\\ O\in \{S,S'\}}} \rho_{t,2,2}(\hat{h}_O)^2
\end{align}

Taking expectation of this inequality and integrating the tail bound, we have that:
\begin{align}
    \E_{\hat{\psi}}\left[\|\hat{\psi}_t-\psi_t^*\|_{2,2}^2 \right] \leq~& \frac{c_*}{\lambda} \left(16c_1 \left(1 + \frac{4m}{\lambda}\max\left\{\frac{m}{c}, 2\right\}\right) r^2\, m^2\,\delta_{n/2}^2 + \frac{16c_{t,t} + 2\lambda^2}{4}\textsc{bias}_n^2\right)\\
    ~&~ + \frac{c_*}{\lambda} 64\,m \max_{\substack{t\in [m]\\ O\in \{S,S'\}}} \E\left[\rho_{t,2,2}(\hat{h}_O)^2\right]
\end{align}
If $m, \{c_{t,j}\}_{1\leq t\leq j\leq m}, \lambda, M$ are constants and
\begin{align}
    \max_{1\leq t \leq j \leq m} \E_{\hat{p}_{t,t}, \hat{p}_{j,t}}\left[\left\|\|\Delta(\hat{p}_{t,t}, \hat{p}_{j,t})\|_{op}\right\|_{2}^2\right] =~& O(r^2 \delta_{n/2}^2 + \textsc{bias}_n^2)\\
    \max_{1\leq t\leq m} \E_{\hat{p}_{t,t}, \hat{q}_{t}}\left[\left\|\|\Delta(\hat{p}_{t,t}, \hat{q}_t)\|_{op}\right\|_{2}^2\right] =~& O(r^2 \delta_{n/2}^2 + \textsc{bias}_n^2)
\end{align}
then we have that:
\begin{align}
    \max_{t\in [m]}\E\left[\|\hat{\psi}_t-\psi_t^*\|_{2,2}^2\right] \leq O\left(r^2\, \delta_{n/2}^2 + \textsc{bias}_n^2\right)
\end{align}

\end{proof}

\subsection{Proof of Theorem~\ref{thm:uniform}}

\begin{proof}
Following identical reasoning as in the proof of Theorem~\ref{thm:hetero-thm}, and invoking Lemma~\ref{lem:general} with $u=2$ and $v=\infty$, and $\psi_t^{*,n}=\arg\inf_{\psi_t\in \Psi_t^n}\|\psi_t - \psi_t^*\|_{2,\infty}$, we can arrive at a analogous to that of Equation~\eqref{eqn:inductive-bound-hetero}:
\begin{align}
    \frac{\lambda}{4} \|\nu_t\|_{2,2}^2 - \frac{\sigma}{4} \|\nu_t\|_{2,\infty}^2\leq~& \epsilon(\psi_t^{*,n},\hat{\underline{\psi}}_{t}, \hat{h}) + \frac{4c_{t,t}}{\sigma}\textsc{bias}_{n,\infty}^2 + \frac{2}{\sigma}\sum_{O\in \{S,S'\}}\rho_{t,2,\infty}(\hat{h}_O)^2 + \frac{2}{\sigma} \left(\sum_{j=t+1}^m c_{t,j}\, \|\nu_j^*\|_{2,\infty}\right)^2\\
    \leq~& \epsilon(\psi_t^{*,n},\hat{\underline{\psi}}_{t}, \hat{h}) + \frac{4c_{t,t}}{\sigma}\textsc{bias}_{n,\infty}^2 + \frac{2}{\sigma}\sum_{O\in \{S,S'\}}\rho_{t,2,\infty}(\hat{h}_O)^2 + \frac{2c_t^2}{\sigma}\max_{j>t} \|\nu_j^*\|_{2,\infty}^2
\end{align}
where for some universal constant $c_0$:
\begin{align}
    \epsilon(\psi_t^{*,n},\hat{\underline{\psi}}_{t}, \hat{h})  :=~& c_0\,r\,m\,\delta_{n/2,\zeta} \|\hat{\underline{\psi}}_{t} - \underline{\psi}_{t}^{*,n}\|_{2,2} + c_0\,r\,m\,\delta_{n/2,\zeta}^2
\end{align}
with $\delta_{n,\zeta} = O\left(M\sqrt{\frac{d_n}{n}} + \sqrt{\frac{\log\log(n)}{n}} + \sqrt{\frac{\log(1/\zeta)}{n}}\right)$. The latter follows since the critical radius of linear predictors in $d_n$-dimensions with a uniform bound of $M$, is at most $M\sqrt{\frac{d_n}{n}}$ (see example 13.8 of \cite{wainwright_2019}).

Note that $\hat{\psi}(\cdot) = \hat{\theta}'a_n(\cdot)$ and $\psi_t^{*,n}(\cdot) = (\theta_t^{*,n})' a_n(\cdot)$. Thus:
\begin{align}
    \|\hat{\psi}-\psi_t^{*,n}\|_{2,2}^2 =~& \E\left[(\hat{\theta} - \theta_t^{*,n})'a_n(X_0)\, a_n(X_0)' (\hat{\theta} - \theta_t^{*,n})\right]\\
    \geq~& \gamma_n \|\hat{\theta} - \theta_t^{*,n}\|_{2}^2\\
    \geq~& \frac{\gamma_n}{M^2} \sup_{x_0\in \mcX_0} \left((\hat{\theta} - \theta_t^{*,n})'a_n(x_0)\right)^2 = \frac{\gamma_n}{M^2} \|\hat{\psi}-\psi_t^{*,n}\|_{2,\infty}^2
\end{align}

Thus setting $\sigma=\lambda_*:=\frac{\lambda \gamma_n}{M^2}$, we thus have:
\begin{align}
    \frac{\lambda_*}{4} \|\nu_t\|_{2,\infty}^2
    \leq~& \epsilon(\psi_t^{*,n},\hat{\underline{\psi}}_{t}, \hat{h}) + \frac{4c_{t,t}}{\lambda_*}\textsc{bias}_{n,\infty}^2 + \frac{2}{\lambda_*}\sum_{O\in \{S,S'\}}\rho_{t,2,\infty}(\hat{h}_O)^2 + \frac{2c_t^2}{\lambda_*}\max_{j>t} \|\nu_j^*\|_{2,\infty}^2
\end{align}
Moreover:
\begin{align}
    c_0\,r\,m\,\delta_{n/2,\zeta} \|\hat{\underline{\psi}}_{t} - \underline{\psi}_{t}^{*,n}\|_{2,2} + c_0\,r\,m\,\delta_{n/2,\zeta}^2 \leq~& 2\,c_0^2\frac{4+\epsilon}{\epsilon}r^2 m^2\delta_{n/2,\zeta}^2 +  \frac{\epsilon}{16} \|\hat{\underline{\psi}}_{t} - \underline{\psi}_{t}^{*,n}\|_{2,2}^2\\
    =~& 2\,c_0^2\frac{4+\epsilon}{\epsilon}r^2 m^2\delta_{n/2,\zeta}^2 +  \frac{\epsilon}{16} \sum_{j=t}^m \|\nu_j\|_{2,2}^2\\
    \leq~& 2\,c_0^2\frac{4+\epsilon}{\epsilon}r^2 m^2\delta_{n/2,\zeta}^2 + \frac{\epsilon\,m}{8} \textsc{bias}_n^2 +  \frac{\epsilon}{8} \sum_{j=t}^m \|\nu_j^*\|_{2,2}^2\\
    \leq~& 2\,c_0^2\frac{4+\epsilon}{\epsilon}r^2 m^2\delta_{n/2,\zeta}^2 + \frac{\epsilon\,m}{8} \textsc{bias}_{n,\infty}^2 +  \frac{\epsilon}{8} \sum_{j=t}^m \|\nu_j^*\|_{2,\infty}^2
\end{align}
Moreover, we have:
\begin{align}
    \frac{\lambda_*}{8} \|\nu_t^*\|_{2,\infty}^2 \leq \frac{\lambda_*}{4} \|\nu_t\|_{2,\infty}^2 + \frac{\lambda_*}{4} \textsc{bias}_{n,\infty}^2
\end{align}
Thus, we get for some universal constant $c_1$:
\begin{align}
\frac{\lambda_*-\epsilon}{8} \|\nu_t^*\|_{2,\infty}^2\leq~& c_1\frac{4+\epsilon}{\epsilon}r^2 m^2\delta_{n/2,\zeta}^2 + \left(\frac{16c_{t,t} + \lambda_*^2}{4\lambda_*} + \frac{\epsilon\,m}{8}\right)\textsc{bias}_{n,\infty}^2 \\
~&~  + \frac{2}{\lambda_*}\sum_{O\in \{S,S'\}}\rho_{t,2,\infty}(\hat{h}_O)^2 + \left(\frac{2c_t^2}{\lambda_*} + \frac{\epsilon\,m}{8}\right) \max_{j > t} \|\nu_j^*\|_{2,\infty}^2
\end{align}
Let $c :=  \max_{t\in [m]} \frac{16 c_t^2}{\lambda_*^2}$.
Analogous to the proof of Theorem~\ref{thm:hetero-thm}, choosing $\epsilon \leq \frac{c \lambda_*}{m^2}$ and $\epsilon \leq \frac{\lambda_*}{2m}$, and letting $\kappa = \frac{(1+1/m)}{(1-1/2m)}$ and $c_* = \max\left\{\frac{1}{1-\kappa c}, 2e \frac{c^m-1}{c-1}\right\}$ we have that:
\begin{align}
    \|\hat{\psi}_t-\psi_t^*\|_{2,\infty}^2 \leq~& \frac{c_*}{\lambda_*} \left(16c_1 \left(1 + \frac{4m}{\lambda_*}\max\left\{\frac{m}{c}, 2\right\}\right) r^2\, m^2\,\delta_{n/2,\zeta}^2 + \frac{16c_{t,t} + 2\lambda_*^2}{4}\textsc{bias}_{n,\infty}^2\right)\\
    ~&~ + \frac{c_*}{\lambda_*} 32 \max_{\substack{t\in [m]\\ O\in \{S,S'\}}} \rho_{t,2,\infty}(\hat{h}_O)^2
\end{align}

Taking expectation of this inequality and integrating the tail bound, we have that:
\begin{align}
    \E_{\hat{\psi}}\left[\|\hat{\psi}_t-\psi_t^*\|_{2,\infty}^2 \right] \leq~& \frac{c_*}{\lambda_*} \left(16c_1 \left(1 + \frac{4m}{\lambda_*}\max\left\{\frac{m}{c}, 2\right\}\right) r^2\, m^2\,\delta_{n/2}^2 + \frac{16c_{t,t} + 2\lambda_*^2}{4}\textsc{bias}_{n,\infty}^2\right)\\
    ~&~ + \frac{c_*}{\lambda_*} 64\,m \max_{\substack{t\in [m]\\ O\in \{S,S'\}}} \E\left[\rho_{t,2,\infty}(\hat{h}_O)^2\right]
\end{align}
Moreover, note that:
\begin{align}
    \rho_{t,2,\infty}(\hat{h}) =~& \left\|\|\Delta(\hat{p}_{t}, \hat{q}_t)\|_{op}\right\|_{1}
    + 3 M \sum_{j=t}^m  \left\|\|\Delta(\hat{p}_{t}, \hat{p}_{j,t})\|_{op}\right\|_{1}
\end{align}

Assume that $m, \{c_{t,j}\}_{1\leq t\leq j\leq m}, \lambda, M$ are constants. If $\gamma_n\to 0$, then we have that $c = O(\gamma_n^{-2})$, $c_*=O(\gamma_n^{-2(m-1)})$ and $\lambda_* = O(\gamma_n)$ and $c \lambda_*=O(\gamma_n^{-1})$ and hence:
\begin{align}
    \E_{\hat{\psi}}\left[\|\hat{\psi}_t-\psi_t^*\|_{2,\infty}^2 \right] \leq~& O\left(\gamma_n^{-2m + 1} \left(\gamma_n^{-1} r^2\, \delta_{n/2}^2 + \textsc{bias}_{n,\infty}^2 + \max_{\substack{t\in [m]\\ O\in \{S,S'\}}} \E\left[\rho_{t,2,\infty}(\hat{h}_O)^2\right] \right)\right)
\end{align}
Thus if we have that:
\begin{align}
    \max_{1\leq t \leq j \leq m} \left\{\E_{\hat{p}_{t,t}, \hat{p}_{j,t}}\left[\left\|\|\Delta(\hat{p}_{t,t}, \hat{p}_{j,t})\|_{op}\right\|_{1}^2\right], \E_{\hat{p}_{t,t}, \hat{q}_{t}}\left[\left\|\|\Delta(\hat{p}_{t,t}, \hat{q}_t)\|_{op}\right\|_{1}^2\right]\right\} =~& O\left(\gamma_n^{-1} r^2\,\delta_{n/2}^2 + \textsc{bias}_{n,\infty}^2\right)
\end{align}
Then:
\begin{align}
    \E_{\hat{\psi}}\left[\|\hat{\psi}_t-\psi_t^*\|_{2,\infty}^2 \right] \leq~& O\left(\gamma_n^{-2m + 1} \left(\gamma_n^{-1} r^2\, \delta_{n/2}^2 + \textsc{bias}_{n,\infty}^2\right)\right)
\end{align}
If $\gamma_n$ remains bounded away from zero, then the above bound conditions to hold since the remainder of the quantities $m, \{c_{t,j}\}_{1\leq t\leq j\leq m}, \lambda, M$ are constants.
\end{proof}

\subsection{Proof of Theorem~\ref{thm:sparse-linear}}

\begin{lemma}[Restricted Cone Property]\label{lem:cone}
Assume that all random variables and functions are bounded. Suppose we set $\kappa_t$ such that it satisfies:
\begin{align}
    \frac{\kappa_t}{2} \geq \delta_{n,\zeta} + \sum_{j=t+1}^m (c_{t,j} + \delta_{n,\zeta}) \|\nu_j\|_1 + \epsilon_n
\end{align}
where:
\begin{align}
    \delta_{n,\zeta} :=~& c_0 \sqrt{\frac{\log(c_1\, m\, r/\zeta)}{n}}\\
    \epsilon_n :=~& \max_{m\geq j\geq t\geq 1, i,i'\in [r]} \max\left\{\|\hat{p}_{t,i} - p_{t,i}^*\|_2\, \|\hat{q}_{t,i'} - q_{t,i'}^*\|_2, 4M\,m \|\hat{p}_{t,i} - p_{t,i}^*\|_2\, \|\hat{p}_{j,t,i'} - p_{j,t,i'}^*\|_2\right\}
\end{align}
where $c_0, c_1$ are universal constants and $c_{t,j} = \|\E[\Cov(Q_{t,t}, Q_{j,t}\mid \bar{X}_t, \bar{T}_{t-1})]\|_{\infty}$. Let $\nu_t:=\hat{\psi}_t-\psi_t^*$, let $T$ be the support of $\psi_t^*$ and $T^c$ its complement. Let $\nu_t^{(T)}$ denote the sub-vector support on $T$ and $\mcC(T;3):=\{\nu_t\in \R^r: \|\nu_t^{(T^c)}\|_1\leq 3\|\nu_t^{(T)}\|_1\}$. Then w.p. $1-\zeta$: $\nu_t \in \mcC(T;3)$.
\end{lemma}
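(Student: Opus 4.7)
The plan is to follow the standard \emph{basic inequality} argument for the Lasso, adapted to the cross-fitted, plug-in, recursive setting. Since $\hat{\psi}_t$ minimizes the $\ell_1$-penalized quadratic loss $L_n(\psi_t) + \kappa_t\|\psi_t\|_1$ where $L_n(\psi_t) := \frac{1}{n}\sum_i (\bar{Y}_t^i - \psi_t'\tilde{T}_{t,t}^i)^2$, and $\|\psi_t^*\|_1 \leq M$ by assumption (so $\psi_t^*$ is feasible), we have
\begin{align*}
L_n(\hat{\psi}_t) + \kappa_t\|\hat{\psi}_t\|_1 \leq L_n(\psi_t^*) + \kappa_t\|\psi_t^*\|_1.
\end{align*}
Expanding the quadratic $L_n(\hat{\psi}_t) - L_n(\psi_t^*) = \langle \nabla L_n(\psi_t^*), \nu_t\rangle + \|\tilde{T}_{t,t}\nu_t\|_n^2$ and discarding the non-negative quadratic term, and using $\|\hat{\psi}_t\|_1 \geq \|\psi_t^*\|_1 + \|\nu_t^{(T^c)}\|_1 - \|\nu_t^{(T)}\|_1$ (decomposing along $T$ and $T^c$), I get
\begin{align*}
\kappa_t\bigl(\|\nu_t^{(T^c)}\|_1 - \|\nu_t^{(T)}\|_1\bigr) \leq -\langle \nabla L_n(\psi_t^*), \nu_t\rangle \leq \|\nabla L_n(\psi_t^*)\|_\infty \cdot \|\nu_t\|_1.
\end{align*}
If I can show $\|\nabla L_n(\psi_t^*)\|_\infty \leq \kappa_t/2$ under the choice of $\kappa_t$ stated in the lemma, then rearranging gives $\|\nu_t^{(T^c)}\|_1 \leq 3 \|\nu_t^{(T)}\|_1$, i.e., $\nu_t \in \mcC(T;3)$.

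The main work is thus to bound $\|\nabla L_n(\psi_t^*)\|_\infty$. The gradient coordinate $i$ equals
\begin{align*}
\bigl[\nabla L_n(\psi_t^*)\bigr]_i = -\frac{2}{n}\sum_k \tilde{T}_{t,t,i}^k\Bigl(\tilde{Y}_t^k - \sum_{j=t+1}^m \hat{\psi}_j'\tilde{T}_{j,t}^k - (\psi_t^*)'\tilde{T}_{t,t}^k\Bigr).
\end{align*}
I will decompose this into three pieces: (i) the empirical moment at $\psi^*$ and $h^*$, which has mean zero by Lemma~\ref{lem:moment-restrictions} and concentrates at rate $\delta_{n,\zeta} = c_0\sqrt{\log(c_1 m r/\zeta)/n}$ in $\ell_\infty$ by Hoeffding plus a union bound across the $r$ coordinates and $m$ stages; (ii) the bias from the plug-in nuisances $\hat{h}$, which by the second-order Taylor expansion of the population moment (Neyman orthogonality, Lemma~\ref{lem:neyman-ortho-snmm}, adapted coordinatewise) produces cross-product terms that are bounded coordinatewise via Cauchy--Schwarz by $\|\hat p_{t,i} - p_{t,i}^*\|_2 \|\hat q_{t,i'} - q_{t,i'}^*\|_2$ and $M\,m\,\|\hat p_{t,i} - p_{t,i}^*\|_2\|\hat p_{j,t,i'} - p_{j,t,i'}^*\|_2$, i.e., the quantity $\epsilon_n$; (iii) the bias from the previous-stage estimates $\hat{\psi}_j - \psi_j^* = \nu_j$ for $j>t$, which appears linearly as $-\frac{2}{n}\sum_k \tilde{T}_{t,t,i}^k \sum_{j>t} \nu_j'\tilde{T}_{j,t}^k$, and whose $\ell_\infty$-norm over $i$ is at most $\sum_{j>t}(c_{t,j} + \delta_{n,\zeta})\|\nu_j\|_1$ after replacing the empirical cross moment with its population version $\|\E[\Cov(Q_{t,t},Q_{j,t}\mid \bar X_t,\bar T_{t-1})]\|_\infty = c_{t,j}$ and bounding the concentration error by another application of Hoeffding on each of the $r^2$ entries of the empirical cross-covariance, again absorbed into $\delta_{n,\zeta}$ via the union bound.

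Summing the three contributions gives exactly $\|\nabla L_n(\psi_t^*)\|_\infty \leq \delta_{n,\zeta} + \sum_{j=t+1}^m (c_{t,j} + \delta_{n,\zeta})\|\nu_j\|_1 + \epsilon_n \leq \kappa_t/2$ by the assumed choice of $\kappa_t$, which yields the cone property and also the trivial bound $\|\nu_t\|_2 \leq \|\nu_t\|_1$. The main obstacle I expect is (ii): cleanly showing that the plug-in bias contributes at the $\epsilon_n$ level rather than worse, because the Neyman orthogonality argument here has to be carried out \emph{coordinatewise} in the $\ell_\infty$-norm and has to leverage cross-fitting so that $\hat h$ is independent of the samples used to form the empirical average (allowing the second-order term to collapse to products of $L^2$ nuisance errors rather than suprema of empirical processes). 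Taking a union bound across the $m$ recursion stages costs only the factor $m$ that is already absorbed in $\delta_{n,\zeta}$ and the $c_0\, m$ factor in the overall failure probability $c_0 m \zeta$ stated in Theorem~\ref{thm:sparse-linear}.
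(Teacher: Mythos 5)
Your proposal is correct and follows essentially the same route as the paper: the standard Lasso basic inequality plus a bound of the form $-\langle \nabla L_n(\psi_t^*), \nu_t\rangle \geq -\|\nu_t\|_1(\delta_{n,\zeta} + \sum_{j>t}(c_{t,j}+\delta_{n,\zeta})\|\nu_j\|_1 + \epsilon_n)$, obtained by splitting the gradient into the centered empirical process at $(\psi^*,h^*)$, the Neyman-orthogonal second-order nuisance bias (giving the $\epsilon_n$ products via Cauchy--Schwarz, enabled by cross-fitting), and the linear propagation of the later-stage errors $\nu_j$ through the empirical cross-covariances. The paper organizes this as convexity of $\loss_{n,t}$ plus a decomposition into an empirical-minus-population term $F$ and a population bias term $G$ (the latter reusing terms $A+B$ from Lemma~\ref{lem:general} with $u=1$, $v=\infty$), but the substance and the resulting bounds are identical to yours.
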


\begin{proof}[Proof of Lemma~\ref{lem:cone}]
Let $\loss_{n,t}(\psi_t; \underline{\psi}_{t+1}, \hat{h})$ denote the un-penalized empirical loss using the estimated first stage nuisance functions, where by $\hat{h}=(\hat{h}_S,\hat{h}_{S'})$ we denote the union of both nuisance functions estimated on each side of the split. Moreover, let $\bar{\loss}_{D,t}(\psi_t; \underline{\psi}_{t+1}, \hat{h})=\frac{1}{2}\sum_{O\in \{S,S\}}\loss_{D,t}(\psi_t; \underline{\psi}_{t+1}, \hat{h}_O)$. By optimality of $\hat{\psi}_t$ for the penalized empirical square loss:
\begin{align}
    \loss_{n,t}(\hat{\psi}_t; \hat{\underline{\psi}}_{t+1}, \hat{h}) - \loss_{n,t}(\psi_t^*; \hat{\underline{\psi}}_{t+1}, \hat{h}) \leq \kappa_t \left(\|\psi_t^*\|_1 - \|\hat{\psi}_t\|_1\right)
\end{align}
Let $\nu_t = \hat{\psi}_t - \psi_t^*$ and $T$ denote the support of $\psi_t^*$ (i.e. the set of non-zero coordinates). Denote with $\nu^{(T)}$ the sub-vector on the set $T$ and $T^{c}$ the complement of $T$. By additivity of the $\ell_1$ norm and the triangle inequality:
\begin{align}
    \|\hat{\psi}_t\|_1 =~& \|\psi_t^* + \nu_t^{(T)}\|_1 + \|\nu_t^{(T^c)}\|_1 \geq \|\psi_t^*\| - \|\nu_t^{(T)}\|_1 +\|\nu_t^{(T^c)}\|_1\\
    \|\psi_t^*\|_1 - \|\hat{\psi}_t\|_1 \leq~& \|\nu_t^{(T)}\|_1 - \|\nu_t^{(T^c)}\|_1
\end{align}
Moreover, by convexity of the empirical square loss, we have:
\begin{align}
    \loss_{n,t}(\hat{\psi}_t; \hat{\underline{\psi}}_{t+1}, \hat{h}) - \loss_{n,t}(\psi_t^*; \hat{\underline{\psi}}_{t+1}, \hat{h}) \geq~& D_{\psi_t}\loss_{n,t}(\psi_t^*; \hat{\underline{\psi}}_{t+1}, \hat{h})[\nu_t]\\
    =~& D_{\psi_t}\left(\loss_{n,t}(\psi_t^*; \hat{\underline{\psi}}_{t+1}, \hat{h}) - \bar{\loss}_{D,t}(\psi_t^*; \hat{\underline{\psi}}_{t+1}, \hat{h})\right)[\nu_t]\\
    ~&~ + D_{\psi_t}\loss_{D,t}(\psi_t^*; \hat{\underline{\psi}}_{t+1}, \hat{h})[\nu_t]\\
    \geq~& \underbrace{D_{\psi_t}\left(\loss_{n,t}(\psi_t^*; \hat{\underline{\psi}}_{t+1}, \hat{h}) - \bar{\loss}_{D,t}(\psi_t^*; \hat{\underline{\psi}}_{t+1}, \hat{h})\right)[\nu_t]}_{F}\\
    ~&~ + \underbrace{D_{\psi_t}\left(\bar{\loss}_{D,t}(\psi_t^*; \hat{\underline{\psi}}_{t+1}, \hat{h}) - \loss_{D,t}(\psi_t^*; \underline{\psi}_{t+1}^*, h^*)\right)[\nu_t]}_{G}
\end{align}
where in the last step we used the optimality of $\psi_t^*$ for the oracle population loss. 

Observe that the second quantity $G$ is equal to the terms $A+B$ from the proof of Lemma~\ref{lem:general}. Thus applying the same arguments as in that proof, with $u=1$ and $v=\infty$ and with $X_0$ the empty set, we have that:
\begin{align}
    |G| \leq~& \|\nu_t\|_1\, \left(\sum_{j=t+1}^m c_{t,j} \|\nu_j\|_{1}
   +  \| \Delta(\hat{p}_{t}, \hat{q}_t) \|_{\infty} + 4M\sum_{j=t}^m \left\| \Delta(\hat{p}_t, \hat{p}_{j,t})\right\|_{\infty}\right)
\end{align}
where we used the short-hand norm notation $\|A\|_{\infty}=\max_{i,j} |A_{i,j}| = \|A\|_{\infty,1}$. Note that for any two $r$-dimensional vector valued nuisance functions and for $X_0$ the empty set:
\begin{align}
    \|\Delta(\hat{f},\hat{g})\|_{\infty} =~& \left\|\E[(\hat{f}(Z_f) - f^*(Z_f))\, (\hat{g}(Z_g) - g^*(Z_g))']\right\|_{\infty}\\
    \leq~& \max_{i, i' \in [r]} \left|\E[(\hat{f}_i(Z_f) - f_i^*(Z_f))\, (\hat{g}_{i'}(Z_g) - g_{i'}^*(Z_g))']\right|\leq \max_{i, i' \in [r]} \|\hat{f}_i - f_i^*\|_2\, \|\hat{g}_{i'}- g_{i'}^*\|_2
\end{align}
Thus we can bound $G$ as:
\begin{align}
   |G| \leq~& \|\nu_t\|_1\, \left(\sum_{j=t+1}^m c_{t,j} \|\nu_j\|_{1}
   +  \underbrace{\max_{j\geq t, i,i'\in [r]} \left\{\|\hat{p}_{t,i} - p_{t,i}^*\|_2\, \|\hat{q}_{t,i'} - q_{t,i'}^*\|_2, 4M\,m \|\hat{p}_{t,i} - p_{t,i}^*\|_2\, \|\hat{p}_{j,t,i'} - p_{j,t,i'}^*\|_2\right\}}_{\epsilon_{n,t}}\right)
\end{align}
where $c_{t,j} := \|\E[C_{t,j}]\|_{\infty}$ and $M := \max_{t\in [m], \psi_t\in \Psi_t}\|\psi_t\|_1$. Moreover, note that we can write:
\begin{align}
    |F| \leq~& |D_{\psi_t}\left(\loss_{n,t}(\psi_t^*; \underline{\psi}_{t+1}^*, \hat{h}) - \bar{\loss}_{D,t}(\psi_t^*; \underline{\psi}_{t+1}^*, \hat{h})\right)[\nu_t]|\\
    ~&~ + |D_{\psi_t}\left(\loss_{n,t}(\psi_t^*; \hat{\underline{\psi}}_{t+1}, \hat{h}) - \loss_{n,t}(\psi_t^*; \underline{\psi}_{t+1}^*, \hat{h}) - (\bar{\loss}_{D,t}(\psi_t^*; \hat{\underline{\psi}}_{t+1}, \hat{h}) - \bar{\loss}_{D,t}(\psi_t^*; \underline{\psi}_{t+1}^*, \hat{h}))\right)[\nu_t]|\\
    \leq~& \|\grad_{\psi_t}\left(\loss_{n,t}(\psi_t^*; \underline{\psi}_{t+1}^*, \hat{h}) - \bar{\loss}_{D,t}(\psi_t^*; \underline{\psi}_{t+1}^*, \hat{h})\right)\|_{\infty} \|\nu_t\|_1\\
    ~&~ + \|\grad_{\psi_t}\left(\loss_{n,t}(\psi_t^*; \hat{\underline{\psi}}_{t+1}, \hat{h}) - \loss_{n,t}(\psi_t^*; \underline{\psi}_{t+1}^*, \hat{h}) - (\bar{\loss}_{D,t}(\psi_t^*; \hat{\underline{\psi}}_{t+1}, \hat{h}) - \bar{\loss}_{D,t}(\psi_t^*; \underline{\psi}_{t+1}^*, \hat{h}))\right)\|_{\infty} \|\nu_t\|_1
\end{align}

The first term can be written as the maximum among $2\, r$ centered empirical processes of $n$ bounded i.i.d. random variables. Thus we have that for some universal constants $c_0,c_1$, w.p. $1-\zeta$:
\begin{align}
    \|\grad_{\psi_t}\left(\loss_{n,t}(\psi_t^*; \underline{\psi}_{t+1}^*, \hat{h}) - \bar{\loss}_{D,t}(\psi_t^*; \underline{\psi}_{t+1}^*, \hat{h})\right)\|_{\infty} \leq c_0 \sqrt{\frac{\log(c_1\,r/\zeta)}{n}}
\end{align}
For the second term note that for each half-split, we have that:
\begin{align}
    \grad_{\psi_t}\left(\loss_{n,t}(\psi_t^*; \hat{\underline{\psi}}_{t+1}, \hat{h}) - \loss_{n,t}(\psi_t^*; \underline{\psi}_{t+1}^*, \hat{h})\right) := \sum_{j=t+1}^{m} \E_n\left[\tilde{T}_{t,t}\, \tilde{T}_{j,t}'\right] \nu_j\\
    \grad_{\psi_t}\left(\bar{\loss}_{D,t}(\psi_t^*; \hat{\underline{\psi}}_{t+1}, \hat{h}) - \bar{\loss}_{D,t}(\psi_t^*; \underline{\psi}_{t+1}^*, \hat{h})\right) := \sum_{j=t+1}^{m} \E\left[\tilde{T}_{t,t}\, \tilde{T}_{j,t}'\right] \nu_j
\end{align}
Thus we can upper bound that term by:
\begin{align}
    \sum_{j=t+1}^m \left\| \E_n\left[\tilde{T}_{i,t}\, \tilde{T}_{j,t}'\right] -  \E\left[\tilde{T}_{i,t}\, \tilde{T}_{j,t}'\right] \right\|_{\infty} \|\nu_j\|_{1}
\end{align}
Thus we need to control the maximum of $2\, m\, r^2$ centered empirical processes. Hence, we have that this term is upper bounded by $c_2\sqrt{\frac{\log(c_3\, m\, r/\zeta)}{n}}$, for some universal constants $c_2,c_3$. Thus for some universal constants $c_0, c_1$, we have that, w.p. $1-\zeta$:
\begin{align}
    |F| \leq \|\nu_t\|_1 \underbrace{c_0 \sqrt{\frac{\log(c_1\, m\, r/\zeta)}{n}}}_{\delta_{n,\zeta}} \left(1 + \sum_{j=t+1}^m \|\nu_j\|_1\right)
\end{align}
Thus we have that:
\begin{align}
    F + G \geq - \|\nu_t\|_1 \left( \delta_{n,\zeta} + \sum_{j=t+1}^m (c_{t,j}+\delta_{n,\zeta})\|\nu_j\|_1 + \epsilon_{n,t}\right)
\end{align}
Assuming that $\frac{\kappa_t}{2} \geq  \delta_{n,\zeta} + \sum_{j=t+1}^m (c_{t,j}+\delta_{n,\zeta})\|\nu_j\|_1 + \epsilon_{n,t}$, we thus have that:
\begin{align}
    \kappa_t \left(\|\nu_t^{(T)}\|_1 - \|\nu_t^{(T^c)}\|_1\right) \geq F + G \geq -\frac{\kappa_t}{2} \|\nu_t\|_1 \implies \|\nu_t^{(T^c)}\|_1\leq 3\|\nu_t^{(T)}\|_1
\end{align}
Hence $\nu_t\in \mcC(T;3)$.
\end{proof}

\begin{proof}[Proof of Theorem~\ref{thm:sparse-linear}]
First note that for any $\underline{\psi}_t$ and for any $\nu_t$:
\begin{align}
    \nu_t'\nabla_{\psi_t\psi_t} \loss_{n,t}(\psi_t; \underline{\psi}_{t+1}, \hat{h})\nu_t =~& \nu_t'\E_n[\tilde{T}_{t,t}\, \tilde{T}_{t,t}']\nu_t\\
    =~& \nu_t'\E[\tilde{T}_{t,t}\, \tilde{T}_{t,t}']\nu_t + \nu_t'\left(\E[\tilde{T}_{t,t}\, \tilde{T}_{t,t}'] - \E_n[\tilde{T}_{t,t}\, \tilde{T}_{t,t}']\right)\nu_t\\
    \geq~& \nu_t'\E[\Cov(Q_{t,t},Q_{t,t}\mid \bar{X}_t, \bar{T}_{t-1})]\nu_t - \left\|\E[\tilde{T}_{t,t}\, \tilde{T}_{t,t}'] - \E_n[\tilde{T}_{t,t}\, \tilde{T}_{t,t}']\right\|_{\infty} \|\nu_t\|_1^2 \tag{by Lemma~\ref{lem:cov-property}}\\
    \geq~& \lambda \|\nu_t\|_2^2 - \left\|\E[\tilde{T}_{t,t}\, \tilde{T}_{t,t}'] - \E_n[\tilde{T}_{t,t}\, \tilde{T}_{t,t}']\right\|_{\infty} \|\nu_t\|_1^2
\end{align}
With probability $1-\zeta$, for all $t$, the latter is at least:
\begin{align}
    \lambda \|\nu_t\|_2^2 - \delta_{n,\zeta}\|\nu_t\|_1^2
\end{align}
Since $\kappa_t$ is chosen such that $\nu_t\in \mcC(T;3)$, we have that:
$$\|\nu_t\|_1 \leq 4\|\nu_t^{(T)}\|_1 \leq 4\sqrt{s}\|\nu_t^{(T)}\|_2 \leq 4\sqrt{s}\|\nu_t\|_2.$$
Thus we can lower bound the latter by:
\begin{align}
    \underbrace{(\lambda - 16 s \delta_{n,\zeta})}_{\lambda_n} \|\nu_t\|_2^2
\end{align}
Thus the loss function $\loss_{n,t}(\psi_t; \underline{\psi}_{t+1}, \hat{h})$ is $\lambda_n$-strongly convex. Combining with the optimality of $\hat{\psi}_t$ and the lower bound on $\kappa_t$, we have:
\begin{align}
    \kappa_t \left(\|\nu_t^{(T)}\|_1 - \|\nu_t^{(T^c)}\|_1\right)\geq~& \loss_{n,t}(\hat{\psi}_t; \hat{\underline{\psi}}_{t+1}, \hat{h}) - \loss_{n,t}(\psi_t^*; \hat{\underline{\psi}}_{t+1}, \hat{h})\\
    \geq~& D_{\psi_t}\loss_{n,t}(\psi_t^*; \hat{\underline{\psi}}_{t+1}, \hat{h})[\nu_t] + \frac{1}{2} D_{\psi_t,\psi_t}\loss_{n,t}(\psi_t^*; \hat{\underline{\psi}}_{t+1}, \hat{h})[\nu_t,\nu_t]\\
    \geq~& - \frac{\kappa_t}{2} \|\nu\|_1 + \frac{\lambda_n}{2} \|\nu_t\|_2^2
\end{align}
Re-arranging yields:
\begin{align}
    \frac{\lambda_n}{2} \|\nu_t\|_2^2 \leq \frac{3\kappa_t}{2} \|\nu_t^{(T)}\|_1 \leq \frac{3\kappa_t}{2}  \sqrt{s} \|\nu_t\|_2 
\end{align}
Thus we conclude that w.p. $1-\zeta$: $\|\nu_t\|_2\leq \kappa_t\frac{3\sqrt{s}}{\lambda_n}$ and $\|\nu_t\|_1 \leq \kappa_t \frac{12 s}{\lambda_n}$.

We also note that we need to chose $\kappa_t$ such that:
\begin{align}
    \kappa_t \geq  2(\delta_{n,\zeta} + \epsilon_{n,t}) + \sum_{j=t+1}^m 2 (c_{t,j}+\delta_{n,\zeta})\|\nu_j\|_1
\end{align}
If we let $\mu_n = \max_{t\in [m]} 2(\delta_{n,\zeta} + \epsilon_{n,t})$ and $c_n = \max_{t\in [m]} \sum_{j=t+1}^m 2(c_{t,j} + \delta_{n,\zeta})$, then it suffices to choose:
\begin{align}
    \kappa_t \geq \mu_n + c_n \max_{j > t} \|\nu_j\|_1
\end{align}
Finally, let $\beta_n = \frac{12\,s}{\lambda_n}$. Then under the above condition we have that $\|\nu_t\|_1 \leq \beta_n \kappa_t$.

We will show that for some sequence of $\kappa_1,\ldots,\kappa_m$, we have: $\|\nu_j\|_1 \leq \beta_n \mu_n \sum_{j'>j} \beta_n^{m-j} c_n^{m-j}$. This holds for $j=m$ if we choose $\kappa_m := \mu_n$. If we assume this property holds for any $j>t$, then we have that for time $t$, if we choose:
\begin{align}
    \kappa_t := \mu_n + c_n \beta_n \mu_n \sum_{j>t+1} \beta_n^{m-j} c_n^{m-j} = \mu_n + \mu_n \sum_{j>t+1} \beta_n^{m-j+1} c_n^{m-j+1} = \mu_n \sum_{j>t} \beta_n^{m-j} c_n^{m-j}
\end{align}
then it satisfies the desired property that $\kappa_t \geq \mu_n + c_n \max_{j > t} \|\nu_j\|_1$ and therefore also:
\begin{align}
    \|\nu_t\|_1 \leq \beta_n \mu_n \sum_{j > t} \beta_n^{m-j} c_n^{m-j}
\end{align}
completing the inductive step. Thus we conclude that for all $t\in [m]$:
\begin{align}
    \|\nu_t\|_2 \leq \|\nu_t\|_1 \leq \beta_n \mu_n \sum_{j > t} \beta_n^{m-j} c_n^{m-j} = \beta_n \mu_n \frac{(\beta_n c_n)^{m-t + 1} - 1}{\beta_n c_n - 1}
\end{align}
\end{proof}

\end{document}